\DeclareFontFamily{U}{mathb}{\hyphenchar\font45}
\DeclareFontShape{U}{mathb}{m}{n}{
      <5> <6> <7> <8> <9> <10> gen * mathb
      <10.95> mathb10 <12> <14.4> <17.28> <20.74> <24.88> mathb12
}{}
\DeclareSymbolFont{mathb}{U}{mathb}{m}{n}
\DeclareMathSymbol{\llcurly}{3}{mathb}{"CE}
\DeclareMathSymbol{\ggcurly}{3}{mathb}{"CF}
\theoremstyle{plain}
\newtheorem{theorem}{Theorem}
\newtheorem{lemma}{Lemma}
\newtheorem{definition}{Definition}
\newtheorem{observation}{Observation}
\newtheorem{corollary}{Corollary}
\newtheorem{claim}{Claim}
\title{Maximin Share Guarantees for Few Agents with Subadditive Valuations}
\author{George Christodoulou\thanks{Aristotle University of
    Thessaloniki, and Archimedes/Athena RC, Greece. Email: \texttt{\{gichristo,vgchrist,smastra\}@csd.auth.gr} }
\and{Vasilis Christoforidis\footnotemark[1]}
\and {Symeon Mastrakoulis\footnotemark[1]}
\and Alkmini Sgouritsa\thanks{Athens University of Economics and Business, and Archimedes/Athena RC, Greece. Email: \texttt{alkmini@aueb.gr}}}
\date{}
\begin{document}

\maketitle

\begin{abstract}
    We study the problem of fairly allocating a set of indivisible items among a set of agents. We consider the notion of (approximate) maximin share (MMS) and we provide an improved lower bound of $1/2$ (which is tight) for the case of subadditive valuations when the number of agents is at most four. We also provide a tight lower bound for the case of multiple agents, when they are equipped with one of two possible types of valuations. Moreover, we propose a new model that extends previously studied models in the area of fair division, which will hopefully give rise to further research. We demonstrate the usefulness of this model by employing it as a technical tool to derive our main result, and we provide a thorough analysis for this model for the case of three agents. Finally, we provide an improved impossibility result for the case of three submodular agents. 
    
\end{abstract}
\usetikzlibrary {patterns,patterns.meta}
\section{Introduction}
Fair allocation of indivisible goods is a fundamental problem at the
intersection of economics and computer science. The goal is to allocate
a set of $m$ goods among a set of $n$ agents with heterogeneous
preferences. In contrast to the case of divisible goods (known as
cake-cutting) fundamental concepts of fairness—such as envy-freeness and proportionality— do not carry over when pivoting from the
continuous to the discrete domain. 

In this work, we investigate (approximate) Maximin Share fairness (MMS), a
well-studied concept introduced by
\cite{BudishMMS}. This notion can be seen as the discrete analogue of
proportionality in the context of indivisible goods.  The objective is to
allocate a bundle to each agent, ensuring that her value is at
least her {\em maximin share} (or a significant fraction of
it). Intuitively, an agent's maximin share is the maximum value she
could guarantee by proposing a partition of the items into
$n$ parts and then receiving the least desirable bundle from that
division. 

Although for two agents with additive valuations, MMS allocations always exist, \cite{KurokawaProcacciaWang18} showed that this is not always the case for three or more agents.
Studying approximate MMS fairness has led to a surge of research in the past
decade, for additive (e.g., \cite{AmanatidisMNS17,GhodsiHSSY21,GargTaki21,AkramiGarg24}) but also for more general valuation classes (e.g., \cite{BarmanKrishnamurthy20,GhodsiHSSY22,SeddighinSeddighin24}). 
The case of subadditive valuations is wide-open with a huge (polylogarithmic) gap between the best known upper and lower bounds. Closing this gap is considered a major open problem in this area.

 \subsection{Our contribution}

We study the existence of approximate MMS allocations, a predominant notion of fairness in settings with indivisible goods, under submodular
and subadditive valuations. We focus on settings with few agents. We are able to show the following results.

\begin{itemize}
    \item In our main technical result (Theorem~\ref{thm:mainTheorem}), we show the existence of $1/2$-MMS allocations for at most four agents with subadditive valuations. We note that this result improves the best known bounds for many other valuation classes, including OXS, Gross Substitutes, submodular and XOS. We emphasize that the guarantee of our theorem matches the best known impossibility result due to \cite{GhodsiHSSY22}, thus settling the case of four agents for both subbaditive and XOS valuations (see also Table~\ref{table:Results}). 

    \item We show the existence of $1/2$-MMS allocations for multiple agents when they have one of two admissible valuation functions (Theorem~\ref{thm:two-types}).

    \item On the way to prove Theorem~\ref{thm:mainTheorem} we develop a new model that is useful for inductive arguments. This model incorporates nicely previously studied MMS variants, and we believe that it is of independent interest. In Section~\ref{sec:Reductions}, we provide a thorough study of approximate guarantees for three agents and we are able to provide two complete characterizations: one for the case of $1/2$-MMS$(\mathbf{d})$ (Theorem~\ref{thm:three-general}) and one for the case of $(1,1/2,1/2)$-MMS$(\bf d)$ (Theorem~\ref{thm:three-general-approximate}). 

    \item We show an improved impossibility result for three agents with submodular valuation functions: we present an instance in which an $\alpha$-MMS allocation does not exist for $\alpha > 2/3$ (Theorem~\ref{thm:SubmodUB}). This improves the previous best known result of 3/4 due to \cite{GhodsiHSSY22}.

\end{itemize}

\paragraph{$\boldsymbol{\alpha}$-MMS(${\bf d}$).} In order to show Theorem~\ref{thm:mainTheorem}, we find it useful to study a notion that we call $\boldsymbol{\alpha}$-MMS(${\bf d}$), where $\boldsymbol{\alpha}=(\alpha_1,\ldots, \alpha_n)$ is a vector of threshold values $a_i\in [0,1]$, and  ${\bf d}=(d_1,\ldots, d_n)$ is a vector of positive integers. Roughly an allocation is $\boldsymbol{\alpha}$-MMS(${\bf d}$) if it guarantees to each agent $i$ an $\alpha_i$ approximation of the minimum value she could guarantee if she could partition all the items in $d_i$ parts  (see Definition~\ref{def:alpha-mms} for a formal definition).  This notion is very useful for showing approximate MMS guarantees using inductive arguments, as we demonstrate in the technical part. It effectively captures several previous studied notions, and we believe it is of independent interest. 
For example, $\mathbf{1}$-MMS($\mathbf{d})$ corresponds to ordinal approximations of the maximin share, i.e. the maximum value an agent $i$ can guarantee if she partitions the set of items into $d$ bundles \cite{BudishMMS}. 
Another example is the notion of $(\alpha, \beta)$-MMS introduced by \cite{HosseiniS21}, which guarantees to an $\alpha$ fraction of the agents a $\beta$ approximation of their MMS. This can be captured by our notation $\boldsymbol{\alpha'}$-MMS(${\bf d}$)  where $\boldsymbol{\alpha'}=(\beta, \ldots, \beta, 0,\ldots, 0)$ is a vector where $\alpha n$ of its elements are equal to $\beta$ and the rest are equal to $0$, and for $\mathbf{d} = (n, \dots, n)$. They further showed that the existence of $(\alpha, 1)$-MMS implies the existence of 1-out-of-$k$-MMS for $k \ge \lceil \frac{n}{\alpha} \rceil$. 
Perhaps more closely related to our work, is the model of \cite{AkramiGST24} for approximate MMS guarantees with agent priorities, which captures both ordinal and multiplicative approximations as special cases, i.e. the concept of $\boldsymbol{\alpha}$-MMS({\bf d}), (for $\mathbf{d} = (n, \dots, n)$).

 \subsection{Further Related Work}

We refer the reader to the recent survey of \cite{AmanatidisABFLMVW23Survey} covering a wide variety of discrete fair division settings along with the main fairness notions and their properties. We focus on the concept of maximin share (MMS).

\paragraph{Maximin share and $\alpha$-MMS.} MMS has seen significant progress during the past years. Nevertheless, the case beyond additive utilities remains underexplored; we summarize the state-of-the-art bounds in Table~\ref{table:Results} for the most well-known superclasses of additive valuations. Additionally,  
 $1/2$-MMS allocations are known to exist for SPLC valuations \cite{ChekuriKKM24}. \cite{Hummel:HSS24} proved the same guarantee for the case of hereditary set systems, improving upon the $11/30$ guarantee of \cite{LiVetta21}. In sharp contrast, a surge of works has led to strong approximations for additive valuations \cite{KurokawaProcacciaWang18,AmanatidisMNS17,GhodsiHSSY21,GargTaki21,AkramiGST23}. Notably, \cite{AkramiGarg24} recently showed an approximation factor of $3/4+3/3836$.

\begin{table}[]
    \centering

    \begin{tabular}{c c c}
        \toprule
        Submodular & Existence & Non-existence
        \\
        \midrule
        $n=2$ & $2/3^{\dagger}$ & $2/3^{\ddagger}$ \\
        $n=3$ & $10/27^{\dagger\dagger}$ & 
        \begin{tabular}{c}
             $3/4^{\mathparagraph}$, \\
             $\mathbf{\color{red}{2/3}}$\textcolor{red}{[Thm \ref{thm:SubmodUB}]}
        \end{tabular}
        \\
        
        \midrule
        XOS \\
        \midrule
        $n \le 4$ &
            \begin{tabular}{c}
              $3/13^{\mathsection}$, \\
             $\mathbf{\color{red}{1/2}}$ \textcolor{red}{[Thm \ref{thm:mainTheorem}]}
        \end{tabular}

        & $1/2^{\mathparagraph}$ \\
        
        \midrule
        Subadditive \\
        \midrule
        $n \le 4$ &
                \begin{tabular}{c}
             $\Omega(\frac{1}{\log n \log \log n})^{**}$, \\
$\mathbf{\color{red}{1/2}}$ \textcolor{red}{[Thm \ref{thm:mainTheorem}]}
        \end{tabular}
            & $1/2^{\mathparagraph}$ \\ 
        \bottomrule
    \end{tabular}
    \caption{Best known MMS approximations. Our contributions appear in \textbf{bold red} color. $^{\dagger}$ \cite{ChristodoulouChristoforidis}; $^{\ddagger}$\cite{KKM23}; $^{\dagger\dagger}$\cite{UziahuFeigeMMS}; $^{\mathparagraph}$\cite{GhodsiHSSY22}; $^{\mathsection}$\cite{AkramiMehlhornSeddighinShahkarami23}; $^{**}$\cite{SeddighinSeddighin24}.}
    
    \label{table:Results}
\end{table}

\paragraph{1-out-of-$d$-MMS.} \cite{BudishMMS} introduced the notion and showed the existence of 1-out-of-$(n+1)$-MMS, albeit with excess goods. \cite{Aigner-HorevSH22} obtained a bound of $d=2n-2$ on the existence of 1-out-of-$d$ allocations under additive valuations, which was subsequently improved to 1-out-of-$\lceil 3n/2 \rceil$ by \cite{HosseiniS21}, 1-out-of-$\lfloor 3n/2 \rfloor$ by \cite{HosseiniSSH22}, and recently to 1-out-of-$4\lceil n/3 \rceil$ by \cite{AkramiGST24}.
\cite{BabaioffNT21} introduced the $\ell$-out-of-$d$ maximin share, which corresponds to the maximum value an agent can guarantee to herself if she partitions the items into $d$ bundles and  then being allocated the worst $\ell$ of them.

\paragraph{Few Agents.} Settings with few agents have been previously studied in the fair division literature, particularly for MMS approximations involving three or four agents with additive valuations. The algorithm of \cite{KurokawaProcacciaWang18} provides a $3/4$-MMS guarantee for three or four agents. \cite{AmanatidisMNS17} subsequently established a bound of $\alpha \ge 7/8$ for the case of three agents, which was later improved to $8/9$ by \cite{GourvesMonnot19} and finally to $11/12$ by \cite{FeigeNorkinMMS}. \cite{GhodsiHSSY21} improved upon the previously known factor by showing a $4/5$ lower bound for four agents.

 \section{Model}
\label{sec:Model}

We consider a setting of allocating a set $M = \{1, \dots, m\}$ of indivisible items among a set $N = \{1, \dots, n\}$ of agents. 
Each agent $i \in N$ is equipped with a valuation function $v_i: 2^M \to \mathbb{R}_{\ge 0}$, and we denote by  $\boldsymbol{v} = (v_1, \dots, v_n)$ the valuation profile of all agents. 
We consider valuations that are monotone (i.e. $v(S) \le v(T)$, for all $S \subseteq T$) and normalized (v($\emptyset) = 0$). For brevity, we sometimes use $v_i(g)$ instead of $v_i(\{g\})$. 
In this work, we consider valuation classes that belong to the complement-free hierarchy, such as additive, submodular, XOS, and subadditive valuation functions which we define below.

\noindent{\bf Additive valuations.} A valuation function $v$ is additive if $v(S) = \sum_{g \in S} v(g)$ for any $S \subseteq M$.    

\noindent{\bf Submodular valuations.} A valuation function $v$ is submodular if $v(S) + v(T) \ge v(S \cup T) + v(S \cap T)$ for any $S, T \subseteq M$.

\noindent{\bf Fractionally subadditive (XOS) valuations.} A valuation function $v$ is XOS if there exists a set of additive functions $a_1, \dots, a_k$ such that $v(S) = \max_{l \in [k]}a_l(S)$.    

\noindent{\bf Subadditive valuations.} A valuation function $v$ is subadditive if $v(S) + v(T) \ge v(S \cup T)$ for any $S, T \subseteq M$.

It is well-known that $Additive \subsetneq Submodular \subsetneq XOS \subsetneq Subadditive$.

We are interested in allocating $M$ into $n$ mutually disjoint sets $A_1,\ldots, A_n$, where $A_i$ is the bundle of items assigned to agent $i$. We denote the respective allocation by $A=(A_1, \ldots, A_n)$. 

{\bf Minimum values:} We often present to an agent a partition $P=(S_1,\ldots, S_d)$ of $M$ into $d$ parts and we ask them how they value each part. We denote by $\Pi_d(M)$ the set of all possible such partitions. We define the minimum value of agent $i$ with respect to some fixed partition $P$  as
$$\mu_i^{P}(M) = \min_{1\leq j \leq d} v_i(S_j)\,.$$
We denote by $\mu_i^{d}(M)$ the {\em maximum} minimum guarantee (MMS($d$)) that agent $i$ can achieve by the best partition with $d$ parts; i.e.,
$$\mu_i^{d}(M) = \max_{P \in \Pi_{d}(M)}\mu_i^{P}(M).$$
 We refer to the $d$ bundles $S_1,\ldots, S_d$ that comprise the best partition $P$, as the MMS$(d)$ bundles of agent $i$ or just the MMS bundles of $i$ when $d$ is clear from the context.  For $d=n$, $\mu_i^{n}(M)$ is the {\em MMS value} of agent $i$, and we drop the superscript and simply denote it by $\mu_i(M)$.  Also, when $M$ and $d$ are clear from the context, we use the simpler notation $\mu_i^{d}$ or $\mu_i$. For simplicity we assume (by scaling) that all valuations are normalized such that $\mu_i^{d}=1$. 

 Given a vector of $n$ positive integers ${\bf d}=(d_1,\ldots, d_n)$ we define a vector of partitions ${\bf P:=P(d)}=(P_1,\ldots, P_n)$ with respect to ${\bf d}$,  where the
 partition $P_i\in \Pi_{d_i}(M)$ corresponds to the partition of agent
 $i$ into $d_i$ parts. We are interested in providing
 approximation guarantees $\alpha_i$ for the minimum value
 $\mu_i^{P_i}(M)$ of each agent $i$ with respect to ${\bf P}$. This is summarised in the following definition.
\begin{definition} [$\boldsymbol{\alpha}$-MMS$(\mathbf{P})$, $\boldsymbol{\alpha}$-MMS$(\mathbf{d})$]\label{def:alpha-mms}
Fix $\boldsymbol{\alpha}=(\alpha_1,\ldots,\alpha_n)$ with $\alpha_i \in [0, 1]$, and ${\bf d}=(d_1,\ldots, d_n)$ with $d_i\in \mathbb{N}_+$ for all $i\in N$. Fix also $\mathbf{P}=(P_1,\ldots, P_n)$, a fixed vector of partitions with $P_i\in \Pi_{d_i}(M)$ for each $i$. An allocation $A$ is $\boldsymbol{\alpha}$-MMS$(\mathbf{P})$ if $v_i(A_i) \ge \alpha_i \cdot \mu_i^{P_i}(M)$ for all $i \in N$.    
An allocation $A$ is  $\boldsymbol{\alpha}$-MMS$(\mathbf{d})$ if it is $\boldsymbol{\alpha}$-MMS$(\mathbf{P(d)})$ for all partition vectors ${\bf P(d)}$ w.r.t to $\bf{d}$.\footnote{Equivalently, an allocation $A$ is  $\boldsymbol{\alpha}$-MMS$(\mathbf{d})$ if $v_i(A_i) \ge \alpha_i \cdot \mu_i^{d_i}(M)$ for all $i \in N$.}
\end{definition}

Note that for $\boldsymbol{\alpha}=(\alpha, \ldots,\alpha)$ and $\mathbf{d}=(n, \dots, n)$, the definition of $\boldsymbol{\alpha}$-MMS$(\mathbf{d})$ coincides with the standard definition of $\alpha$-MMS and we drop the dependency on $\mathbf{d}$. 
Additionally, for simplicity when $\boldsymbol{\alpha}$ is uniform, i.e., $\alpha_i=\alpha$ for each agent $i$, we write $\alpha$-MMS$(\mathbf{d})$ instead of $\boldsymbol{\alpha}$-MMS$(\mathbf{d})$. We use the notation ${\bf d}_{-i}$ and $\boldsymbol{\alpha}_{-i}$ to refer to those vectors where their $i$-th element is omitted.

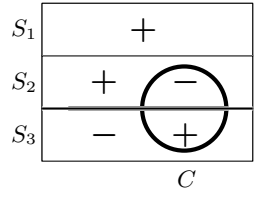
\begin{figure}
\begin{center}
\small{
    \begin{tikzpicture}[scale=0.7]
    \draw[step=1cm,] (0,-1) grid (4,2);
        \draw[left color = white, right color = white] (0,1) rectangle (4,2);
        \draw[left color = white, right color = white] (0,0) rectangle (4,1);
    \draw[left color = white, right color = white] (0,-1) rectangle (4,0);
    \draw[black,ultra thick](0.5,0) -- (3.5,0) arc(0:360:0.8) --cycle;3,0);
    \draw[white, ultra thick] (0,0) -- (4,0);
    \draw[black, thick] (0,0) -- (4,0);

\node[anchor=west] at (-0.75,1.5) {$S_1$};
\node[anchor=west] at (-0.75,0.5) {$S_2$};
\node[anchor=west] at (-0.75,-0.5) {$S_3$};
\node[anchor=south] at (2.75,-1.65) {$C$};
\node[anchor=west] at (1.5,1.5) {\large{$\boldsymbol{+}$}};

\node[anchor=west] at (0.75,0.5) {\large{$\boldsymbol{+}$}};
\node[anchor=west] at (2.3,0.5) {\large{$\boldsymbol{-}$}};

\node[anchor=west] at (0.75,-0.5) {\large{$\boldsymbol{-}$}};
\node[anchor=west] at (2.3,-0.5) {\large{$\boldsymbol{+}$}};

\end{tikzpicture}}
\end{center}

    \caption{Let  $P=(S_1,S_2,S_3)$ be a partition for agent $S$. The maximum desired half over set $C$ is depicted by thick circle. For each bundle $S_i$, plus bundles (+) attain value at least  $v_S(S_i)/2$, while minus (-) attain value at most $v_S(S_i)/2$. Hence  $\mathcal{X}_{S}(C,P)=\{S_3\cap C\}$, $\mathcal{X}_{S}({M\setminus C},P)=\{S_1\setminus C,S_2 \setminus C \}$ and $\mathcal{X}^*_{S}(C,P)=\mathcal{X}_{S}({M\setminus C},P)$.}
\label{fig:maximumCut}
\end{figure}
Our allocation protocols proceed by repeatedly asking agents to
evaluate various subsets of items using specific types of valuation
queries. In the most common type of query, which we call a {\em cut}, we
present a subset $C\subseteq M$ to an agent $S$, ask her how she values
the intersection of $C$ with each of her MMS bundles $S_j$. Due to
subadditivity, at least one of $C\cap S_j$ or $C\setminus S_j$ will provide her
with satisfactory value, that is higher than $v_S(S_j)/2$. We call the side (intersection or complement) that has the highest number of satisfactory values a {\em Maximum
  Desired Half} (see definition below and Figure~\ref{fig:maximumCut} for an illustration).

\begin{definition} [Maximum Desired Half] Let $C \subseteq M$ and $P=\left(S_1,\ldots,S_r\right)\in \Pi_r(M)$ be a partition into $r$ bundles for an agent $S$. The set $\mathcal{X}_S(C,P)$ collects intersections of $C$ with each $S_i$ that have sufficiently high value (greater or equal to $1/2$) i.e.,
$$\mathcal{X}_S(C,P)=\left\{S_i \cap C : v_S\left(S_i\cap C\right)\ge \frac{v_S\left(S_i\right)}{2}, S_i \in P\right\}.$$
We define the Maximum Desired Half of agent $S$ over the set $C$ w.r.t partition $P$ as follows
$$\mathcal{X}_{S}^*(C,P) = \arg \max\left\{\lvert \mathcal{X}_S(C,P)\rvert,\lvert \mathcal{X}_S(M\setminus C,P) \rvert\right\}$$
We refer to the set $C$ as a cut $C$. When the partition $P$ is clear from the context (e.g., when the partition is the MMS bundles of $S$) we will drop the dependency on $P$.
\end{definition}

Intuitively, every bundle in the Maximum Desired Half of agent $S$
guarantees at least half the value of the minimum bundle in $P$ for
agent $S$ and moreover it is disjoint with either the cut $C$ or with
its complement w.r.t to each bundle $S_j$. The key property of the
Maximum Desired Half set is that it contains at least half of the
bundles.

\begin{observation}
\label{obs:Cut_noOfDesiredSets}
For every $C \subseteq M$ and $P=(S_1,\ldots,S_r)$, $\lvert \mathcal{X}^*_{S}(C,P)\rvert \ge \lceil r/2\rceil$.
\end{observation}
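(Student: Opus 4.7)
The plan is to reduce the observation to a single subadditivity inequality applied bundle-by-bundle, followed by a pigeonhole argument. The key point is that, for every individual bundle $S_i$ in the partition $P$, at least one of its two pieces $S_i \cap C$ and $S_i \setminus C$ must be ``satisfactory'' in the sense required by the definition of $\mathcal{X}_S$.

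Concretely, I would first fix an arbitrary $i \in \{1,\ldots,r\}$ and observe that, by subadditivity of $v_S$ applied to the disjoint sets $S_i \cap C$ and $S_i \setminus C$, we have
\[
v_S(S_i \cap C) + v_S(S_i \setminus C) \;\geq\; v_S(S_i).
\]
Consequently, at least one of the two summands is at least $v_S(S_i)/2$. In other words, either $S_i \cap C \in \mathcal{X}_S(C,P)$ or $S_i \setminus C \in \mathcal{X}_S(M \setminus C, P)$ (and possibly both, which only helps).

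Summing this bundle-by-bundle statement over all $i \in \{1,\ldots,r\}$ yields
\[
|\mathcal{X}_S(C,P)| + |\mathcal{X}_S(M \setminus C, P)| \;\geq\; r.
\]
Applying pigeonhole, the larger of the two quantities is at least $\lceil r/2 \rceil$. Since by definition $\mathcal{X}^*_S(C,P)$ is whichever of the two sets has larger cardinality, we obtain $|\mathcal{X}^*_S(C,P)| \geq \lceil r/2 \rceil$, as desired.

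There is no real obstacle here; the only subtlety worth flagging explicitly is that the argument uses only subadditivity (not XOS or submodularity), so the observation holds at the full generality of the paper's valuation class. Everything else is just bookkeeping.
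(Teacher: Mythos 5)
Your proposal is correct and follows essentially the same argument as the paper: apply subadditivity to $S_i \cap C$ and $S_i \setminus C$ for each bundle, conclude that $\lvert \mathcal{X}_S(C,P)\rvert + \lvert \mathcal{X}_S(M\setminus C,P)\rvert \ge r$, and take the larger side. The only cosmetic remark is that disjointness of the two pieces is not needed for subadditivity; what matters is that their union is $S_i$.
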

\begin{proof}
 Due to subadditivity, for each $S_i$ it holds that $v_S(S_i \cap C)+v_S(S_i \setminus C) \ge v_S(S_i)$, and therefore at least one of the two terms on the left hand side is at least $v_S(S_i)/2$, which in turn means that either $S_i\cap C \in \mathcal{X}_S(C,P)$, or $S_i \setminus C \in \mathcal{X}_S({M \setminus C},P)$ (or both). So, $\lvert {X}_S(C,P) \rvert +\lvert \mathcal{X}_S({M \setminus C},P) \rvert \ge r$ and hence, $\lvert \mathcal{X}^*_{S}(C,P)\rvert \ge r/2 \ge \lceil r/2\rceil$.   
\end{proof}

 \section{Subadditive Valuations}
\label{sec:SubaddVal}

In this section we present our main technical result which establishes the existence of $1/2$-MMS allocation for the case of at most four agents (Theorem~\ref{thm:mainTheorem}). In Sections~\ref{sec:2Subadd}-~\ref{sec:4Subadd} we establish the existence of $1/2$-MMS approximate allocations for two, three and four subadditive agents (Corollaries~\ref{cor:2agents},~\ref{cor:3agents}~and~\ref{cor:4agents}), respectively, that follow as simple corollaries from more restricted settings (Lemmas~\ref{Lemma:2agents}, \ref{Lemma:threehalfs}, and \ref{Lemma:4agents}, respectively). We note that all these results improve the state-of-the-art for MMS guarantees for a wide set of valuation classes that lie below subadditive in the complement-free hierarchy. Subsequently, in Section \ref{sec:MoreAgents} we show how our proof techniques can be extended to obtain positive results for settings with many agents, when the agents have one of two types of admissible valuation functions. We emphasize that all the results presented herein are tight for subadditive valuations, i.e., there exist constructions where not all agents can receive more than $1/2$ of their MMS values \cite{GhodsiHSSY22}. In the proofs which follow we use ``symmetric'' cuts, i.e. no matter if $\mathcal{X}^*_i(C)=\mathcal{X}_i(C)$ or $\mathcal{X}^*_i(C)=\mathcal{X}_i(M \setminus C)$ for the maximum desired half of agent $i$ over cut $C$, the proof will continue the same way.

\begin{theorem}
\label{thm:mainTheorem}
    An $1/2$-MMS allocation exists for at most four agents with subadditive valuations.
\end{theorem}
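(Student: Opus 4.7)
The plan is to prove the three lemmas referenced in the introduction — Lemma~\ref{Lemma:2agents}, Lemma~\ref{Lemma:threehalfs}, and Lemma~\ref{Lemma:4agents} — one per agent count $n\in\{2,3,4\}$, and then obtain Theorem~\ref{thm:mainTheorem} as the union of the three corollaries. The unifying tool in all three cases will be the cut primitive together with Observation~\ref{obs:Cut_noOfDesiredSets}, which guarantees that for any cut $C$ and any agent with a partition into $r$ parts, at least $\lceil r/2\rceil$ MMS-sub-bundles lie on one fixed side of $C$, each worth at least half of the corresponding MMS bundle.

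For $n=2$, I would run the classical cut-and-choose protocol. Agent~$1$ submits her MMS partition $(S_1,S_2)$ and agent~$2$ picks her preferred bundle. Agent~$1$ receives the other bundle, which is worth $\mu_1$ to her (full MMS). For agent~$2$, subadditivity and monotonicity yield $v_2(S_1)+v_2(S_2)\ge v_2(M)\ge \mu_2$, so her chosen bundle is worth at least $\mu_2/2$.

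For $n=3$, I would fix agent~$1$'s MMS partition $P_1=(S_1,S_2,S_3)$ with $v_1(S_j)\ge 1$, and present one of her bundles (say $S_1$) as a cut to agents $2$ and $3$. By Observation~\ref{obs:Cut_noOfDesiredSets} applied with each agent's own MMS partition, each of agents $2$ and $3$ keeps at least two favored MMS-sub-bundles on one fixed side of the cut. The allocation then gives some $S_j$ wholesale to agent~$1$ and splits the remaining two bundles between agents $2$ and $3$, case-analysing whether their Maximum Desired Halves sit on the same or opposite sides of the cut. In each case the goal is to exhibit a pair of disjoint subsets — one inside $S_1$ and one inside $S_2\cup S_3$, or both confined to one of these — each valued $\ge \mu_i/2$ by the corresponding agent.

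For $n=4$, the plan is to push the same idea one step further and invoke the $\boldsymbol{\alpha}$-MMS$(\mathbf{d})$ generalisation developed in Section~\ref{sec:Reductions} as an inductive hook. Starting from agent~$1$'s MMS partition into four parts, I would use one or more cuts to (i) carve out a bundle safely assignable to agent~$1$, and (ii) reduce the residual problem on the remaining items to a three-agent subinstance whose parameters fall under the characterisation of Theorem~\ref{thm:three-general}. The flexibility of the $\boldsymbol{\alpha}$-MMS$(\mathbf{d})$ notation matters exactly here: after the reduction the three remaining agents are effectively working with MMS partitions into $d_i\le 4$ parts on a restricted item set, and the characterisation can be applied with these enlarged parameters rather than starting over from scratch.

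The main obstacle will be the $n=4$ case. Subadditivity is a very weak structural hypothesis, and a single cut gives only one-sided information, so after assigning one bundle to agent~$1$ it is not immediate that the remaining items admit a $1/2$-MMS split among three agents whose MMS partitions were designed for the full item set $M$. Controlling the interaction between the three remaining agents' Maximum Desired Halves on the residual items — and showing that some choice of cuts always leaves a feasible sub-instance covered by Theorem~\ref{thm:three-general} — is where the delicate combinatorial work will concentrate, and where I expect the $\boldsymbol{\alpha}$-MMS$(\mathbf{d})$ framework to do most of the heavy lifting.
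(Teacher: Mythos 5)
Your $n=2$ sketch matches the paper's Lemma~\ref{Lemma:2agents}, and your $n=3$ sketch is in the spirit of Lemma~\ref{Lemma:threehalfs}, though it underspecifies the one choice that makes the case analysis close: the second cut must be an \emph{entire MMS bundle of the third agent} (the paper offers $C=Q_1$ to $T$), so that after $T$ takes her maximum-desired-half piece, agent $Q$ still owns a whole untouched bundle and values the remainder at a full $1$ --- this is what lets Corollary~\ref{Cor:cut-and-choose} finish for $S$ and $Q$. The genuine gap is in your $n=4$ plan. Reducing the residual problem to ``a three-agent subinstance covered by Theorem~\ref{thm:three-general}'' does not work as stated, for two reasons. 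First, Theorem~\ref{thm:three-general} guarantees each remaining agent only half of her maximin share \emph{computed on the residual item set}, and after removing agent~$1$'s bundle that residual share can itself have dropped to half of the original $\mu_i(M)$ (a subset of a single MMS bundle of agent~$1$ can intersect all four MMS bundles of another agent, leaving her only half-value pieces on the remainder); composing the two losses gives $1/4$, not $1/2$. Second, you have no way to certify which vector $\mathbf{d}$ the residual instance satisfies, so you cannot tell whether you land in a positive or a negative case of the characterization.

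The paper sidesteps both problems by never re-normalizing. It proves the stronger ordinal statement $1/2$-MMS$(3,3,4,4)$, keeps every threshold at $1/2$ of the \emph{original} values, and reserves one agent $R$ (with a $4$-part partition) whose bundle $R_4$ is kept disjoint from everything tentatively allocated, so that $R$ always values the remainder at a full $1$ and the two-agent Corollary~\ref{Cor:cut-and-choose} can be applied to $R$ paired with whichever agent is left over. The real content is a chain of up to four candidate partial allocations: whenever the ``third'' agent fails the hypothesis of Corollary~\ref{Cor:cut-and-choose}, she must value at least three intersections of her bundles with the current partial allocation at $\ge 1/2$, and by Observation~\ref{obs:Cut_noOfDesiredSets} this forces new pairwise-disjointness relations (the paper's conditions \eqref{eq:4agentsCond0}--\eqref{eq:4agentsCond4}) that feed the next candidate; after three such failures the accumulated structure yields the explicit allocation $(S_2',T_1',Q_3^*,R_4)$ directly. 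This combinatorial bookkeeping is exactly the part of the argument that cannot be delegated to the three-agent characterization, and it is absent from your sketch.
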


Before proceeding with the proofs for few agents, we give the following general observation that illustrates the implications of our results. 

\begin{observation}
\label{obs:simpleReduction}
    Given some fair division instance, an allocation that is $\boldsymbol{\alpha}$-MMS$(\mathbf{d})$ is also $\boldsymbol{\alpha}'$-MMS$(\mathbf{d}')$, where $\boldsymbol{\alpha}$ is pointwise larger or equal to $\boldsymbol{\alpha}'$, and $\mathbf{d}$ is pointwise smaller or equal to $\mathbf{d}'$.
\end{observation}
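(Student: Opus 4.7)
The plan is to reduce the statement to two independent monotonicity facts about the defining quantity $\alpha_i\cdot \mu_i^{d_i}(M)$: monotonicity in $\alpha_i$ (trivial) and anti-monotonicity in $d_i$ (follows from monotonicity of valuations). Concretely, for each agent $i$ I would establish the chain
\[
v_i(A_i)\ \ge\ \alpha_i\cdot \mu_i^{d_i}(M)\ \ge\ \alpha'_i\cdot \mu_i^{d_i}(M)\ \ge\ \alpha'_i\cdot \mu_i^{d'_i}(M),
\]
where the first inequality is the hypothesis that $A$ is $\boldsymbol{\alpha}$-MMS$(\mathbf{d})$ (using the equivalent formulation in the footnote of Definition~\ref{def:alpha-mms}), the second uses $\alpha_i\ge\alpha'_i$ together with $\mu_i^{d_i}(M)\ge 0$, and the third is the content of the second step below.

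The only nontrivial point is showing $\mu_i^{d}(M)\ge \mu_i^{d'}(M)$ whenever $d\le d'$. First I would take an optimal partition $P'=(T_1,\ldots,T_{d'})\in \Pi_{d'}(M)$ achieving $\mu_i^{P'}(M)=\mu_i^{d'}(M)$. Then I would construct a partition $P\in \Pi_{d}(M)$ by arbitrarily grouping the $d'$ parts of $P'$ into $d$ nonempty groups and taking unions; call the resulting bundles $S_1,\ldots,S_d$. By monotonicity of $v_i$, each $S_j$ contains some $T_k$ with $v_i(S_j)\ge v_i(T_k)\ge \mu_i^{d'}(M)$, so $\mu_i^{P}(M)\ge \mu_i^{d'}(M)$, and hence $\mu_i^{d}(M)\ge \mu_i^{P}(M)\ge \mu_i^{d'}(M)$.

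Since this argument applies to every agent $i\in N$ independently, combining it with the first display yields $v_i(A_i)\ge \alpha'_i\cdot \mu_i^{d'_i}(M)$ for all $i$, which is precisely the definition of $\boldsymbol{\alpha}'$-MMS$(\mathbf{d}')$. No step presents a real obstacle; the only thing to be careful about is that the merging construction in the second step requires $d\ge 1$ (so that the grouping is well defined) and uses only monotonicity of the valuations, which is a standing assumption of the model, so the argument goes through uniformly for all valuation classes considered in the paper.
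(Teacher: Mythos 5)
Your proposal is correct and follows essentially the same route as the paper's proof: both reduce the claim to the inequality $\mu_i^{d_i}(M)\ge \mu_i^{d_i'}(M)$ for $d_i\le d_i'$, established by merging the parts of an optimal $d_i'$-partition and invoking monotonicity of the valuations, and then chain this with $\alpha_i\ge\alpha_i'$. The only (immaterial) difference is the order in which the two monotonicity steps are applied.
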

\begin{proof}
    Let $A=(A_1,\ldots, A_n)$ be a $\boldsymbol{\alpha}$-MMS$(\mathbf{d})$ allocation. By definition it holds that for each agent $i$, $v_i(A_i)\geq \alpha_i \mu_i^{d_i}$. 
    
    We first show that $A$ is also a $\boldsymbol{\alpha}$-MMS$(\mathbf{d}')$ allocation.  
    Note that if for some agent $i$ there exists a partition of $M$ into $d_i'$ bundles that she values each by at least $\mu_i^{d_i'}$, then there exists a partition into $d_i\le d_i'$ bundles that she values by at least the same amount, by merging bundles of the first partition, due to monotonicity of the valuation functions. This in turns implies that $\mu_i^{d_i}\geq \mu_i^{d_i'}$. Therefore, $v_i(A_i)\geq \alpha_i \mu_i^{d_i'}$, for all $i$, which implies that $A$ is a $\boldsymbol{\alpha}$-MMS$(\mathbf{d}')$ allocation. 
    
    To complete the proof, it simply holds that $v_i(A_i)\geq \alpha_i \mu_i^{d_i'}\ge \alpha_i' \mu_i^{d_i'}$, for all $i$, which means that $A$ is also a $\boldsymbol{\alpha}'$-MMS$(\mathbf{d}')$ allocation. 
\end{proof}

\subsection{Two Agents}
\label{sec:2Subadd}
In this section we show that there always exists a $1/2$-MMS allocation for the case of two agents. We first show a stronger statement (Lemma~\ref{Lemma:2agents}) and obtain the main result as a corollary (Corollary~\ref{cor:2agents}). We further provide a useful restatement (Corollary~\ref{Cor:cut-and-choose}) of Lemma~\ref{Lemma:2agents} to be extensively used in the proofs for three and four agents.

\begin{lemma}[Two agents]
\label{Lemma:2agents}
    A $(1/2,1)$-MMS$(1,2)$ allocation exists for two agents with subadditive valuation functions. 
\end{lemma}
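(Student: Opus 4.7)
The plan is to use a cut-and-choose protocol, with agent~2 being the cutter and agent~1 being the chooser.

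First I would observe what the thresholds actually are. Since $d_1=1$, the only partition of $M$ into one part is $M$ itself, so $\mu_1^{P_1}(M)=v_1(M)$ for the unique $P_1\in\Pi_1(M)$. Hence agent~1 must receive value at least $v_1(M)/2$. Agent~2 has $d_2=2$, so the target is the usual two-bundle maximin value $\mu_2^2(M)$, attained by some partition $(S_1,S_2)$ with $\min\{v_2(S_1),v_2(S_2)\}=\mu_2^2(M)$.

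Next I would let agent~2 propose her optimal partition $(S_1,S_2)$ realizing $\mu_2^2(M)$, and let agent~1 take whichever of $S_1,S_2$ she values more. By subadditivity of $v_1$,
\[
v_1(S_1)+v_1(S_2)\;\geq\; v_1(S_1\cup S_2)\;=\;v_1(M),
\]
so $\max\{v_1(S_1),v_1(S_2)\}\geq v_1(M)/2=\tfrac12\mu_1^{1}(M)$, giving agent~1 the required $1/2$-approximation. Agent~2 receives the remaining bundle, which has value at least $\mu_2^2(M)$ by construction, so her $1$-approximation holds.

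There is essentially no obstacle here: the proof is the classical cut-and-choose argument, and the only place subadditivity is used is in the single inequality $v_1(S_1)+v_1(S_2)\geq v_1(M)$ that guarantees agent~1 one half of the total. The real purpose of the statement is to package this asymmetric guarantee $(1/2,1)$-MMS$(1,2)$ in a form directly reusable in the three- and four-agent arguments, and the proposed proof delivers exactly that.
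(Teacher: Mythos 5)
Your proof is correct and is essentially identical to the paper's: agent $T$ (with $d_2=2$) cuts via her optimal two-bundle partition, agent $S$ chooses the half she values more, and subadditivity of $v_S$ applied to $v_S(T_1)+v_S(T_2)\geq v_S(M)=\mu_S^1(M)$ yields her the $1/2$ guarantee while $T$ keeps a bundle worth at least $\mu_T^2(M)$. No gaps.
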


\begin{proof}

    We denote by $S$ and $T$ the two agents and by $S_j, T_j$ their $j$-th MMS bundle respectively, i.e., $S_1$ for the first agent and $T_1, T_2$ for the second agent. 
    We proceed in a cut-and-choose fashion; we partition the set of items into two disjoint bundles $T_1, T_2$ which both are worth at least $1$ for $T$. Agent $S$ picks her favorite bundle which (due to subadditivity) is guaranteed to have at least $1/2$ value since $v_S(T_1) + v_S(T_2) \geq v_S(T_1 \cup T_2)=v_S(S_1)= 1$, and $T$ receives the remaining bundle. Hence a $(1/2,1)$-MMS$(1,2)$ allocation exists (see Figure \ref{fig:2agentsRed} for an illustration).
\end{proof}

In the following corollary we state that Lemma~\ref{Lemma:2agents} suggests that a $1/2$-MMS approximation is attainable for two agents (by Observation~\ref{obs:simpleReduction}); this bound is tight for two agents \cite{GhodsiHSSY22}. 

\begin{corollary}
\label{cor:2agents}
    A $1/2$-MMS allocation exists for two agents with subadditive valuation functions.
\end{corollary}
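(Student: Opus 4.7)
The plan is to derive the corollary immediately from Lemma~\ref{Lemma:2agents} via Observation~\ref{obs:simpleReduction}. The standard $1/2$-MMS guarantee for two agents is, in the notation of Definition~\ref{def:alpha-mms}, exactly a $\boldsymbol{\alpha}'$-MMS$(\mathbf{d}')$ allocation with $\boldsymbol{\alpha}' = (1/2, 1/2)$ and $\mathbf{d}' = (2, 2)$.

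First, I would invoke Lemma~\ref{Lemma:2agents} to obtain an allocation $A$ that is $(1/2, 1)$-MMS$(1, 2)$, i.e., with $\boldsymbol{\alpha} = (1/2, 1)$ and $\mathbf{d} = (1, 2)$. Then I would check the two componentwise inequalities required by Observation~\ref{obs:simpleReduction}: the threshold vector satisfies $\boldsymbol{\alpha} = (1/2, 1) \geq (1/2, 1/2) = \boldsymbol{\alpha}'$ and the partition-size vector satisfies $\mathbf{d} = (1, 2) \leq (2, 2) = \mathbf{d}'$. Applying the observation yields that $A$ is also $(1/2, 1/2)$-MMS$(2, 2)$, which coincides with $1/2$-MMS for two agents.

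There is no real obstacle here since all the work has already been done in Lemma~\ref{Lemma:2agents} (the cut-and-choose argument on agent $T$'s MMS$(2)$ partition, combined with subadditivity to guarantee agent $S$ at least $1/2$ of her $\mu_S^1 = v_S(M)$). The only subtlety worth flagging, for conceptual clarity, is why the threshold bound $v_S(A_S) \geq (1/2)\mu_S^1(M) = (1/2) v_S(M)$ given by the lemma implies the needed $v_S(A_S) \geq (1/2)\mu_S^2(M)$: this is because $\mu_S^1(M) = v_S(M) \geq \mu_S^2(M)$ by monotonicity (merging any two bundles of a $d=2$ partition can only increase the minimum value), which is precisely the ingredient established inside the proof of Observation~\ref{obs:simpleReduction}.
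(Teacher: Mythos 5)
Your proposal is correct and follows exactly the paper's route: the authors also obtain the corollary by combining Lemma~\ref{Lemma:2agents} with Observation~\ref{obs:simpleReduction}, noting that $(1/2,1)\ge(1/2,1/2)$ and $(1,2)\le(2,2)$ componentwise. Your added remark on why $\mu_S^1(M)\ge\mu_S^2(M)$ is just the merging argument already contained in the proof of Observation~\ref{obs:simpleReduction}, so nothing is missing.
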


\begin{center}
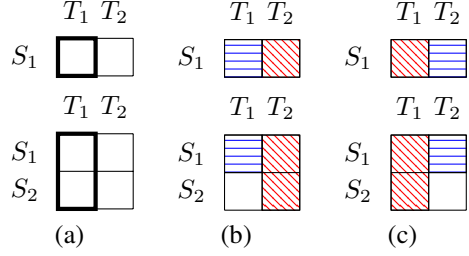
\begin{figure}
\begin{center}
    \begin{tabular}{ccc}
\begin{tikzpicture}[scale=0.5]
  \draw[step=1cm,] (0,0) grid (2,1);
  \draw[black, ultra thick](0,0) rectangle +(1,1);
\node[anchor=north] at (0.5,2.25) {$T_1$};
\node[anchor=north] at (1.5,2.25) {$T_2$};
\node[anchor=west] at (-1.5,0.5) {$S_1$};
\end{tikzpicture}
& 
\begin{tikzpicture}[scale=0.5]
  \draw[step=1cm,] (0,0) grid (2,1);

(1,0) rectangle +(1,1);
\draw[pattern={north west lines},pattern color=red]
(1,0) rectangle +(1,1);
\draw[pattern={horizontal lines},pattern color=blue](0,0) rectangle (1,1);
\node[anchor=north] at (0.5,2.25) {$T_1$};
\node[anchor=north] at (1.5,2.25) {$T_2$};
\node[anchor=west] at (-1.5,0.5) {$S_1$};
\end{tikzpicture} 
& 
\begin{tikzpicture}[scale=0.5]
  \draw[step=1cm,] (0,0) grid (2,1);
\draw[pattern={north west lines},pattern color=red](0,0) rectangle (1,1);
\draw[pattern={horizontal lines},pattern color=blue](1,0) rectangle +(1,1);
\node[anchor=north] at (0.5,2.25) {$T_1$};
\node[anchor=north] at (1.5,2.25) {$T_2$};
\node[anchor=west] at (-1.5,0.5) {$S_1$};
\end{tikzpicture}\\
\begin{tikzpicture}[scale=0.5]

\draw[black, ultra thick](0,0) rectangle +(1,2);
  \draw[step=1cm,] (0,0) grid (2,2);

\node[anchor=north] at (0.5,3.25) {$T_1$};
\node[anchor=north] at (1.5,3.25) {$T_2$};
\node[anchor=west] at (-1.5,1.5) {$S_1$};
\node[anchor=west] at (-1.5,0.5) {$S_2$};
\end{tikzpicture} & \begin{tikzpicture}[scale=0.5]

(0,1) rectangle +(1,1);
\draw[pattern={horizontal lines},pattern color=blue]
(0,1) rectangle +(1,1);
;
\draw[pattern={north west lines},pattern color=red](1,0) rectangle (2,2);
  \draw[step=1cm,] (0,0) grid (2,2);

\node[anchor=north] at (0.5,3.25) {$T_1$};
\node[anchor=north] at (1.5,3.25) {$T_2$};
\node[anchor=west] at (-1.5,1.5) {$S_1$};
\node[anchor=west] at (-1.5,0.5) {$S_2$};
\end{tikzpicture} &
\begin{tikzpicture}[scale=0.5]
(0,0) rectangle +(1,2);
\draw[pattern={north west lines},pattern color=red]
(0,0) rectangle +(1,2);
\draw[pattern={horizontal lines},pattern color=blue](1,1) rectangle +(1,1);
  \draw[step=1cm,] (0,0) grid (2,2);

\node[anchor=north] at (0.5,3.25) {$T_1$};
\node[anchor=north] at (1.5,3.25) {$T_2$};
\node[anchor=west] at (-1.5,1.5) {$S_1$};
\node[anchor=west] at (-1.5,0.5) {$S_2$};
\end{tikzpicture}\\
(a)&(b)&(c)\\

    \end{tabular}
    \end{center}
        \caption{We illustrate the partition for both $\boldsymbol{d}=(1,2)$ and $\boldsymbol{d}=(2,2)$. The set of items $S_1$ can be divided into $T_1$ and $T_2$. If agent $S$ values bundle $T_1 \cap S_1$ (represented with a thick line in (a)) more than $\frac{v_S(S_1)}{2}$ then allocation (b) has the desired properties (the blue bundle for agent $S$ and the red bundle for agent $T$). If this is not the case, then due to the subadditivity, the same holds for agent $S$ and bundle $T_2 \cap S_1$; then (b) has the desired properties.} 
    \label{fig:2agentsRed}
\end{figure}
\end{center}
We also provide a useful restatement of Lemma~\ref{Lemma:2agents} to be used as a reduction tool in the proofs with more agents.
\begin{corollary}
\label{Cor:cut-and-choose}
    Consider a fair division instance of two agents with subadditive valuations and a set of $M$ items, where one agent values $M$ at least $1$, and there exists a partition into two bundles where the other agent values each at least $1/2$. Then, there exists an allocation that guarantees at least $1/2$ value to each agent.
\end{corollary}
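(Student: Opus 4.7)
The plan is to prove this as essentially a rescaled instance of Lemma~\ref{Lemma:2agents}, or equivalently, a direct cut-and-choose protocol in which the ``cut'' is handed to us for free by the hypothesis. Denote by $S$ the agent with $v_S(M) \geq 1$ (the chooser) and by $T$ the agent who admits a partition $(T_1,T_2)$ of $M$ with $v_T(T_j) \geq 1/2$ for $j\in\{1,2\}$ (the cutter). Since the cutter's partition is already given, there is nothing to construct on $T$'s side.

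The key step is to present $(T_1,T_2)$ to $S$ and invoke subadditivity of $v_S$:
\[
v_S(T_1) + v_S(T_2) \;\geq\; v_S(T_1 \cup T_2) \;=\; v_S(M) \;\geq\; 1,
\]
so at least one bundle $T_{j^*}$ satisfies $v_S(T_{j^*}) \geq 1/2$. I would then allocate $T_{j^*}$ to $S$ and the other bundle to $T$. Agent $S$ gets value $\geq 1/2$ by the above; agent $T$ gets value $\geq 1/2$ directly from the hypothesis on her partition.

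Alternatively, and this is the viewpoint I would emphasize for later reuse, the corollary can be obtained as a rescaling of Lemma~\ref{Lemma:2agents}: doubling $v_T$ turns the hypothesis $v_T(T_j)\geq 1/2$ into $\mu_T^2 \geq 1$, while $v_S(M)\geq 1$ already gives $\mu_S^1 \geq 1$; the lemma then produces a $(1/2,1)$-MMS$(1,2)$ allocation, which translates back to $v_S(A_S)\geq 1/2$ and $v_T(A_T)\geq 1/2$ in the original scale. There is essentially no obstacle here: the only tool used is subadditivity applied to the two-bundle partition of $T$, and the proof is a one-line rephrasing of the cut-and-choose argument already carried out in Lemma~\ref{Lemma:2agents}.
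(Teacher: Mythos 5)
Your proof is correct and follows essentially the same route as the paper: the paper presents this corollary as a direct restatement of Lemma~\ref{Lemma:2agents}, whose proof is exactly the cut-and-choose argument you give (present $(T_1,T_2)$ to $S$, use subadditivity $v_S(T_1)+v_S(T_2)\ge v_S(M)\ge 1$ to let $S$ take a bundle worth at least $1/2$, and give the other to $T$). Your rescaling remark is precisely the sense in which the paper treats the corollary as a restatement, so there is nothing to add.
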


\subsection{Three Agents}
\label{sec:3Subadd}
In this section we show that there exists a $1/2$-MMS allocation for the case of three agents. Again, we first show a stronger statement (Lemma~\ref{Lemma:threehalfs}) and obtain the main result as a corollary (Corollary~\ref{cor:3agents}). 

\begin{lemma} [Three agents]
\label{Lemma:threehalfs}
    An $1/2$-MMS$(3,2,2)$ allocation exists for three agents with subadditive valuation functions.
\end{lemma}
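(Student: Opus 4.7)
The plan is to hand out a single bundle to one agent and then apply the two-agent cut-and-choose reduction (Corollary~\ref{Cor:cut-and-choose}) on the remainder. Fix MMS partitions $(S_1,S_2,S_3)$, $(T_1,T_2)$, and $(U_1,U_2)$ for agents $1,2,3$ respectively, normalized so that $v_1(S_i)\ge 1$, $v_2(T_j)\ge 1$, $v_3(U_k)\ge 1$ for all indices. First, I present the cut $T_1$ to agent~$1$ against $(S_1,S_2,S_3)$. Observation~\ref{obs:Cut_noOfDesiredSets} gives $|\mathcal{X}^*_1(T_1)|\ge 2$, and by the symmetric-cuts convention I may assume $\mathcal{X}^*_1(T_1)=\mathcal{X}_1(T_1)$, so after relabeling the $S_i$'s I have $v_1(S_1\cap T_1)\ge 1/2$ and $v_1(S_2\cap T_1)\ge 1/2$. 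Subadditivity also yields $v_3(T_1)+v_3(T_2)\ge v_3(U_1)\ge 1$, so at least one of $v_3(T_1),v_3(T_2)$ is $\ge 1/2$.

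Next I branch on where agent~$3$'s preferred half lies. If $v_3(T_2)\ge 1/2$, I allocate $T_2$ to agent~$3$ and apply Corollary~\ref{Cor:cut-and-choose} to agents $1,2$ on $T_1$: agent~$2$ values all of $T_1$ at $\ge 1$, and agent~$1$ holds the $2$-partition $(S_1\cap T_1,\,T_1\setminus S_1)$ whose parts are both worth $\ge 1/2$ to her (the second via monotonicity from $S_2\cap T_1$). Otherwise $v_3(T_2)<1/2$ forces $v_3(T_1)>1/2$, and I split further on $v_3(S_1\cap T_1)$. If $v_3(S_1\cap T_1)\ge 1/2$, I give $S_1\cap T_1$ to agent~$3$, $S_2\cap T_1$ to agent~$1$, and the complement (which still contains $T_2$, so its $v_2$-value is $\ge 1$) to agent~$2$. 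If $v_3(S_1\cap T_1)<1/2$, I give $S_1\cap T_1$ to agent~$1$ and invoke Corollary~\ref{Cor:cut-and-choose} on $M':=M\setminus(S_1\cap T_1)$ with agents $2,3$: agent~$2$ still values $M'\supseteq T_2$ at $\ge 1$, and for agent~$3$ subadditivity gives
$$v_3(U_k\cap M')\ge v_3(U_k)-v_3(U_k\cap S_1\cap T_1)\ge 1-v_3(S_1\cap T_1)>\tfrac{1}{2},$$
so the partition $(U_1\cap M',\,U_2\cap M')$ satisfies the corollary's hypotheses.

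The delicate case is the one where agent~$1$'s desired cells and agent~$3$'s preferred half both sit inside $T_1$, making the two agents compete for the same items. The dichotomy on $v_3(S_1\cap T_1)$ resolves it: either that single cell is already a valid share for agent~$3$ (and the sibling cell $S_2\cap T_1$ then covers agent~$1$), or it is small enough that carving it off barely dents agent~$3$'s $(U_1,U_2)$-partition, which is exactly what Corollary~\ref{Cor:cut-and-choose} consumes. No other subtlety is expected.
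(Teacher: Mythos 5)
Your proof is correct, but it is organized differently from the paper's. The paper also starts by cutting one of agent~$2$'s bundles against agent~$1$'s three-part partition to obtain two desired cells inside a single $T_j$, but it then issues a \emph{second} maximum-desired-half cut, offering $U_1$ (in the paper's notation, $Q_1$) to agent~$2$: this yields a single piece $T^*\subseteq T_j\cap U_1$ of value at least $1/2$ to agent~$2$ that is automatically disjoint from agent~$1$'s two cells and leaves $U_2$ intact, after which Corollary~\ref{Cor:cut-and-choose} is applied exactly once, to agents~$1$ and~$3$ on $M\setminus T^*$, with no case analysis. You instead replace that second cut by a branching on $v_3(T_2)$ and on $v_3(S_1\cap T_1)$, resolving the "competition" branch with the quantitative subadditivity estimate $v_3(U_k\cap M')\ge 1- v_3(S_1\cap T_1)>1/2$; every branch checks out (disjointness, the $\ge 1$ hypothesis for agent~$2$ via $T_2$ or $T_1$, and the two-bundle hypothesis for the other agent are all verified correctly). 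Two small remarks: your first case split is actually redundant --- the dichotomy on $v_3(S_1\cap T_1)$ alone already covers everything, since neither sub-branch uses $v_3(T_2)<1/2$ --- and the paper's two-cut structure is the one that scales to the four-agent argument of Lemma~\ref{Lemma:4agents}, whereas your case analysis buys a shorter, more self-contained proof for three agents at the cost of that uniformity.
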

\begin{proof}
We denote by $S,T,Q$ the three agents and by $S_j,T_j,Q_j$ their $j$-th MMS bundle, respectively. 

The proof relies on the existence of $(1/2,1)$-MMS$(1,2)$ in the instance with two agents; we show that we can identify a valuable subset $A_T$ (with value at least $1/2$) to allocate to agent $T$, such that we can extend the allocation applying Corollary \ref{Cor:cut-and-choose} for agents $S$ and $Q$ and for the remaining items $M\setminus A_T$. In particular, agent $Q$ will have value at least $1$ for $M\setminus A_T$, while agent $S$ will be able to partition it into two bundles with value at least $1/2$ each. To achieve this we will use two cuts sequentially, first to agent $S$ and then (based on the response of $S$) to agent $T$  (we refer the reader to Figure~\ref{fig:3agents} (a) for an illustration of the proof).
    
{\bf First cut.} Consider the cut $C=T_2$ that we offer to $S$ and let $S_1^*,S_2^* \in \mathcal{X}_{S}^*(C)$ be two bundles in the maximum desired half (by Observation~\ref{obs:Cut_noOfDesiredSets} there exist at least two such bundles). The cut is ``symmetric'' for $T$ in a sense that both $C=T_2$ and $M\setminus C=T_1$ contain the {\em same number of $T$'s MMS bundles}; so it is without loss of generality to assume that $\mathcal{X}^*_S(C)=\mathcal{X}_S(C)$ (in Figure~\ref{fig:3agents} (a) $S_1^* \subseteq S_1,S_2^* \subseteq S_2$ are represented with blue color).

{\bf Second cut.} We next offer $C=Q_1$ to $T$ and let  $T_1^* \in \mathcal{X}_T^*(C)$ be the bundle in the maximum desired half. The cut is ``symmetric'' for $Q$, so again without loss of generality assume that $\mathcal{X}_T^*(C)=\mathcal{X}_T(C)$ (in Figure~\ref{fig:3agents} (b) $T_1^* \subseteq T_1$ is represented with red color).

{\bf Apply Corollary \ref{Cor:cut-and-choose}.} Overall, $v_T(T^*_1)\geq 1/2$ and $T^*_1$ will be allocated to $T$. Then, for $M'=M\setminus T^*_1$, it holds that $v_Q(M')\geq v_Q(Q_2) \geq 1$, and $S_1^*,S_2^*\subseteq M'$, for both of which $S$ has value at least $1/2$. So the lemma follows by using Corollary \ref{Cor:cut-and-choose} on $M'$ for agents $Q$ and $S$. (in Figure~\ref{fig:3agents} (c) $S_1^* \subseteq S_1,S_1^* \subseteq S_2$ are represented with blue color and $Q_2$ with green).
\end{proof}
\begin{center}

\begin{figure}
\begin{center}

\begin{tabular}{ccc}

\small{
\begin{tikzpicture}[scale=0.45]
\draw[yslant=0.5,xslant=-1,black, ultra thick](3,3) (5,1) rectangle +(-2,-1);  
\draw[yslant=0.5,black, ultra thick](3,-3) rectangle +(2,3);    \draw[yslant=-0.5,black, ultra thick](3,3) rectangle +(-1,-3); 
\draw[yslant=0.5,xslant=-1,pattern={horizontal lines},pattern color=blue] (5,1) rectangle +(-2,-1);  
\draw[yslant=0.5,pattern={horizontal lines},pattern color=blue](3,-2) rectangle +(2,2);    \draw[yslant=-0.5,pattern={horizontal lines},pattern color=blue](3,3) rectangle +(-1,-2);     

    \draw[yslant=-0.5] (1,0) grid (3,3);
  \draw[yslant=0.5] (3,-3) grid (5,0);
  \draw[yslant=0.5,xslant=-1] (3,0) grid (5,2);
\node[anchor=south west] at (-0.2,1.5,0) {$S_1$};
\node[anchor=south west] at (-0.2,0.5,0) {$S_2$};
\node[anchor=south west] at (-0.2,-0.5,0) {$S_3$};
\node[anchor=north west] at (0.45,3.7,0) {$Q_1$};
\node[anchor=north west] at (1.5,4.3,0) {$Q_2$};
\node[anchor=north west] at (0.5,-0.8,0) {$T_1$};
\node[anchor=north west] at (1.5,-1.3,0) {$T_2$};
\end{tikzpicture}} &
\small{
\begin{tikzpicture}[scale=0.45]
  
    \draw[yslant=0.5,xslant=-1,color=black,ultra thick] (4,2) rectangle +(-1,-2);
    \draw[yslant=0.5,color=black,ultra thick] (3,-3) rectangle +(1,3);

    \draw[yslant=-0.5,color=black,ultra thick](3,3) rectangle +(-2,-3);
    
    \draw[yslant=0.5,xslant=-1,pattern={north west lines},pattern color=red](4,2) rectangle +(-1,-1);
\draw[yslant=-0.5,pattern={north west lines},pattern color=red](2,3) rectangle +(-1,-3);
  \draw[yslant=-0.5] (1,0) grid (3,3);
  \draw[yslant=0.5] (3,-3) grid (5,0);
  \draw[yslant=0.5,xslant=-1] (3,0) grid (5,2);
\node[anchor=south west] at (-0.2,1.5,0) {$S_1$};
\node[anchor=south west] at (-0.2,0.5,0) {$S_2$};
\node[anchor=south west] at (-0.2,-0.5,0) {$S_3$};
\node[anchor=north west] at (0.45,3.7,0) {$Q_1$};
\node[anchor=north west] at (1.5,4.3,0) {$Q_2$};
\node[anchor=north west] at (0.5,-0.8,0) {$T_1$};
\node[anchor=north west] at (1.5,-1.3,0) {$T_2$};
\end{tikzpicture}} &
\small{
\begin{tikzpicture}[scale=0.45]
 \draw[yslant=0.5,xslant=-1,pattern={north west lines},pattern color=red](4,2) rectangle +(-1,-1);
\draw[yslant=-0.5,pattern={north west lines},pattern color=red](2,3) rectangle +(-1,-3);
\draw[yslant=0.5,xslant=-1,pattern={crosshatch dots},pattern color=green]  (5,2) rectangle +(-1,-2);
    \draw[yslant=0.5,pattern={crosshatch dots},pattern color=green]  (4,-3) rectangle +(1,3);

    \draw[yslant=0.5,xslant=-1,pattern={horizontal lines},pattern color=blue] (5,1) rectangle +(-2,-1);
    \draw[yslant=0.5,pattern={horizontal lines},pattern color=blue](3,-2) rectangle +(2,2);
    \draw[yslant=-0.5,pattern={horizontal lines},pattern color=blue](3,3) rectangle (2,1);
  \draw[yslant=-0.5] (1,0) grid (3,3);
  \draw[yslant=0.5] (3,-3) grid (5,0);
  \draw[yslant=0.5,xslant=-1] (3,0) grid (5,2);
\node[anchor=south west] at (-0.2,1.5,0) {$S_1$};
\node[anchor=south west] at (-0.2,0.5,0) {$S_2$};
\node[anchor=south west] at (-0.2,-0.5,0) {$S_3$};
\node[anchor=north west] at (0.45,3.7,0) {$Q_1$};
\node[anchor=north west] at (1.5,4.3,0) {$Q_2$};
\node[anchor=north west] at (0.5,-0.8,0) {$T_1$};
\node[anchor=north west] at (1.5,-1.3,0) {$T_2$};
\end{tikzpicture}}\\
(a) & (b) & (c) \\
$S_1^*,S_2^* \in \mathcal{X}_S^*(T_2)$ &  $T_1^* \in \mathcal{X}^*_T(Q_1)$ & Apply Corollary \ref{Cor:cut-and-choose}\\
\end{tabular}
    
\end{center}
\caption{We use blue, red, and green to denote the bundles from which we will allocate to agents $S,T$ and $Q$, respectively. The thickened lines illustrate the first and second cuts, while (c) demonstrates the application of Corollary 2.
}
\label{fig:3agents}
\end{figure}
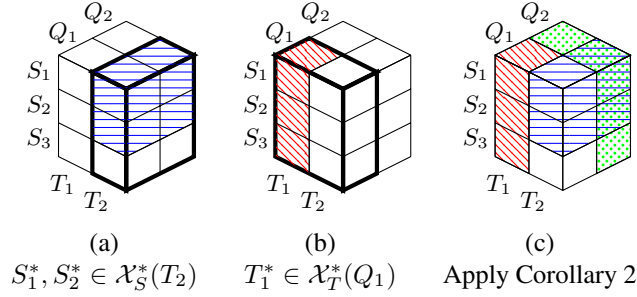 
\end{center}
As a corollary of Lemma~\ref{Lemma:threehalfs}, a $1/2$-MMS allocation always exists for three agents (by Observation~\ref{obs:simpleReduction}); this bound is also tight \cite{GhodsiHSSY22}. 
\begin{corollary}
\label{cor:3agents}
    A $1/2$-MMS allocation exists for three agents with subadditive valuation functions. 
\end{corollary}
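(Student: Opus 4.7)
My plan is to derive Corollary~\ref{cor:3agents} directly as a consequence of Lemma~\ref{Lemma:threehalfs} together with the monotonicity-in-$\mathbf{d}$ property already established in Observation~\ref{obs:simpleReduction}. The key observation is that the standard $1/2$-MMS notion corresponds in our framework to $\boldsymbol{\alpha}$-MMS$(\mathbf{d})$ with $\boldsymbol{\alpha}=(1/2,1/2,1/2)$ and $\mathbf{d}=(3,3,3)$, so the task reduces to upgrading the guarantee of Lemma~\ref{Lemma:threehalfs}, which provides a $1/2$-MMS$(3,2,2)$ allocation, to $\mathbf{d}=(3,3,3)$.

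First, I would invoke Lemma~\ref{Lemma:threehalfs} to obtain an allocation $A=(A_1,A_2,A_3)$ that is $1/2$-MMS$(3,2,2)$, i.e., $v_1(A_1)\ge \tfrac{1}{2}\mu_1^3$, $v_2(A_2)\ge \tfrac{1}{2}\mu_2^2$, and $v_3(A_3)\ge \tfrac{1}{2}\mu_3^2$. Next, I would compare the partition-size vectors $\mathbf{d}=(3,2,2)$ and $\mathbf{d}'=(3,3,3)$ and note that $\mathbf{d}\le \mathbf{d}'$ coordinatewise, while the approximation vector $\boldsymbol{\alpha}=(1/2,1/2,1/2)$ is unchanged. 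Observation~\ref{obs:simpleReduction} then implies that $A$ is also $1/2$-MMS$((3,3,3))$; the mechanism is precisely that splitting into more parts cannot increase an agent's guaranteed minimum, so $\mu_i^{d_i}\ge \mu_i^{d_i'}$ for every $i$, and hence every coordinatewise guarantee $v_i(A_i)\ge \tfrac{1}{2}\mu_i^{d_i}$ transfers to $v_i(A_i)\ge \tfrac{1}{2}\mu_i^{d_i'}$.

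Finally, I would observe that $1/2$-MMS$((3,3,3))$ is, by the convention established just after Definition~\ref{def:alpha-mms}, exactly the standard notion of a $1/2$-MMS allocation for three agents, which completes the proof. I do not anticipate any obstacle here: all of the combinatorial work is carried by Lemma~\ref{Lemma:threehalfs} (through its use of Corollary~\ref{Cor:cut-and-choose} and two well-chosen symmetric cuts), and the corollary is essentially a bookkeeping step. The one sentence worth spelling out carefully is why reducing each $d_i$ can only help: merging bundles of any $d_i'$-partition into a $d_i$-partition preserves the minimum value by monotonicity, so $\mu_i^{d_i}\ge \mu_i^{d_i'}$, which is exactly what Observation~\ref{obs:simpleReduction} records.
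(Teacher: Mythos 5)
Your proposal is correct and follows exactly the paper's route: invoke Lemma~\ref{Lemma:threehalfs} to obtain a $1/2$-MMS$(3,2,2)$ allocation and then apply Observation~\ref{obs:simpleReduction} with $\mathbf{d}=(3,2,2)\le\mathbf{d}'=(3,3,3)$ to conclude $1/2$-MMS$(3,3,3)$, which is the standard notion. The justification you spell out (merging bundles shows $\mu_i^{d_i}\ge\mu_i^{d_i'}$) is precisely the content of the paper's Observation~\ref{obs:simpleReduction}, so there is nothing to add.
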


\subsection{Four Agents}
In this section we show the existence of $1/2$-MMS allocation for the case of four agents, our main technical result. We are also able to show a stronger statement (Lemma~\ref{Lemma:4agents}) and obtain the main result as a corollary (Corollary~\ref{cor:4agents}). 

\label{sec:4Subadd}

\begin{lemma} [Four agents]
\label{Lemma:4agents}
    A $1/2$-MMS$(3,3,4,4)$ allocation exists for four agents with subadditive valuation functions.
\end{lemma}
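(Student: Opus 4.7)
The plan is to extend the inductive strategy of Lemma~\ref{Lemma:threehalfs} with an additional cut and a more intricate case analysis. Denote the four agents by $S, T$ (each with MMS$(3)$) and $Q, R$ (each with MMS$(4)$), with their partitions scaled so every MMS bundle is worth at least $1$ for its owner. The overall aim is, at the end of the argument, to reduce either to the two-agent Corollary~\ref{Cor:cut-and-choose} or to the three-agent Lemma~\ref{Lemma:threehalfs}, after first allocating to one or two of the four agents directly.

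First, I would present three sequential cuts. (i) Cut $C_1 = R_1 \cup R_2$ to $S$; since $C_1$ is symmetric for $R$ (two of $R$'s MMS$(4)$ bundles on each side), Observation~\ref{obs:Cut_noOfDesiredSets} yields $|\mathcal{X}^*_S(C_1)| \ge \lceil 3/2 \rceil = 2$, and WLOG $S_1^*, S_2^* \subseteq R_1 \cup R_2$ with $v_S(S_j^*) \ge 1/2$. (ii) Cut $C_2 = Q_1 \cup Q_2$ to $T$, symmetric for $Q$; WLOG $T_1^*, T_2^* \subseteq Q_1 \cup Q_2$ with $v_T(T_k^*) \ge 1/2$ (independent of the WLOG from cut~(i), since it concerns a different agent's labeling). (iii) A third cut to either $Q$ or $R$---for instance $C_3 = R_1 \cup R_2$ given to $Q$---which produces starred bundles $Q_1^*, Q_2^*$ (or $R_1^*, R_2^*$) with value $\ge 1/2$; because $R$'s labeling has already been fixed by cut~(i), cut~(iii) does not enjoy a free WLOG and hence branches into cases.

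Next comes the allocation. In the favorable case, the starred bundles live in pairwise disjoint cells of the cross-partition of $M$ induced by $R$'s and $Q$'s MMS partitions: I would then allocate to each of $S$, $T$ and one of $Q, R$ its starred bundle, and give the fourth agent an intact MMS bundle from the unused half of her partition. In the unfavorable cases---where two starred bundles overlap in some cross-partition cell---I would commit first to two of the four agents, chosen so that (a)~an intact MMS bundle survives in the leftover set for each of the two remaining agents, and (b)~both starred bundles of one of the MMS$(3)$ agents remain disjoint in the leftover. Then I invoke Corollary~\ref{Cor:cut-and-choose} on the resulting two-agent instance.

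The main obstacle will be the case analysis. Unlike the 3-agent proof, whose two cuts used orthogonal MMS$(2)$ symmetries (so both WLOG choices were free), the 4-agent argument has to manage two dimensions of MMS$(4)$ symmetry (over $Q$'s and $R$'s partitions); only the first cut per symmetry is WLOG-free, and subsequent cuts force proper case splits. Verifying that every branch admits a valid $1/2$-MMS$(3,3,4,4)$ allocation---especially in cells of the cross-partition where two starred bundles overlap---is the technical crux, and in the worst cases will require choosing the third cut carefully and further sub-casing based on how the starred bundles decompose across the cross-partition.
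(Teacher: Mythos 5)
There is a genuine gap: your three cuts are all \emph{static} (fixed in advance as $R_1\cup R_2$ and $Q_1\cup Q_2$), so nothing in your construction controls the intersections \emph{between different agents' starred bundles}, nor guarantees that an intact MMS bundle of the fourth agent survives. Concretely, your $S_1^*,S_2^*$ live inside $R_1\cup R_2$ and your $T_1^*,T_2^*$ live inside $Q_1\cup Q_2$; these two regions are unrelated, so every pair $(S_i^*,T_j^*)$ may intersect, your ``favorable case'' (pairwise disjoint cells of the cross-partition) may simply never occur, and $T_k^*$ may meet all four bundles of $R$, leaving $R$ with no intact bundle. The entire difficulty of the lemma is thus pushed into the ``unfavorable case analysis'' that you defer, and with the cuts as chosen there is no evident way to resolve it. The paper avoids this by making the second cut \emph{adaptive}: after obtaining $S_1^*,S_2^*\subseteq R_1\cup R_2$, it offers $T$ the cut $C=S_2^*\cup R_4$, so that by construction $T$'s starred bundles are disjoint from $S_2^*$ and from $R_4$ simultaneously.

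The second missing ingredient is the dichotomy that drives the paper's argument. After tentatively allocating two starred bundles (say $S_2^*$ and $T_1^*$), either the third agent $Q$ still values two of her bundles by at least $1/2$ on the remainder---in which case Corollary~\ref{Cor:cut-and-choose} finishes for $Q$ and $R$ (the remainder contains $R_4$)---or, by subadditivity, at least three of $Q$'s four bundles have their valuable halves \emph{inside} $S_2^*\cup T_1^*$, hence disjoint from $T_2^*$ and $R_4$. This ``failure reveals structure'' step is what lets the paper build a second, then a third candidate allocation with progressively stronger disjointness (equations \eqref{eq:4agentsCond0}--\eqref{eq:4agentsCond4}), terminating with $(S_2',T_1',Q_3^*,R_4)$. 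Your proposal contains neither the adaptive cut nor this pigeonhole-style contingency argument, and without them the reduction to Corollary~\ref{Cor:cut-and-choose} or to the three-agent lemma cannot be carried out.
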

\begin{proof}
    We denote by $S,T,Q,R$ the four agents and by $S_j,T_j,Q_j,R_j$ their $j$-th MMS bundle, respectively. 
    
    We progressively identify four candidate allocations in total, and we show that one of those should be a $1/2$-MMS$(3,3,4,4)$ allocation. 

    In a nutshell, the protocol works as follows. By offering carefully chosen cuts to agents $S,T,Q$ we are able to find partial allocations for two of those agents (in the first round these are $S$ and $T$), that have high enough value (higher than $1/2$), in a way that always preserves one MMS bundle of agent $R$ {\em intact}, let it be $R_4$. If the third agent (this is $Q$ in the first round) values higher than $1/2$ two of her bundles (on the remaining items) then we can apply Corollary~\ref{Cor:cut-and-choose} on $Q$ and $R$ and for the remaining items, and we can claim a $1/2$-MMS$(3,3,4,4)$ allocation. If this is not the case, then the third agent must value at least two of the intersections of her bundles with the partial allocation of the other two agents with value higher than $1/2$. In the next round we will tentatively allocate to the third agent one of these sets. We will also keep one of the other two agents and offer her a subset of high value {\em from a different bundle} (the notion of maximum desired half is useful to achieve this last property). We proceed in a similar manner by querying the remaining agent investigating again whether Corollary~\ref{Cor:cut-and-choose} can be employed. The key is that in every round that the third agent does not satisfy the conditions of Corollary~\ref{Cor:cut-and-choose} we are able to build a more structured partial allocation, which in the final step provides an allocation of $M\setminus R_4$ to agents $S,T$ and $Q$, that they value by at least $1/2$. Then $R_4$ can be allocated to agent $R$ and the allocation is $1/2$-MMS$(3,3,4,4)$.
    
\begin{center}

\begin{figure}[t]
\begin{center}

    \begin{tabular}{cc}
\small{
\begin{tikzpicture}[scale=0.425]
\draw[yslant=0.5,ultra thick, black](4,-3) rectangle +(4,1);     
\draw[yslant=-0.5,ultra thick, black](4,2) rectangle +(-3,-1); 
\draw[yslant=-0.5,pattern={north west lines},pattern color=red](4,3) rectangle +(-1,-1);
\draw[yslant=-0.5,pattern={north west lines},pattern color=red](4,1) rectangle +(-1,-1);
\draw[yslant=0.5,xslant=-1,pattern={north west lines},pattern color=red](7,0) rectangle +(-4,-1); 
\draw[yslant=0.5,pattern={north west lines},pattern color=red](4,-2) rectangle +(4,1);   
\draw[yslant=0.5,pattern={north west lines},pattern color=red](4,-4) rectangle +(4,1);   
\draw[yslant=0.5,pattern={horizontal lines},pattern color=blue](4,-3) rectangle +(4,1);     
\draw[yslant=-0.5,pattern={horizontal lines},pattern color=blue](4,2) rectangle +(-3,-1);     
  \draw[yslant=-0.5] (1,0) grid (4,3);
  \draw[yslant=0.5] (4,-4) grid (8,-1);
  \draw[yslant=0.5,xslant=-1] (3,-1) grid (7,2);
\node[anchor=south west] at (-0.25,1.5,0) {$S_1$};
\node[anchor=south west] at (-0.25,0.5,0) {$S_2$};
\node[anchor=south west] at (-0.25,-0.5,0) {$S_3$};
\node[anchor=north west] at (0.5,3.8,0) {$Q_1$};
\node[anchor=north west] at (1.5,4.3,0) {$Q_2$};
\node[anchor=north west] at (2.5,4.8,0) {$Q_3$};
\node[anchor=north west] at (3.5,5.3,0) {$Q_4$};
\node[anchor=north west] at (0.4,-0.8,0) {$T_3$};
\node[anchor=north west] at (1.4,-1.3,0) {$T_2$};
\node[anchor=north west] at (2.4,-1.8,0) {$T_1$};
\node[anchor=north] at (7,5,0) {$R_1$};
\end{tikzpicture}}
&
\small{
\begin{tikzpicture}[scale=0.425]
\draw[yslant=0.5,ultra thick, black](4,-3) rectangle +(4,1);     
\draw[yslant=-0.5,ultra thick, black](4,2) rectangle +(-3,-1); 
\draw[yslant=-0.5,pattern={north west lines},pattern color=red](4,3) rectangle +(-1,-1);
\draw[yslant=-0.5,pattern={north west lines},pattern color=red](4,1) rectangle +(-1,-1);
\draw[yslant=0.5,xslant=-1,pattern={north west lines},pattern color=red](7,0) rectangle +(-4,-1); 
\draw[yslant=0.5,pattern={north west lines},pattern color=red](4,-2) rectangle +(4,1);   
\draw[yslant=0.5,pattern={north west lines},pattern color=red](4,-4) rectangle +(4,1);   
\draw[yslant=0.5,pattern={horizontal lines},pattern color=blue](4,-3) rectangle +(4,1);     
\draw[yslant=-0.5,pattern={horizontal lines},pattern color=blue](4,2) rectangle +(-3,-1);      
  \draw[yslant=-0.5] (1,0) grid (4,3);
  \draw[yslant=0.5] (4,-4) grid (8,-1);
  \draw[yslant=0.5,xslant=-1] (3,-1) grid (7,2);
\node[anchor=south west] at (-0.25,1.5,0) {$S_1$};
\node[anchor=south west] at (-0.25,0.5,0) {$S_2$};
\node[anchor=south west] at (-0.25,-0.5,0) {$S_3$};
\node[anchor=north west] at (0.5,3.8,0) {$Q_1$};
\node[anchor=north west] at (1.5,4.3,0) {$Q_2$};
\node[anchor=north west] at (2.5,4.8,0) {$Q_3$};
\node[anchor=north west] at (3.5,5.3,0) {$Q_4$};
\node[anchor=north west] at (0.4,-0.8,0) {$T_3$};
\node[anchor=north west] at (1.4,-1.3,0) {$T_2$};
\node[anchor=north west] at (2.4,-1.8,0) {$T_1$};
\node[anchor=north] at (7,5,0) {$R_2$};
\end{tikzpicture}}\\
\small{
\begin{tikzpicture}[scale=0.425]
\draw[yslant=-0.5,pattern={north west lines},pattern color=red](4,3) rectangle +(-1,-3);
\draw[yslant=0.5,xslant=-1,pattern={north west lines},pattern color=red](7,0) rectangle +(-4,-1); 
\draw[yslant=0.5,pattern={north west lines},pattern color=red](4,-4) rectangle +(4,3);    
  \draw[yslant=-0.5] (1,0) grid (4,3);
  \draw[yslant=0.5] (4,-4) grid (8,-1);
  \draw[yslant=0.5,xslant=-1] (3,-1) grid (7,2);
\node[anchor=south west] at (-0.25,1.5,0) {$S_1$};
\node[anchor=south west] at (-0.25,0.5,0) {$S_2$};
\node[anchor=south west] at (-0.25,-0.5,0) {$S_3$};
\node[anchor=north west] at (0.5,3.8,0) {$Q_1$};
\node[anchor=north west] at (1.5,4.3,0) {$Q_2$};
\node[anchor=north west] at (2.5,4.8,0) {$Q_3$};
\node[anchor=north west] at (3.5,5.3,0) {$Q_4$};
\node[anchor=north west] at (0.4,-0.8,0) {$T_3$};
\node[anchor=north west] at (1.4,-1.3,0) {$T_2$};
\node[anchor=north west] at (2.4,-1.8,0) {$T_1$};
\node[anchor=north] at (7,5,0) {$R_3$};
\end{tikzpicture}}
&

\small{
\begin{tikzpicture}[scale=0.425]
\draw[yslant=0.5,ultra thick, black](4,-4) rectangle +(4,3);     
\draw[yslant=-0.5,ultra thick, black](4,3) rectangle +(-3,-3); 
\draw[yslant=0.5,xslant=-1,ultra thick, black](7,2) rectangle +(-4,-3);    
  \draw[yslant=-0.5] (1,0) grid (4,3);
  \draw[yslant=0.5] (4,-4) grid (8,-1);
  \draw[yslant=0.5,xslant=-1] (3,-1) grid (7,2);
\node[anchor=south west] at (-0.25,1.5,0) {$S_1$};
\node[anchor=south west] at (-0.25,0.5,0) {$S_2$};
\node[anchor=south west] at (-0.25,-0.5,0) {$S_3$};
\node[anchor=north west] at (0.5,3.8,0) {$Q_1$};
\node[anchor=north west] at (1.5,4.3,0) {$Q_2$};
\node[anchor=north west] at (2.5,4.8,0) {$Q_3$};
\node[anchor=north west] at (3.5,5.3,0) {$Q_4$};
\node[anchor=north west] at (0.4,-0.8,0) {$T_3$};
\node[anchor=north west] at (1.4,-1.3,0) {$T_2$};
\node[anchor=north west] at (2.4,-1.8,0) {$T_1$};
\node[anchor=north] at (7,5,0) {$R_4$};
\end{tikzpicture}}\\
\end{tabular}
    
\end{center}
\caption{The first candidate allocation $A=(S_2^*,T_1^*)$ for four agents and $\boldsymbol{d}=(3,3,4,4)$. We use blue to denote $S_2^* \subseteq S_2$ and red to denote $T_1^* \subseteq T_1$. We use a thick line to illustrate the cut $C=\{S^*_2 \cup R_4\}$. The allocation is valid and none of the bundles intersects with $R_4$. The cuts are symmetric, i.e. if we had $\mathcal{X}_S^*(C) =\mathcal{X}_S^*(M\setminus C)$ for cut $C=\{R_1 \cup R_2\}$ or/and $\mathcal{X}_T^*(C)=\mathcal{X}_T(C)$ for the corresponding cut $C$, we could construct the same allocation by renaming the bundles.}
\label{fig:4agents1}
\end{figure}
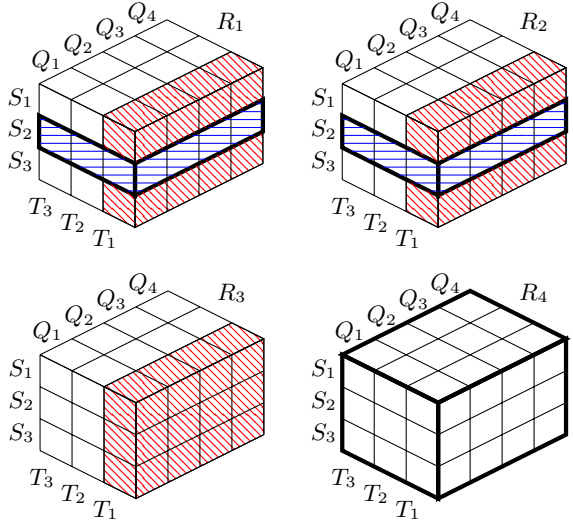
\end{center}

    \noindent{\bf Building the first candidate allocation}. We first consider agent $S$ (an agent with $3$ MMS bundles) and cut her MMS bundles by offering the cut $C=R_1 \cup R_2$. By Observation~\ref{obs:Cut_noOfDesiredSets} there are at least two bundles in the set $\mathcal{X}_{S}^*(C)$; let those be $S_1^*,S_2^* \in \mathcal{X}_{S}^*(C)$. We note that $S_1^*,S_2^*$ intersect with exactly two MMS bundles of $R$, so w.l.o.g. assume that those are $R_1,R_2$, i.e., assume that $\mathcal{X}_{S}^*(C)=\mathcal{X}_{S}(C)$. Therefore
     \begin{equation}
        \label{eq:4agentsCond0}
        S_j^*\cap R_3 = \emptyset \mbox{ and } S_j^*\cap R_4 = \emptyset, \mbox{ for } j \in \{1,2\}\,.
    \end{equation}

    Next, we offer a cut to agent $T$ in a way that a) ensures that a {\em whole} MMS bundle of agent $R$ remains intact and b) one of $S_1^*, S_2^*$ does not intersect with the Maximum Desired Half of $T$. A cut that serves this purpose is $C=S_2^* \cup R_4$. Now each of the sets $C$ and $M\setminus C$ intersects with exactly one of $S^*_1, S^*_2$ and with exactly one of $R_3, R_4$ which are the remaining whole bundles of $R$. W.l.o.g. assume that $\mathcal{X}_{T}^*(C)=\mathcal{X}_{T}(M \setminus C)$.\footnote{Even if it is w.l.o.g., we consider $\mathcal{X}_{T}^*(C)=\mathcal{X}_{T}(M \setminus C)$ that includes $S_3$ to avoid any confusion of the steps needed, because the case of $\mathcal{X}_{T}^*(C)=\mathcal{X}_{T}(C)$ is simpler and can be handled the same way.} 
    By Observation~\ref{obs:Cut_noOfDesiredSets} there are at least two bundles in the set $\mathcal{X}_{T}^*(C)$, let those be $T_1^*,T_2^* \in \mathcal{X}_{T}^*(C)$. Then, it holds that 
    \begin{equation}
    \label{eq:4agentsCond1}
        T_j^*\cap R_4 = \emptyset \mbox{ and } T_j^*\cap S_2^* = \emptyset, \mbox{ for } j \in \{1,2\}\,.
    \end{equation}
    
    {\bf First candidate allocation.} Consider the following partial allocation for $S$ and $T$: $A= (S^*_2,T^*_1)$ (see Figure \ref{fig:4agents1} for an illustration). By construction, both $S$ and $T$ value their allocated bundles by at least $1/2$. If there exist at least two MMS bundles of $Q$ that she values with at least $1/2$ {\em after the removal of $S^*_2\cup T^*_1$}, then the conditions of Corollary~\ref{Cor:cut-and-choose} are satisfied for $Q$ and $R$ for the remaining items (recall that $R$ values the remaining items by at least $1$ since they contain $R_4$). Hence by  employing Corollary~\ref{Cor:cut-and-choose} we can find an allocation of $M\setminus (S^*_2\cup T^*_1)$ to $Q$ and $R$ where they both value their bundles with at least $1/2$, and we are done.   
        
    So, suppose that this is not the case. Then there must be at least three MMS bundles of $Q$, let them be $Q_1,Q_2,Q_3$, such that $v_Q(Q_j\cap(S^*_2\cup T^*_1))\geq 1/2$ for $j\in\{1, 2, 3\}$. In other words, if we consider the cut $C=S^*_2\cup T^*_1$ for agent $Q$, then it is guaranteed that $Q_1^*,Q^*_2,Q^*_3 \in \mathcal{X}_{Q}(C)$. Since $Q_j^*\subseteq S^*_2\cup T^*_1$, for all $j \in \{1,2,3\}$ and also by \eqref{eq:4agentsCond0} and \eqref{eq:4agentsCond1} we conclude that
    \begin{equation}
        \label{eq:4agentsCond2}
        Q_j^* \cap T^*_2 = \emptyset \mbox{ and } Q_j^* \cap R_4= \emptyset , \mbox{ for } j \in \{1,2,3\}\,.
    \end{equation}

    \begin{figure}[h] 
\begin{center}
    \begin{tabular}{cc}
\small{
\begin{tikzpicture}[scale=0.425]
\draw[yslant=-0.5,ultra thick, black](4,3) rectangle +(-1,-3);
\draw[yslant=0.5,xslant=-1,ultra thick, black](7,0) rectangle +(-4,-1); 
\draw[yslant=0.5,ultra thick, black](4,-4) rectangle +(4,3);   
       
\draw[yslant=-0.5,ultra thick, black](3,2) rectangle +(-2,-1);
\draw[yslant=-0.5,ultra thick, white](3,2) rectangle +(0,-1);
\draw[yslant=0.5,pattern={crosshatch dots},pattern color=green] (4,-4) rectangle +(1,3);
\draw[yslant=-0.5,pattern={crosshatch dots},pattern color=green](4,3) rectangle +(-1,-3);
\draw[yslant=-0.5,pattern={crosshatch dots},pattern color=green](3,2) rectangle +(-2,-1);
\draw[yslant=-0.5,pattern={north west lines},pattern color=red](3,3) rectangle +(-1,-1);
\draw[yslant=-0.5,pattern={north west lines},pattern color=red](3,1) rectangle +(-1,-1);
\draw[yslant=0.5,xslant=-1,pattern={north west lines},pattern color=red](7,1) rectangle +(-4,-1); 
\draw[yslant=0.5,xslant=-1,pattern={crosshatch dots},pattern color=green](4,0) rectangle +(-1,-1);     
  \draw[yslant=-0.5] (1,0) grid (4,3);
  \draw[yslant=0.5] (4,-4) grid (8,-1);
  \draw[yslant=0.5,xslant=-1] (3,-1) grid (7,2);
\node[anchor=south west] at (-0.25,1.5,0) {$S_1$};
\node[anchor=south west] at (-0.25,0.5,0) {$S_2$};
\node[anchor=south west] at (-0.25,-0.5,0) {$S_3$};
\node[anchor=north west] at (0.5,3.8,0) {$Q_1$};
\node[anchor=north west] at (1.5,4.3,0) {$Q_2$};
\node[anchor=north west] at (2.5,4.8,0) {$Q_3$};
\node[anchor=north west] at (3.5,5.3,0) {$Q_4$};
\node[anchor=north west] at (0.4,-0.8,0) {$T_3$};
\node[anchor=north west] at (1.4,-1.3,0) {$T_2$};
\node[anchor=north west] at (2.4,-1.8,0) {$T_1$};
\node[anchor=north] at (7,5,0) {$R_1$};
\end{tikzpicture}}
&
\small{
\begin{tikzpicture}[scale=0.425]
\draw[yslant=-0.5,ultra thick, black](4,3) rectangle +(-1,-3);
\draw[yslant=0.5,xslant=-1,ultra thick, black](7,0) rectangle +(-4,-1); 
\draw[yslant=0.5,ultra thick, black](4,-4) rectangle +(4,3);   
       
\draw[yslant=-0.5,ultra thick, black](3,2) rectangle +(-2,-1);
\draw[yslant=-0.5,ultra thick, white](3,2) rectangle +(0,-1);
\draw[yslant=0.5,pattern={crosshatch dots},pattern color=green] (4,-4) rectangle +(1,3);
\draw[yslant=-0.5,pattern={crosshatch dots},pattern color=green](4,3) rectangle +(-1,-3);
\draw[yslant=-0.5,pattern={crosshatch dots},pattern color=green](3,2) rectangle +(-2,-1);
\draw[yslant=-0.5,pattern={north west lines},pattern color=red](3,3) rectangle +(-1,-1);
\draw[yslant=-0.5,pattern={north west lines},pattern color=red](3,1) rectangle +(-1,-1);
\draw[yslant=0.5,xslant=-1,pattern={north west lines},pattern color=red](7,1) rectangle +(-4,-1); 
\draw[yslant=0.5,xslant=-1,pattern={crosshatch dots},pattern color=green](4,0) rectangle +(-1,-1);    
  \draw[yslant=-0.5] (1,0) grid (4,3);
  \draw[yslant=0.5] (4,-4) grid (8,-1);
  \draw[yslant=0.5,xslant=-1] (3,-1) grid (7,2);
\node[anchor=south west] at (-0.25,1.5,0) {$S_1$};
\node[anchor=south west] at (-0.25,0.5,0) {$S_2$};
\node[anchor=south west] at (-0.25,-0.5,0) {$S_3$};
\node[anchor=north west] at (0.5,3.8,0) {$Q_1$};
\node[anchor=north west] at (1.5,4.3,0) {$Q_2$};
\node[anchor=north west] at (2.5,4.8,0) {$Q_3$};
\node[anchor=north west] at (3.5,5.3,0) {$Q_4$};
\node[anchor=north west] at (0.4,-0.8,0) {$T_3$};
\node[anchor=north west] at (1.4,-1.3,0) {$T_2$};
\node[anchor=north west] at (2.4,-1.8,0) {$T_1$};
\node[anchor=north] at (7,5,0) {$R_2$};
\end{tikzpicture}}\\
\small{
\begin{tikzpicture}[scale=0.425]
\draw[yslant=-0.5,ultra thick, black](4,3) rectangle +(-1,-3);
\draw[yslant=0.5,xslant=-1,ultra thick, black](7,0) rectangle +(-4,-1); 
\draw[yslant=0.5,ultra thick, black](4,-4) rectangle +(4,3);   
       
\draw[yslant=-0.5,pattern={north west lines},pattern color=red](3,3) rectangle +(-1,-3);
\draw[yslant=0.5,xslant=-1,pattern={north west lines},pattern color=red](7,1) rectangle +(-4,-1); 
\draw[yslant=-0.5,pattern={crosshatch dots},pattern color=green](4,3) rectangle +(-1,-3);
\draw[yslant=0.5,xslant=-1,pattern={crosshatch dots},pattern color=green](4,0) rectangle +(-1,-1); 
\draw[yslant=0.5,pattern={crosshatch dots},pattern color=green](4,-4) rectangle +(1,3);    
  \draw[yslant=-0.5] (1,0) grid (4,3);
  \draw[yslant=0.5] (4,-4) grid (8,-1);
  \draw[yslant=0.5,xslant=-1] (3,-1) grid (7,2);
\node[anchor=south west] at (-0.25,1.5,0) {$S_1$};
\node[anchor=south west] at (-0.25,0.5,0) {$S_2$};
\node[anchor=south west] at (-0.25,-0.5,0) {$S_3$};
\node[anchor=north west] at (0.5,3.8,0) {$Q_1$};
\node[anchor=north west] at (1.5,4.3,0) {$Q_2$};
\node[anchor=north west] at (2.5,4.8,0) {$Q_3$};
\node[anchor=north west] at (3.5,5.3,0) {$Q_4$};
\node[anchor=north west] at (0.4,-0.8,0) {$T_3$};
\node[anchor=north west] at (1.4,-1.3,0) {$T_2$};
\node[anchor=north west] at (2.4,-1.8,0) {$T_1$};
\node[anchor=north] at (7,5,0) {$R_3$};
\end{tikzpicture}}
&

\small{
\begin{tikzpicture}[scale=0.425]

  \draw[yslant=-0.5] (1,0) grid (4,3);
  \draw[yslant=0.5] (4,-4) grid (8,-1);
  \draw[yslant=0.5,xslant=-1] (3,-1) grid (7,2);
\node[anchor=south west] at (-0.25,1.5,0) {$S_1$};
\node[anchor=south west] at (-0.25,0.5,0) {$S_2$};
\node[anchor=south west] at (-0.25,-0.5,0) {$S_3$};
\node[anchor=north west] at (0.5,3.8,0) {$Q_1$};
\node[anchor=north west] at (1.5,4.3,0) {$Q_2$};
\node[anchor=north west] at (2.5,4.8,0) {$Q_3$};
\node[anchor=north west] at (3.5,5.3,0) {$Q_4$};
\node[anchor=north west] at (0.4,-0.8,0) {$T_3$};
\node[anchor=north west] at (1.4,-1.3,0) {$T_2$};
\node[anchor=north west] at (2.4,-1.8,0) {$T_1$};
\node[anchor=north] at (7,5,0) {$R_4$};
\end{tikzpicture}}\\
\end{tabular}
    
\end{center}
\caption{The second candidate allocation $A'=(T_2^*,Q_1^*)$. We use red for bundle $T_2^* \subseteq T_2$ and green for $Q_1^*\subseteq Q_1$. The cut $C=\{T_1^* \cup S_2^*\}$ is shown with a thick line. We try to apply Corollary \ref{cor:2agents} for agents $S$ and $R$ and the set of items $M \setminus (T_2^*\cup Q_1^*)$. We could construct the same allocation for any $3$ bundles $Q_i^*$ by renaming the bundles.}
\label{fig:4agents2}
\end{figure}
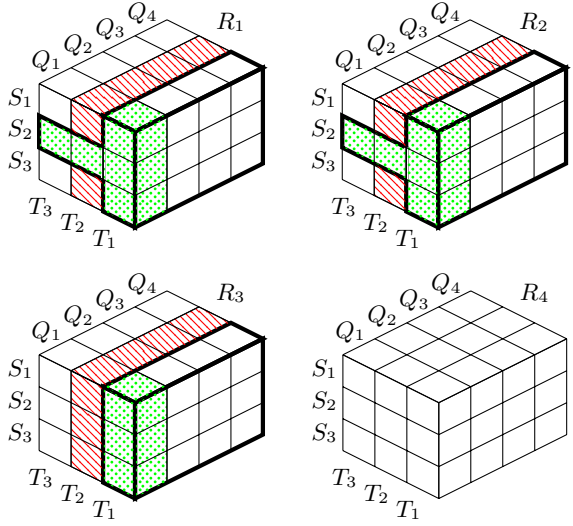

    {\bf Second candidate allocation.} Next, we consider the partial allocation $A'= (T^*_2,Q^*_1)$ for agents $T$ and $Q$ which by (\ref{eq:4agentsCond2}) is valid and both $v_T(T^*_2), v_Q(Q_1^*)$ are higher than $1/2$ (see Figure \ref{fig:4agents2} for an illustration). If there exist at least two MMS bundles of $S$, that she values with at least $1/2$ {\em after the removal of $T^*_2 \cup Q^*_1$}, then by employing Corollary \ref{Cor:cut-and-choose} we can find an allocation of $M\setminus (T^*_2\cup Q^*_1)$ to $S$ and $R$ where they both value their bundles with at least $1/2$, and we are done.

    Otherwise, it should be that for the cut $C=T^*_2 \cup Q^*_1$  there are two sets\footnote{We note that it is not necessarily the case that $S'_1\subseteq S_1$ or $S'_2\subseteq S_2$ although this is how it is depicted in the figures for the sake of exposition.} $S_1',S_2' \in \mathcal{X}_{S}(C)$. 
    Since for any $j \in \{1,2\}$, $S_j'\subseteq T^*_2 \cup Q^*_1$, by \eqref{eq:4agentsCond1} and \eqref{eq:4agentsCond2} we conclude
    \begin{equation}
        \label{eq:4agentsCond3}
        S_j' \cap Q^*_2 = \emptyset, S_j' \cap Q^*_3 = \emptyset \mbox{ and } S_j' \cap R_4 = \emptyset, \mbox{ for } j \in \{1,2\}\,.
    \end{equation}

    {\bf Third candidate allocation.} Consider now the partial allocation $A''= (S'_1,Q^*_2)$ for $S$ and $Q$, which is valid (due to (\ref{eq:4agentsCond3})) and both agents value their allocated bundles by at least $1/2$ (see Figure \ref{fig:4agents3} for an illustration).
    Again, if there exist at least two MMS bundles of $T$, that she values with at least $1/2$ after the removal of $S'_1\cup Q^*_2$, by using Corollary \ref{Cor:cut-and-choose} on $T$ and $R$, we are done (similarly as in the previous cases).

\begin{figure}[H]
\begin{center}

    \begin{tabular}{cc}
\small{
\begin{tikzpicture}[scale=0.425]

\draw[yslant=0.5,ultra thick, black] (4,-4) rectangle +(1,3);
\draw[yslant=-0.5,ultra thick, black](4,3) rectangle +(-2,-3);
\draw[yslant=-0.5,ultra thick, black](2,2) rectangle +(-1,-1);
\draw[yslant=-0.5,ultra thick, white](3,1) rectangle +(-1,1);
\draw[yslant=0.5,xslant=-1,ultra thick, black](7,1) rectangle +(-4,-1); 
\draw[yslant=0.5,xslant=-1,ultra thick, black](4,0) rectangle +(-1,-1); 
\draw[yslant=0.5,xslant=-1,ultra thick, white](3,0) rectangle +(1,0); 

\draw[yslant=0.5,pattern={crosshatch dots},pattern color=green] (5,-4) rectangle +(1,3);
\draw[yslant=0.5,xslant=-1,pattern={crosshatch dots},pattern color=green](5,0) rectangle +(-1,-1); 
\draw[yslant=0.5,pattern={horizontal lines},pattern color=blue](4,-2) rectangle +(1,1);    
\draw[yslant=0.5,xslant=-1,pattern={horizontal lines},pattern color=blue](7,1) rectangle +(-4,-1);
\draw[yslant=0.5,xslant=-1,pattern={horizontal lines},pattern color=blue](4,0) rectangle +(-1,-1);
\draw[yslant=-0.5,pattern={horizontal lines},pattern color=blue](4,3) rectangle +(-2,-1);    
  \draw[yslant=-0.5] (1,0) grid (4,3);
  \draw[yslant=0.5] (4,-4) grid (8,-1);
  \draw[yslant=0.5,xslant=-1] (3,-1) grid (7,2);
\node[anchor=south west] at (-0.25,1.5,0) {$S_1$};
\node[anchor=south west] at (-0.25,0.5,0) {$S_2$};
\node[anchor=south west] at (-0.25,-0.5,0) {$S_3$};
\node[anchor=north west] at (0.5,3.8,0) {$Q_1$};
\node[anchor=north west] at (1.5,4.3,0) {$Q_2$};
\node[anchor=north west] at (2.5,4.8,0) {$Q_3$};
\node[anchor=north west] at (3.5,5.3,0) {$Q_4$};
\node[anchor=north west] at (0.4,-0.8,0) {$T_3$};
\node[anchor=north west] at (1.4,-1.3,0) {$T_2$};
\node[anchor=north west] at (2.4,-1.8,0) {$T_1$};
\node[anchor=north] at (7,5,0) {$R_1$};
\end{tikzpicture}}
&
\small{
\begin{tikzpicture}[scale=0.425]

\draw[yslant=0.5,ultra thick, black] (4,-4) rectangle +(1,3);
\draw[yslant=-0.5,ultra thick, black](4,3) rectangle +(-2,-3);
\draw[yslant=-0.5,ultra thick, black](2,2) rectangle +(-1,-1);
\draw[yslant=-0.5,ultra thick, white](3,1) rectangle +(-1,1);
\draw[yslant=0.5,xslant=-1,ultra thick, black](7,1) rectangle +(-4,-1); 
\draw[yslant=0.5,xslant=-1,ultra thick, black](4,0) rectangle +(-1,-1); 
\draw[yslant=0.5,xslant=-1,ultra thick, white](3,0) rectangle +(1,0); 

\draw[yslant=0.5,pattern={crosshatch dots},pattern color=green] (5,-4) rectangle +(1,3);

\draw[yslant=0.5,xslant=-1,pattern={crosshatch dots},pattern color=green](5,0) rectangle +(-1,-1); 
\draw[yslant=0.5,pattern={horizontal lines},pattern color=blue](4,-2) rectangle +(1,1);    
\draw[yslant=0.5,xslant=-1,pattern={horizontal lines},pattern color=blue](7,1) rectangle +(-4,-1);
\draw[yslant=0.5,xslant=-1,pattern={horizontal lines},pattern color=blue](4,0) rectangle +(-1,-1);
\draw[yslant=-0.5,pattern={horizontal lines},pattern color=blue](4,3) rectangle +(-2,-1);  
  \draw[yslant=-0.5] (1,0) grid (4,3);
  \draw[yslant=0.5] (4,-4) grid (8,-1);
  \draw[yslant=0.5,xslant=-1] (3,-1) grid (7,2);
\node[anchor=south west] at (-0.25,1.5,0) {$S_1$};
\node[anchor=south west] at (-0.25,0.5,0) {$S_2$};
\node[anchor=south west] at (-0.25,-0.5,0) {$S_3$};
\node[anchor=north west] at (0.5,3.8,0) {$Q_1$};
\node[anchor=north west] at (1.5,4.3,0) {$Q_2$};
\node[anchor=north west] at (2.5,4.8,0) {$Q_3$};
\node[anchor=north west] at (3.5,5.3,0) {$Q_4$};
\node[anchor=north west] at (0.4,-0.8,0) {$T_3$};
\node[anchor=north west] at (1.4,-1.3,0) {$T_2$};
\node[anchor=north west] at (2.4,-1.8,0) {$T_1$};
\node[anchor=north] at (7,5,0) {$R_2$};
\end{tikzpicture}}\\
\small{
\begin{tikzpicture}[scale=0.425]
\draw[yslant=0.5,ultra thick, black] (4,-4) rectangle +(1,3);
\draw[yslant=-0.5,ultra thick, black](4,3) rectangle +(-2,-3);
\draw[yslant=0.5,xslant=-1,ultra thick, black](7,1) rectangle +(-4,-1); 
\draw[yslant=0.5,xslant=-1,ultra thick, black](4,0) rectangle +(-1,-1); 
\draw[yslant=0.5,xslant=-1,ultra thick, white](3,0) rectangle +(1,0); 

\draw[yslant=0.5,pattern={crosshatch dots},pattern color=green] (5,-4) rectangle +(1,3);

\draw[yslant=0.5,xslant=-1,pattern={crosshatch dots},pattern color=green](5,0) rectangle +(-1,-1); 
\draw[yslant=0.5,pattern={horizontal lines},pattern color=blue](4,-2) rectangle +(1,1);    
\draw[yslant=0.5,xslant=-1,pattern={horizontal lines},pattern color=blue](7,1) rectangle +(-4,-1);
\draw[yslant=0.5,xslant=-1,pattern={horizontal lines},pattern color=blue](4,0) rectangle +(-1,-1);
\draw[yslant=-0.5,pattern={horizontal lines},pattern color=blue](4,3) rectangle +(-2,-1);
  \draw[yslant=-0.5] (1,0) grid (4,3);
  \draw[yslant=0.5] (4,-4) grid (8,-1);
  \draw[yslant=0.5,xslant=-1] (3,-1) grid (7,2);
\node[anchor=south west] at (-0.25,1.5,0) {$S_1$};
\node[anchor=south west] at (-0.25,0.5,0) {$S_2$};
\node[anchor=south west] at (-0.25,-0.5,0) {$S_3$};
\node[anchor=north west] at (0.5,3.8,0) {$Q_1$};
\node[anchor=north west] at (1.5,4.3,0) {$Q_2$};
\node[anchor=north west] at (2.5,4.8,0) {$Q_3$};
\node[anchor=north west] at (3.5,5.3,0) {$Q_4$};
\node[anchor=north west] at (0.4,-0.8,0) {$T_3$};
\node[anchor=north west] at (1.4,-1.3,0) {$T_2$};
\node[anchor=north west] at (2.4,-1.8,0) {$T_1$};
\node[anchor=north] at (7,5,0) {$R_3$};
\end{tikzpicture}}
&

\small{
\begin{tikzpicture}[scale=0.425]

  \draw[yslant=-0.5] (1,0) grid (4,3);
  \draw[yslant=0.5] (4,-4) grid (8,-1);
  \draw[yslant=0.5,xslant=-1] (3,-1) grid (7,2);
\node[anchor=south west] at (-0.25,1.5,0) {$S_1$};
\node[anchor=south west] at (-0.25,0.5,0) {$S_2$};
\node[anchor=south west] at (-0.25,-0.5,0) {$S_3$};
\node[anchor=north west] at (0.5,3.8,0) {$Q_1$};
\node[anchor=north west] at (1.5,4.3,0) {$Q_2$};
\node[anchor=north west] at (2.5,4.8,0) {$Q_3$};
\node[anchor=north west] at (3.5,5.3,0) {$Q_4$};
\node[anchor=north west] at (0.4,-0.8,0) {$T_3$};
\node[anchor=north west] at (1.4,-1.3,0) {$T_2$};
\node[anchor=north west] at (2.4,-1.8,0) {$T_1$};
\node[anchor=north] at (7,5,0) {$R_4$};
\end{tikzpicture}}\\
\end{tabular}
    
\end{center}

\caption{The third candidate allocation $A''=(S_1',Q_2^*)$. We use blue for bundle $S_1'  \subseteq S_1$ and green color for $Q_2^*\subseteq Q_2$. The cut $C=\{T_2^* \cup Q_1^*\}$ is shown with a thick line. Note that the allocation is valid and none of the bundles intersects with $R_4$. We try to apply Corollary \ref{cor:2agents} for agents $T$ and $R$ and the set of items $M \setminus (T_2^*\cup Q_1^*)$. We could construct a similar allocation for any bundle $S_i'$ and bundle $Q_j$ by renaming the bundles such that $S_i' \cap Q_j^* = \emptyset$ and $S_i' \cap R_4 = \emptyset$.}
\label{fig:4agents3}
\end{figure}
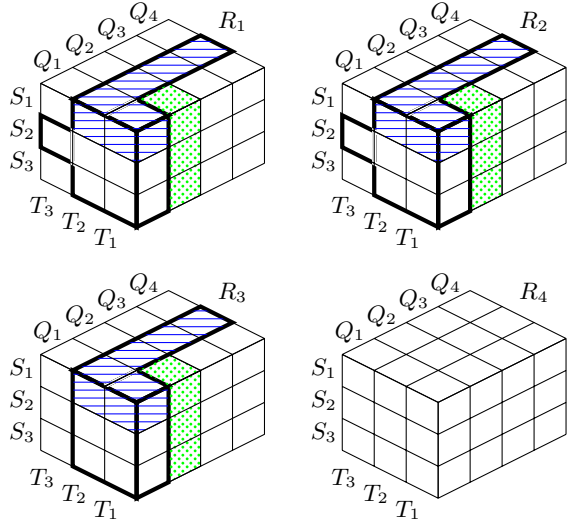

   Otherwise, it should be that for the cut $C=S'_1\cup Q^*_2$ there exist two sets $T_1',T_2' \in \mathcal{X}_{T}(C)$. 
    Since for any $j \in \{1,2\}$, $T_j'\subseteq S'_1\cup Q^*_2$, by \eqref{eq:4agentsCond2} and \eqref{eq:4agentsCond3} it holds that 
      \begin{equation}
        \label{eq:4agentsCond4}
        T_j' \cap Q^*_3 = \emptyset, T_j' \cap S'_2 = \emptyset \mbox{ and } T_j' \cap R_4 = \emptyset, \mbox{ for } j \in \{1,2\}\,.
    \end{equation}

\begin{figure}[t]
\begin{center}

    \begin{tabular}{cc}
\small{
\begin{tikzpicture}[scale=0.425]
\draw[yslant=0.5,ultra thick, black] (5,-4) rectangle +(1,3);
\draw[yslant=0.5,ultra thick, black](4,-2) rectangle +(1,1); \draw[yslant=0.5,xslant=-1,ultra thick, black](5,1) rectangle +(-2,-2);   

\draw[yslant=0.5,xslant=-1,ultra thick, black](7,1) rectangle +(-2,-1); 
\draw[yslant=-0.5,ultra thick, black](4,3) rectangle +(-2,-1);

\draw[yslant=0.5,xslant=-1,ultra thick, white](5,1) rectangle +(0,-1); 

\draw[yslant=0.5,ultra thick, white] (5,-2) rectangle +(0,1);

\draw[yslant=0.5,pattern={crosshatch dots},pattern color=green] (6,-4) rectangle +(1,3);
\draw[yslant=0.5,pattern={north west lines},pattern color=red] (5,-4) rectangle +(1,3);
\draw[yslant=0.5,pattern={north west lines},pattern color=red] (4,-2) rectangle +(2,1);
\draw[yslant=0.5,xslant=-1,pattern={crosshatch dots},pattern color=green](6,0) rectangle +(-1,-1); 
\draw[yslant=0.5,pattern={horizontal lines},pattern color=blue](4,-3) rectangle +(1,1);    
\draw[yslant=0.5,xslant=-1,pattern={north west lines},pattern color=red](5,0) rectangle +(-2,-1);
\draw[yslant=-0.5,pattern={horizontal lines},pattern color=blue](4,2) rectangle +(-3,-1);     
\draw[yslant=-0.5,pattern={north west lines},pattern color=red](4,3) rectangle +(-1,-1);    
  \draw[yslant=-0.5] (1,0) grid (4,3);
  \draw[yslant=0.5] (4,-4) grid (8,-1);
  \draw[yslant=0.5,xslant=-1] (3,-1) grid (7,2);
\node[anchor=south west] at (-0.25,1.5,0) {$S_1$};
\node[anchor=south west] at (-0.25,0.5,0) {$S_2$};
\node[anchor=south west] at (-0.25,-0.5,0) {$S_3$};
\node[anchor=north west] at (0.5,3.8,0) {$Q_1$};
\node[anchor=north west] at (1.5,4.3,0) {$Q_2$};
\node[anchor=north west] at (2.5,4.8,0) {$Q_3$};
\node[anchor=north west] at (3.5,5.3,0) {$Q_4$};
\node[anchor=north west] at (0.4,-0.8,0) {$T_3$};
\node[anchor=north west] at (1.4,-1.3,0) {$T_2$};
\node[anchor=north west] at (2.4,-1.8,0) {$T_1$};
\node[anchor=north] at (7,5,0) {$R_1$};
\end{tikzpicture}}
&
\small{
\begin{tikzpicture}[scale=0.425]
\draw[yslant=0.5,ultra thick, black] (5,-4) rectangle +(1,3);
\draw[yslant=0.5,ultra thick, black](4,-2) rectangle +(1,1); \draw[yslant=0.5,xslant=-1,ultra thick, black](5,1) rectangle +(-2,-2);   

\draw[yslant=0.5,xslant=-1,ultra thick, black](7,1) rectangle +(-2,-1); 
\draw[yslant=-0.5,ultra thick, black](4,3) rectangle +(-2,-1);

\draw[yslant=0.5,xslant=-1,ultra thick, white](5,1) rectangle +(0,-1); 

\draw[yslant=0.5,ultra thick, white] (5,-2) rectangle +(0,1);
\draw[yslant=0.5,pattern={crosshatch dots},pattern color=green] (6,-4) rectangle +(1,3);
\draw[yslant=0.5,pattern={north west lines},pattern color=red] (5,-4) rectangle +(1,3);
\draw[yslant=0.5,pattern={north west lines},pattern color=red] (4,-2) rectangle +(2,1);
\draw[yslant=0.5,xslant=-1,pattern={crosshatch dots},pattern color=green](6,0) rectangle +(-1,-1); 
\draw[yslant=0.5,pattern={horizontal lines},pattern color=blue](4,-3) rectangle +(1,1);    
\draw[yslant=0.5,xslant=-1,pattern={north west lines},pattern color=red](5,0) rectangle +(-2,-1);
\draw[yslant=-0.5,pattern={horizontal lines},pattern color=blue](4,2) rectangle +(-3,-1);     
\draw[yslant=-0.5,pattern={north west lines},pattern color=red](4,3) rectangle +(-1,-1);    
  \draw[yslant=-0.5] (1,0) grid (4,3);
  \draw[yslant=0.5] (4,-4) grid (8,-1);
  \draw[yslant=0.5,xslant=-1] (3,-1) grid (7,2);
\node[anchor=south west] at (-0.25,1.5,0) {$S_1$};
\node[anchor=south west] at (-0.25,0.5,0) {$S_2$};
\node[anchor=south west] at (-0.25,-0.5,0) {$S_3$};
\node[anchor=north west] at (0.5,3.8,0) {$Q_1$};
\node[anchor=north west] at (1.5,4.3,0) {$Q_2$};
\node[anchor=north west] at (2.5,4.8,0) {$Q_3$};
\node[anchor=north west] at (3.5,5.3,0) {$Q_4$};
\node[anchor=north west] at (0.4,-0.8,0) {$T_3$};
\node[anchor=north west] at (1.4,-1.3,0) {$T_2$};
\node[anchor=north west] at (2.4,-1.8,0) {$T_1$};
\node[anchor=north] at (7,5,0) {$R_2$};
\end{tikzpicture}}\\
\small{
\begin{tikzpicture}[scale=0.425]
\draw[yslant=0.5,ultra thick, black] (5,-4) rectangle +(1,3);
\draw[yslant=0.5,ultra thick, black](4,-2) rectangle +(1,1); \draw[yslant=0.5,xslant=-1,ultra thick, black](5,1) rectangle +(-2,-2);   

\draw[yslant=0.5,xslant=-1,ultra thick, black](7,1) rectangle +(-2,-1); 
\draw[yslant=-0.5,ultra thick, black](4,3) rectangle +(-2,-1);

\draw[yslant=0.5,xslant=-1,ultra thick, white](5,1) rectangle +(0,-1); 

\draw[yslant=0.5,ultra thick, white] (5,-2) rectangle +(0,1);
\draw[yslant=0.5,pattern={crosshatch dots},pattern color=green] (6,-4) rectangle +(1,3);
\draw[yslant=0.5,pattern={north west lines},pattern color=red] (5,-4) rectangle +(1,3);
\draw[yslant=0.5,pattern={north west lines},pattern color=red] (4,-2) rectangle +(2,1);
\draw[yslant=0.5,xslant=-1,pattern={crosshatch dots},pattern color=green](6,0) rectangle +(-1,-1); 
\draw[yslant=0.5,pattern={horizontal lines},pattern color=blue](4,-3) rectangle +(1,1);    
\draw[yslant=0.5,xslant=-1,pattern={north west lines},pattern color=red](5,0) rectangle +(-2,-1);
\draw[yslant=-0.5,pattern={horizontal lines},pattern color=blue](4,2) rectangle +(-2,-1);     
\draw[yslant=-0.5,pattern={north west lines},pattern color=red](4,3) rectangle +(-1,-1);    
  \draw[yslant=-0.5] (1,0) grid (4,3);
  \draw[yslant=0.5] (4,-4) grid (8,-1);
  \draw[yslant=0.5,xslant=-1] (3,-1) grid (7,2);
\node[anchor=south west] at (-0.25,1.5,0) {$S_1$};
\node[anchor=south west] at (-0.25,0.5,0) {$S_2$};
\node[anchor=south west] at (-0.25,-0.5,0) {$S_3$};
\node[anchor=north west] at (0.5,3.8,0) {$Q_1$};
\node[anchor=north west] at (1.5,4.3,0) {$Q_2$};
\node[anchor=north west] at (2.5,4.8,0) {$Q_3$};
\node[anchor=north west] at (3.5,5.3,0) {$Q_4$};
\node[anchor=north west] at (0.4,-0.8,0) {$T_3$};
\node[anchor=north west] at (1.4,-1.3,0) {$T_2$};
\node[anchor=north west] at (2.4,-1.8,0) {$T_1$};
\node[anchor=north] at (7,5,0) {$R_3$};
\end{tikzpicture}}
&

\small{
\begin{tikzpicture}[scale=0.425]
\draw[yslant=0.5,pattern={vertical lines},pattern color=magenta] (4,-4) rectangle +(4,3);
\draw[yslant=-0.5,pattern={vertical lines},pattern color=magenta](4,3) rectangle +(-3,-3);
\draw[yslant=0.5,xslant=-1,pattern={vertical lines},pattern color=magenta] (7,2) rectangle +(-4,-3);
    
  \draw[yslant=-0.5] (1,0) grid (4,3);
  \draw[yslant=0.5] (4,-4) grid (8,-1);
  \draw[yslant=0.5,xslant=-1] (3,-1) grid (7,2);
\node[anchor=south west] at (-0.25,1.5,0) {$S_1$};
\node[anchor=south west] at (-0.25,0.5,0) {$S_2$};
\node[anchor=south west] at (-0.25,-0.5,0) {$S_3$};
\node[anchor=north west] at (0.5,3.8,0) {$Q_1$};
\node[anchor=north west] at (1.5,4.3,0) {$Q_2$};
\node[anchor=north west] at (2.5,4.8,0) {$Q_3$};
\node[anchor=north west] at (3.5,5.3,0) {$Q_4$};
\node[anchor=north west] at (0.4,-0.8,0) {$T_3$};
\node[anchor=north west] at (1.4,-1.3,0) {$T_2$};
\node[anchor=north west] at (2.4,-1.8,0) {$T_1$};
\node[anchor=north] at (7,5,0) {$R_4$};
\end{tikzpicture}}\\
\end{tabular}
    
\end{center}
\caption{The final allocation $A^*=(S_2',T_1',Q_3^*,R_4)$. We use blue for bundle $S_2' \subseteq S_2$, red for bundle $T_1' \subseteq T_1$, and green for $Q_3^* \subseteq Q_3$. We use magenta to denote bundle $R_4$. The cut is shown with a thick line $C=\{S_1',Q_3^*\}$ The allocation is valid.}
\label{fig:4agents4}
\end{figure}
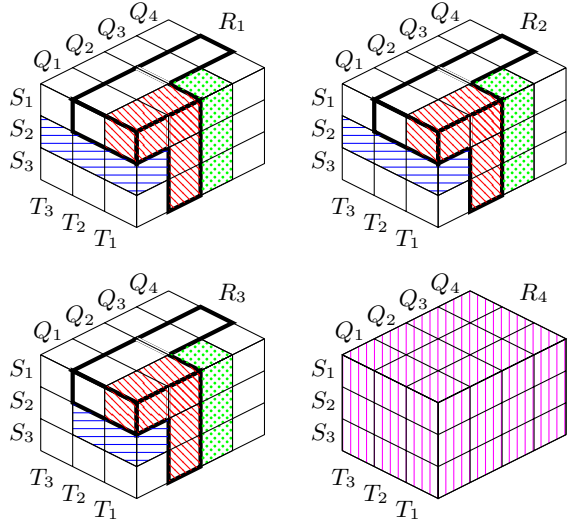

    {\bf Final allocation.} Finally, we offer the allocation $A^*=(S_2',T_1',Q_3^*,R_4)$ which is valid and each agent has value at least $1/2$ for their allocated bundle. Hence, the lemma follows. Figure \ref{fig:4agents4} illustrates the allocation.
\end{proof}

As a corollary of Lemma~\ref{Lemma:4agents}, a 1/2-MMS allocation always
exists for four agents (by Observation 1); this bound is also
tight \cite{GhodsiHSSY22}. 
\begin{corollary}
\label{cor:4agents}
    A $1/2$-MMS allocation exists for four agents with subadditive valuation functions. 
\end{corollary}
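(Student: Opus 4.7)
The plan is to derive Corollary~\ref{cor:4agents} directly from Lemma~\ref{Lemma:4agents} using Observation~\ref{obs:simpleReduction}. There is essentially no further technical work to do: the heavy lifting has already been done by the four-step candidate-allocation argument in the proof of Lemma~\ref{Lemma:4agents}.

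More concretely, Lemma~\ref{Lemma:4agents} guarantees the existence of a $1/2$-MMS$(3,3,4,4)$ allocation for any four subadditive agents. The standard $\alpha$-MMS notion for $n=4$ agents coincides with $\alpha$-MMS$(\mathbf{d})$ for the uniform vector $\mathbf{d}=(4,4,4,4)$. I would now invoke Observation~\ref{obs:simpleReduction} with $\boldsymbol{\alpha}=\boldsymbol{\alpha}'=(1/2,1/2,1/2,1/2)$, $\mathbf{d}=(3,3,4,4)$, and $\mathbf{d}'=(4,4,4,4)$. Since $\mathbf{d}$ is pointwise at most $\mathbf{d}'$, the observation yields that any $1/2$-MMS$(3,3,4,4)$ allocation is also a $1/2$-MMS$(4,4,4,4)$ allocation, i.e.\ a $1/2$-MMS allocation in the standard sense.

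The only conceptual point worth flagging is the monotonicity of MMS values in the number of parts: $\mu_i^{d_i} \ge \mu_i^{d_i'}$ whenever $d_i \le d_i'$, which is exactly what Observation~\ref{obs:simpleReduction} formalizes (obtained by merging bundles of the finer partition and appealing to monotonicity of the valuation). Thus the bound $v_i(A_i)\ge \tfrac{1}{2}\mu_i^{3}$ or $v_i(A_i)\ge \tfrac{1}{2}\mu_i^{4}$ (depending on the agent) guaranteed by Lemma~\ref{Lemma:4agents} immediately implies $v_i(A_i)\ge \tfrac{1}{2}\mu_i^{4}$ for all four agents, which is the statement to prove.

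There is no real obstacle here; the proof is a two-line corollary. If anything, the only thing to be careful about is citing the right pieces in the right order and noting that $(3,3,4,4)\le(4,4,4,4)$ coordinatewise so that Observation~\ref{obs:simpleReduction} applies verbatim.
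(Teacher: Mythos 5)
Your proposal is correct and matches the paper's own derivation exactly: the paper obtains Corollary~\ref{cor:4agents} from Lemma~\ref{Lemma:4agents} by applying Observation~\ref{obs:simpleReduction} with $\mathbf{d}=(3,3,4,4)$ pointwise at most $\mathbf{d}'=(4,4,4,4)$. Nothing is missing.
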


\subsection{Many agents}
\label{sec:MoreAgents}
In this section, we demonstrate how our arguments developed in the previous sections can be useful towards proving positive results for the case of multiple agents. Indeed, we show the existence of $1/2$-MMS allocations for multiple agents, when they have one of two admissible valuation functions.

\begin{theorem}\label{thm:two-types}
    For every instance of $n$ agents, where each agent $i$ has a valuation function  $v_i \in \{v_S,v_T\}$, for any subadditive valuation functions $v_S,v_T$, there exists a $1/2$-MMS allocation.
\end{theorem}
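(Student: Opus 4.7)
The plan is to proceed by induction on $n$, combining the cut-based techniques of Sections~\ref{sec:2Subadd}--\ref{sec:4Subadd} with the structure available when there are only two valuation types. The base cases $n\leq 4$ follow directly from Corollaries~\ref{cor:2agents},~\ref{cor:3agents}, and~\ref{cor:4agents}, which hold for any subadditive valuations. For the inductive step with $n\geq 5$, let $n_S,n_T$ count the agents of each type; if $\min(n_S,n_T)=0$ all agents share one valuation and the result is immediate, so assume $n_S,n_T\geq 1$ and WLOG $n_S\leq n_T$. Normalize so that $\mu_S^n(M)=\mu_T^n(M)=1$ and fix MMS partitions $S_1,\dots,S_n$ and $T_1,\dots,T_n$.

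The central idea is to treat the two types as super-agents and apply a cut-and-choose argument, generalising Lemma~\ref{Lemma:2agents}. I would use the cut $C=T_1\cup\cdots\cup T_{n_S}$ (the union of $n_S$ entire $v_T$-MMS bundles) and invoke the Maximum Desired Half property for type~$S$: by Observation~\ref{obs:Cut_noOfDesiredSets}, at least $\lceil n/2\rceil\geq n_S$ intersections $S_k\cap C$ or $S_k\setminus C$ carry $v_S$-value at least $1/2$. In the favorable case $\mathcal{X}_S^*(C)=\mathcal{X}_S(C)$, the allocation is immediate: partition $C$ into $n_S$ bundles using the good intersections (absorbing the leftover items of $C$ into one of them by monotonicity of $v_S$), hand them to the type-$S$ agents, and give each of the $n_T$ intact $T$-bundles of $M\setminus C$ to a type-$T$ agent. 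No induction is needed in this branch.

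The main obstacle is the complementary case $\mathcal{X}_S^*(C)=\mathcal{X}_S(M\setminus C)$: the type-$S$ good halves lie in $M\setminus C$ (which carries $n_T>n_S$ $T$-MMS bundles) and a direct role swap is blocked because $|C|=n_S<n_T$. Here I would reduce to an $(n-1)$-agent instance by isolating a single bundle of the form $B\subseteq S_{k^*}\cap T_{i^*}$ with $v_T(B)\geq 1/2$ (or symmetrically $v_S(B)\geq 1/2$) and allocating it to an agent of the corresponding type. Preserving both MMS invariants on $M\setminus B$---namely $\mu_S^{n-1}(M\setminus B)\geq 1$ and $\mu_T^{n-1}(M\setminus B)\geq 1$---follows from the cell structure: merging $S_{k^*}\setminus B$ with another $S_j$ (and symmetrically on the $T$ side) yields $n-1$ parts each of value at least $1$ by monotonicity, after which the inductive hypothesis closes the argument on $M\setminus B$. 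The hard part is guaranteeing the existence of a qualifying cell in this case: my plan is to apply a secondary Maximum Desired Half cut to type~$T$---say, the cut $C'=S_k$ for each $S_k$ whose good half lies in $M\setminus C$---to extract such a cell, echoing the iterative two-cut constructions of Lemmas~\ref{Lemma:threehalfs} and~\ref{Lemma:4agents}. Closing the gap between the two cuts through careful case analysis is where I expect the bulk of the technical effort to sit.
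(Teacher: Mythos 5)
Your favorable branch ($\mathcal{X}_S^*(C)=\mathcal{X}_S(C)$) is correct, but the complementary branch contains a genuine gap that your proposed repair cannot close. You want to extract a single cell $B\subseteq S_{k^*}\cap T_{i^*}$ worth at least $1/2$ to one of the two valuations and peel off one agent. No such cell need exist: take $v_S$ and $v_T$ additive and uniform over an $n\times n$ grid of items, with the $S_k$ as rows and the $T_i$ as columns. Every cell is a single item of value $1/n$, yet for your cut $C=T_1\cup\dots\cup T_{n_S}$ (with $n_S<n/2$) every $S_k\cap C$ has value $n_S/n<1/2$, so this instance lands squarely in the complementary branch, and your secondary cuts $C'=S_k$ also return their maximum desired half on the complement side without producing any qualifying cell. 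The root cause is twofold: your cut is unbalanced ($n_S$ whole $T$-bundles on one side, $n_T$ on the other), and you apply the Maximum Desired Half to the \emph{minority} type, which can only absorb $n_S$ of the $\lceil n/2\rceil$ guaranteed good halves.

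The paper removes the dichotomy rather than case-splitting on it. With $S$ taken to be the \emph{majority} type, it cuts along $C=\bigcup_{j=1}^{\lfloor n/2\rfloor}T_j$, so that \emph{both} $C$ and $M\setminus C$ contain at least $\lfloor n/2\rfloor$ whole $T$-bundles, and applies the Maximum Desired Half to type $S$. It then allocates $n'=\min\{\lvert\mathcal{X}_S^*(C)\rvert,\, n_S\}\ge\lceil n/2\rceil$ disjoint good halves to $n'$ type-$S$ agents in one shot. Whichever side the maximum desired half lies on, the union $M'$ of the allocated bundles avoids at least $\lfloor n/2\rfloor\ge n-n'$ whole $T$-bundles and $n-n'$ whole $S$-bundles, so every one of the $n-n'$ remaining agents (of either type) still satisfies $\mu_i^{n-n'}(M\setminus M')\ge\mu_i^n(M)$, and the induction --- which now removes at least half the agents per step, with base case $n=2$ --- goes through. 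To salvage your write-up, replace your cut by the balanced one and aim the Maximum Desired Half at the majority type; the single-cell extraction should be abandoned.
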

\begin{proof}

    The proof is by induction on the number of agents. At the induction step we guarantee that $\mu_i$ for each remaining agent $i$ does not decrease, so at the end they will receive at least $1/2$ of their original $\mu_i$ value. For $n=2$ the theorem follows by Corollary \ref{cor:2agents}. 
    Let's assume that the statement holds for less than $n$, we will show that it also works for $n$ agents. Let $n_S$ and $n_T$ be the number of agents with valuation function $v_S$ (agents of type $S$), and $v_T$ (agents of type $T$). Note that $n_S + n_T = n$.  W. l. o. g. assume that $n_S\ge n_T$ and hence $n_S \ge \left\lceil \frac{n}{2}\right\rceil$ and $n_T \le \left\lfloor \frac{n}{2}\right\rfloor$. Let also $S_j$ and $T_j$ be the $j$-th MMS bundle of an agent of type $S$ and $T$, respectively.
    
    We consider the cut $C=\bigcup_{j=1}^{\lfloor n/2\rfloor}T_j$, i.e., the union of the first $\lfloor n/2\rfloor$ MMS bundles of the agents of type $T$; note that both $C$ and $M\setminus C$ contain at least $\lfloor n/2\rfloor$ such MMS bundles.
    Then, we consider the maximum desired half, $\mathcal{X}_{S}^{*}(C)$, of agents of type $S$ over $C$. 
    Let $n'=\min\{\lvert \mathcal{X}_{S}^{*}(C) \rvert,n_S\}$, and by Observation~\ref{obs:Cut_noOfDesiredSets}, $n' \ge \left\lceil\frac{n}{2}\right\rceil$. This implies that there exist $n'$ mutually disjoint bundles, each of which has value at least $1/2$ for the agents of type $S$. Suppose that we assign those bundles to $n'\leq n_S$ agents of type $S$. 
    
    Let $M'$ be the union of those bundles, then it holds that $M'$ is disjoint with either $C$ or $M\setminus C$,  therefore $M'$ is disjoint with at least $\lfloor n/2\rfloor \geq n-n'$ MMS bundles of agents of type $T$. Moreover $M'$ is a subset of $n'$ MMS bundles of agents of type $S$, therefore $M'$ is disjoint with $n-n'$ MMS bundles of agents of type $S$. Altogether,  we are left with a reduced instance with $n-n'$ agents, where each remaining agent $i$  can partition the remaining items $M\setminus M'$ into at least $n-n'$ bundles of value at least $\mu_i^n(M)$, since $\mu_i^{n-n'}(M\setminus M')\geq \mu_i^n(M)$.  By the induction hypothesis there exists a $1/2$-MMS allocation for the reduced instance, and by combining it with the allocation of $M'$ to the $n'$ agents the proof is complete.
\end{proof}

 \section{$\boldsymbol{\alpha}$-MMS($\mathbf{d}$) for subadditive valuations}
\label{sec:Reductions}
In this section, we present a thorough study of conditions of existence (and non-existence) of $\boldsymbol{\alpha}$-MMS($\mathbf{d}$) allocations for various combinations of $\boldsymbol{\alpha}$ and $\mathbf{d}$. We provide two characterization results for three agents and several impossibility results for many agents. 

\subsection{Three Agents}
 In this section we consider three agents with subadditive valuations, and we fully characterize the conditions of existence of $1/2$-MMS$(\mathbf{d})$  (Theorem~\ref{thm:three-general}) and of $(1,1/2,1/2)$-MMS$(\mathbf{d})$ allocations (Theorem~\ref{thm:three-general-approximate}), with respect to any vector $\mathbf{d}$. We prove those results via a series of Lemmas and Corollaries in Sections~\ref{sec:CharacterizationLemmas}~and~\ref{sec:ManyAgentsImpossib}.

 We give a key claim (Claim~\ref{cl:disjointAllocations}) to be used as a technical tool in the follow up lemmas; if there exist two allocations that ``satisfy'' all but one agent, and those allocations are disjoint in one MMS bundle of the last agent, then one of the two allocations will leave items that the last agent values by at least $1/2$. Hence, that allocation can be extended to include the last agent that is guaranteed to receive at least $1/2$. 

 \begin{claim}
\label{cl:disjointAllocations}
    Consider any instance of $n$ agents, and any vectors ${\bf d}=(d_1,\ldots, d_n)$ and $\boldsymbol{\alpha}=(\alpha_1,\ldots,\alpha_n)$ with $\alpha_n =1/2$. If there exist two  allocations $A$ and $A'$, that are both $\boldsymbol{\alpha}_{-n}$-MMS$({\bf d}_{-n})$ for the first $n-1$ agents, such that there exists an MMS bundle $X$ of the last agent, where $X \cap \left(\bigcup_{Y\in A}Y\right)$ and $X \cap \left(\bigcup_{Y'\in A'}Y'\right)$ are disjoint,  
    then there exists a $\boldsymbol{\alpha}$-MMS$(\bf d)$ allocation for all $n$ agents.
\end{claim}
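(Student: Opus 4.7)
The plan is to extend one of the two partial allocations $A$ or $A'$ by handing the leftover items to agent $n$, and to argue via subadditivity that at least one of these extensions gives agent $n$ at least $\alpha_n \cdot \mu_n^{d_n}(M) = \mu_n^{d_n}(M)/2$. Write $B := \bigcup_{Y\in A} Y$ and $B' := \bigcup_{Y'\in A'} Y'$ for the items consumed by the two allocations on the first $n-1$ agents; then the two candidate full allocations are obtained by assigning $M\setminus B$ to agent $n$ in $A$ and $M\setminus B'$ to agent $n$ in $A'$. Since $A$ and $A'$ are already $\boldsymbol{\alpha}_{-n}$-MMS$(\mathbf{d}_{-n})$ for the first $n-1$ agents, the only thing left to check is the value received by agent $n$.

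Next, I would exploit the hypothesis that $X\cap B$ and $X\cap B'$ are disjoint, where $X$ is the given MMS bundle of agent $n$. This disjointness is equivalent to the inclusion $X\cap B \subseteq X\setminus B'$ (and symmetrically $X\cap B'\subseteq X\setminus B$), which is the key geometric fact that lets value captured inside $B$ in the first allocation be translated into value outside $B'$ in the second. Applying subadditivity of $v_n$ to the pair $X\cap B$ and $X\setminus B$, together with monotonicity and the fact that $v_n(X)\ge \mu_n^{d_n}(M)$, gives
\[
v_n(X\cap B) + v_n(X\setminus B) \;\ge\; v_n(X) \;\ge\; \mu_n^{d_n}(M),
\]
so at least one of the two summands is at least $\mu_n^{d_n}(M)/2$.

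In the first case, $v_n(X\setminus B)\ge \mu_n^{d_n}(M)/2$; since $X\setminus B \subseteq M\setminus B$, monotonicity gives $v_n(M\setminus B)\ge \mu_n^{d_n}(M)/2$, so extending $A$ by assigning $M\setminus B$ to agent $n$ produces the desired $\boldsymbol{\alpha}$-MMS$(\mathbf{d})$ allocation. In the second case, $v_n(X\cap B)\ge \mu_n^{d_n}(M)/2$, and the inclusion $X\cap B \subseteq X\setminus B' \subseteq M\setminus B'$ derived from disjointness then gives $v_n(M\setminus B')\ge \mu_n^{d_n}(M)/2$ by monotonicity, so extending $A'$ works.

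There is no genuine obstacle here; the proof is essentially a two-line argument once the right reformulation is in place. The only point that deserves emphasis is that the disjointness assumption is used in exactly one place, namely to convert the lower bound on $v_n(X\cap B)$ into a lower bound on $v_n(M\setminus B')$; without it, the value captured inside $B$ by allocation $A$ could not be transferred to the complement of allocation $A'$, and the argument would collapse.
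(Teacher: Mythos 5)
Your proposal is correct and follows essentially the same route as the paper: apply subadditivity of $v_n$ to $X\cap B$ and $X\setminus B$ to get one piece of value at least $\mu_n^{d_n}(M)/2$, then use the disjointness hypothesis to pair that piece with whichever of $A$ or $A'$ it does not collide with. Your write-up is in fact slightly more careful than the paper's, since you state the inclusion derived from disjointness in the correct direction ($X\cap B\subseteq X\setminus B'$) and make the monotonicity step to $M\setminus B$ or $M\setminus B'$ explicit.
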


\begin{proof}
    The last agent values by at least $1/2$ either $X\cap \left(\bigcup_{Y\in A}Y\right)$ or $X\setminus \left(\bigcup_{Y\in A}Y\right) \subseteq X\cap \left(\bigcup_{Y'\in A'}Y'\right)$. W.l.o.g. suppose that the former holds. Then, the allocation where the last agent gets $X\cap \left(\bigcup_{Y\in A}Y\right) $ and the others get their allocated bundle in $A'$ is valid and it is $\boldsymbol{\alpha}$-MMS$(\bf d)$.
\end{proof}

\subsubsection{Characterizations of $1/2$-MMS($\mathbf{d}$) guarantees}
In this section we provide a complete characterization of results regarding $1/2$-MMS$(\mathbf{d})$ for three agents with subadditive valuations, for any vector $\mathbf{d}=(d_1,d_2,d_3)$. We summarize the results in the following theorem; note that we use the value $\sum_{i=1}^3d_i$ to distinguish among different $\mathbf{d}$, however, we do not claim that there is any strong correlation. 

\begin{theorem}\label{thm:three-general}
    A $1/2$-MMS$(\mathbf{d})$ allocation exists for three agents with subadditive valuation functions, when (i) $\mathbf{d}=(3,2,2)$ or (ii)  $\sum_{i=1}^3(d_i)\geq 8$ and $d_i=1$ for at most one agent $i$. In any other case, there exists an instance with no $1/2$-MMS$(\mathbf{d})$ allocation.
\end{theorem}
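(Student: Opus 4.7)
The proof splits into existence (for $\mathbf{d}$ satisfying (i) or (ii)) and non-existence (for every other $\mathbf{d}$). Since the class of subadditive valuations is symmetric across agents, we view $\mathbf{d}$ up to permutation and work with its sorted form $d_1 \leq d_2 \leq d_3$.

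For existence, case (i) is exactly Lemma~\ref{Lemma:threehalfs}. For case (ii) with all $d_i \geq 2$, the conditions $d_1,d_2 \geq 2$ and $\sum d_i \geq 8$ force the sorted $\mathbf{d}$ to dominate $(2,2,3)$ coordinate-wise (since the largest coordinate must then be at least $3$), so Observation~\ref{obs:simpleReduction} reduces the claim to Lemma~\ref{Lemma:threehalfs} after the appropriate relabelling of agents. For case (ii) with exactly one $d_i = 1$, the minimal sorted instances are $(1,2,5)$ and $(1,3,4)$, and every other instance in this subcase dominates one of these coordinate-wise and is handled by Observation~\ref{obs:simpleReduction}. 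We therefore plan to state and prove two dedicated existence lemmas for $(1,2,5)$ and $(1,3,4)$ in Section~\ref{sec:CharacterizationLemmas}. The strategy extends Lemma~\ref{Lemma:threehalfs}: letting $Q$ be the agent with $d_Q = 1$, she demands $v_Q(M)/2$, which is automatically met on one side of any cut by subadditivity. The plan is to start from the MMS partition of the agent with smallest nontrivial $d$, offer a cut to the high-$d$ agent $S$, exploit her Maximum Desired Half to extract enough disjoint high-value subsets, and construct two candidate partial allocations for $\{S,T\}$ that are disjoint on some slot of $M$; Claim~\ref{cl:disjointAllocations} applied to $Q$'s trivial partition $\{M\}$ then picks the candidate that leaves behind a set of $Q$-value at least $v_Q(M)/2$.

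For non-existence, the plan is to exhibit, for every sorted $\mathbf{d}$ outside cases (i)--(ii), an instance with no $1/2$-MMS$(\mathbf{d})$ allocation, and to use Observation~\ref{obs:simpleReduction} contrapositively to propagate non-existence to all pointwise-smaller $\mathbf{d}'$. The maximal sorted non-existence cases to address are $(2,2,2)$, $(1,3,3)$, $(1,2,4)$, and the infinite family $(1,1,k)$ for $k\geq 1$. Simple additive instances with few items suffice for three of these: an identical additive valuation with all items worth one unit on a small set already falsifies $(2,2,2)$ (each agent needs an item but there are only two) and $(1,3,3)$ (the $d=1$ agent alone needs two items out of three). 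For $(1,1,k)$, a two-item additive instance in which only one item carries weight forces the two $d=1$ agents to compete for the unique valuable item. The case $(1,2,4)$ requires a more delicate, genuinely subadditive construction, which we defer to Section~\ref{sec:ManyAgentsImpossib}.

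The main obstacle will be the existence lemmas for $(1,2,5)$ and $(1,3,4)$. The $d=1$ agent imposes a global $v_Q(M)/2$ threshold that is not naturally preserved by the cut-and-choose reductions of Lemma~\ref{Lemma:threehalfs}, so the case analysis is expected to branch on which side of the initial cut $Q$ prefers, with Maximum Desired Half arguments providing the flexibility needed to reroute the candidate partial allocations in each branch.
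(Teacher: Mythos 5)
Your decomposition is correct and essentially identical to the paper's: case (i) and the all-$d_i\ge 2$ subcase of (ii) reduce via Observation~\ref{obs:simpleReduction} to Lemma~\ref{Lemma:threehalfs}, the one-$d_i=1$ subcase reduces to the extremal vectors $(5,2,1)$ and $(4,3,1)$ handled exactly as you sketch (two disjoint candidate allocations plus Claim~\ref{cl:disjointAllocations}, which is the paper's Lemmas~\ref{lem:(5,2,1)} and~\ref{lem:(4,3,1)}), and the non-existence frontier $(2,2,2)$, $(1,3,3)$, $(1,2,4)$, $(1,1,k)$ matches the paper's Lemmas~\ref{lem:nagentsImp1}, \ref{lem:imposs(3,3,1)}, \ref{lem:imposs(4,2,1)} and Corollary~\ref{cor:d_i<k}. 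Your simple additive counterexamples for $(2,2,2)$, $(1,3,3)$ and $(1,1,k)$ are valid minor variants of the paper's constructions, and the deferred $(1,2,4)$ impossibility is supplied in the paper by the four-item instance of Lemma~\ref{lem:imposs(4,2,1)}.
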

\begin{proof}
    The positive results are derived by using Observation~\ref{obs:simpleReduction} and showing that there is always a $1/2$-MMS$(3,2,2)$ allocation (Lemma~\ref{Lemma:threehalfs}), and a $(1,1/2,1/2)$-MMS$(\mathbf{d})$ allocation, for $\mathbf{d}=(5,2,1)$ (Lemma~\ref{lem:(5,2,1)}), $\mathbf{d}=(4,3,1)$ (Lemma~\ref{lem:(4,3,1)}), and $\mathbf{d}=(4,2,2)$  (Lemma~\ref{lemma:3Agents4}). 
    
    The impossibility results are derived by using Observation~\ref{obs:simpleReduction} and showing that for any of the following $\mathbf{d}$, there exists an instance that no $1/2$-MMS$(\mathbf{d})$ allocation exists. This is shown for $\mathbf{d}=(k,1,1)$, for any $k\geq 1$ (Corollary~\ref{cor:d_i<k}), $\mathbf{d}=(4,2,1)$ (Lemma~\ref{lem:imposs(4,2,1)}), $\mathbf{d}=(3,3,1)$ (Lemma~\ref{lem:imposs(3,3,1)}), and $\mathbf{d}=(2,2,2)$ (Lemma~\ref{lem:nagentsImp1}). 
    
    All Lemmas and Corollaries appear in Sections \ref{sec:CharacterizationLemmas}~and~\ref{sec:ManyAgentsImpossib}.
\end{proof}

\subsubsection{Characterizations of $(1,1/2,1/2)$-MMS$(\bf d)$ guarantees }
In this section we provide a complete characterization of results regarding $(1,1/2,1/2)$-MMS$(\bf d)$ for three agents with subadditive valuations.

\begin{theorem}\label{thm:three-general-approximate}
    A $(1,1/2,1/2)$-MMS$(\bf d)$ allocation exists for three agents with subadditive valuation functions, when $\sum_{i=1}^3(d_i)\geq 8$, when $d_i=1$ for at most one agent $i$, and $\max_i d_i \geq 4$. In any other case, there exists an instance with no $1/2$-MMS$(\mathbf{d})$ allocation.
\end{theorem}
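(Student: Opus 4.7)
My plan is to follow the same template used to prove Theorem~\ref{thm:three-general}: the positive direction reduces coordinatewise to a small collection of minimal vectors via Observation~\ref{obs:simpleReduction}, and the negative direction combines impossibilities inherited from Theorem~\ref{thm:three-general} with a single new construction.

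For the positive direction, suppose $\mathbf{d}$ satisfies the three stated conditions. Exploiting the symmetry of $\boldsymbol{\alpha}=(1,1/2,1/2)$ in its last two coordinates, I relabel agents $2$ and $3$ so that $d_2\ge d_3$, and I assign the ``full MMS'' role to the agent whose $d_i$ is largest, so that $d_1=\max_i d_i\ge 4$. If $d_3=1$, then (since at most one $d_i=1$) we have $d_2\ge 2$ and $d_1+d_2\ge 7$; either $d_2=2$ and $d_1\ge 5$ (so $\mathbf{d}\ge (5,2,1)$ coordinatewise, and Lemma~\ref{lem:(5,2,1)} with Observation~\ref{obs:simpleReduction} finishes), or $d_2\ge 3$ (so $\mathbf{d}\ge (4,3,1)$ and Lemma~\ref{lem:(4,3,1)} applies analogously). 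If $d_3\ge 2$, then $\mathbf{d}\ge (4,2,2)$ coordinatewise, and Lemma~\ref{lemma:3Agents4} yields existence via Observation~\ref{obs:simpleReduction}.

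For the negative direction, the complement of the existence region splits into three overlapping subcases: $\sum_i d_i<8$, at least two coordinates equal to $1$, or $\max_i d_i\le 3$. In the first two subcases, with the single exception of $\mathbf{d}=(3,2,2)$, Theorem~\ref{thm:three-general} already produces an instance with no $1/2$-MMS$(\mathbf{d})$ allocation, hence no $(1,1/2,1/2)$-MMS$(\mathbf{d})$ allocation either. The vector $(3,2,2)$ together with every vector satisfying $\max_i d_i\le 3$ is coordinatewise $\le (3,3,3)$, so it suffices to exhibit a single instance admitting no $(1,1/2,1/2)$-MMS$(3,3,3)$ allocation; the contrapositive of Observation~\ref{obs:simpleReduction} then propagates this non-existence to all such $\mathbf{d}$.

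The main obstacle is the new impossibility for $(1,1/2,1/2)$-MMS$(3,3,3)$. Unlike the symmetric $1/2$-MMS lower bounds from \cite{GhodsiHSSY22}, the asymmetry of the target is what makes the instance hard to build: one agent must receive her \emph{full} MMS$(3)$ value while the other two still need at least half of theirs from the complement of the same set. The construction will require a subadditive profile in which the three agents' MMS$(3)$ partitions interlock, so that any bundle worth $\mu_1^3$ to agent $1$ unavoidably removes enough value to push at least one of agents $2$ or $3$ below her $1/2$-threshold. A careful adaptation of the three-agent tight subadditive instance of \cite{GhodsiHSSY22}, paired with a short case analysis over the bundles assignable to agent $1$, seems to be the natural route.
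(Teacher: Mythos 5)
Your overall architecture is exactly the paper's: the positive direction reduces, via Observation~\ref{obs:simpleReduction}, to the three minimal vectors $(5,2,1)$, $(4,3,1)$ and $(4,2,2)$ handled by Lemmas~\ref{lem:(5,2,1)}, \ref{lem:(4,3,1)} and \ref{lemma:3Agents4}, and your case split ($d_3=1$ with $d_2=2$ forcing $d_1\ge 5$, versus $d_2\ge 3$, versus $d_3\ge 2$) correctly covers the existence region. The negative direction is also structured as in the paper: everything outside the region is either inherited from the $1/2$-MMS impossibilities (the paper cites Corollary~\ref{cor:d_i<k} and Lemma~\ref{lem:imposs(4,2,1)} directly rather than routing through Theorem~\ref{thm:three-general}, but that is the same content), plus one genuinely new impossibility at $\mathbf{d}=(3,3,3)$ that propagates down to $(3,2,2)$ and to all $\mathbf{d}$ with $\max_i d_i\le 3$.

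The gap is that this new $(1,1/2,1/2)$-MMS$(3,3,3)$ impossibility --- which you correctly identify as the crux --- is left as a plan rather than a construction, and it is not a routine perturbation of the tight instance of \cite{GhodsiHSSY22}: that instance only forbids allocations giving \emph{everyone} more than $1/2$, whereas here one must rule out allocations giving one agent her full MMS value while the other two each keep $1/2$. The paper devotes Lemma~\ref{lem:imposs(3,3,3)} to this, building a $27$-item instance on $\{g(i,j,k)\}$ whose three MMS partitions are the coordinate slices, and assigning values $1/2\pm\epsilon$ to proper subsets of each MMS bundle according to cardinality, with a hand-picked family of ``$B^*$'' bundles of size $4$ receiving $1/2+\epsilon$; the argument then shows that any full-value bundle for agent $S$ forces $A_T\supset T_1\cap S_2$, which starves agent $Q$. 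Checking that these valuations are simultaneously monotone, subadditive, and consistent (the claim that $R_i\setminus B^*$ never contains another $B^*$ set) is where the real work lies, so your proposal should be regarded as a correct skeleton with its central lemma still to be proved.
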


\begin{proof}
    The positive results are derived by using Observation~\ref{obs:simpleReduction} and showing that there is always a  $(1,1/2,1/2)$-MMS$(\mathbf{d})$ allocation, for $\mathbf{d}=(5,2,1)$ (Lemma~\ref{lem:(5,2,1)}), $\mathbf{d}=(4,3,1)$ (Lemma~\ref{lem:(4,3,1)}) and $\mathbf{d}=(4,2,2)$ (Lemma~\ref{lemma:3Agents4}). 
    
    The impossibility results are derived by using Observation~\ref{obs:simpleReduction} and showing that for any of the following $\mathbf{d}$, there exists an instance that no $1/2$-MMS$(\mathbf{d})$ allocation exists. This is shown for $\mathbf{d}=(k,1,1)$, for any $k\geq 1$ (Corollary~\ref{cor:d_i<k}), $\mathbf{d}=(4,2,1)$ (Lemma~\ref{lem:imposs(4,2,1)}), and $\mathbf{d}=(3,3,3)$ (Lemma~\ref{lem:imposs(3,3,3)}).

    All Lemmas and Corollaries appear in Sections \ref{sec:CharacterizationLemmas}~and~\ref{sec:ManyAgentsImpossib}.
    \end{proof}

\subsubsection{$\alpha$-MMS$(\bf d)$ guarantees for three agents}
\label{sec:CharacterizationLemmas}

In this section we give a series of lemmas regarding the existence of $1/2$-MMS$(\bf d)$ and $(1,1/2,1/2)$-MMS$(\bf d)$ allocations, or their impossibilities, for three agents. Those lemmas are used in the characterizations of the existence of $1/2$-MMS$(\bf d)$ and $(1,1/2,1/2)$-MMS$(\bf d)$ allocations of the previous sections, focusing on results for three agents.

    \begin{lemma}
\label{lem:imposs(4,2,1)}
There exists an instance of three agents with subadditive valuation functions, where no $1/2$-MMS$(4,2,1)$ allocation exists.
\end{lemma}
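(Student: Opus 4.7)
The plan is to exhibit a concrete three-agent instance with carefully chosen subadditive valuations and then, via a finite case analysis, show that no allocation simultaneously achieves $v_1(A_1) \geq \mu_1^4/2$, $v_2(A_2) \geq \mu_2^2/2$, and $v_3(A_3) \geq v_3(M)/2$. The driving intuition is that since $d_3 = 1$, agent $3$ must receive a ``large'' bundle (worth at least half of $v_3(M)$), and one must engineer $v_1$ and $v_2$ so that the items agents $1$ and $2$ need in order to meet their own thresholds are precisely the items that, if given away, push the remainder below $v_3(M)/2$ for agent $3$.

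Concretely, I would fix a small ground set $M$ and first design $v_3$ (for instance additive with a non-uniform weight pattern, or of XOS form) so that every bundle of value $\geq v_3(M)/2$ must contain a prescribed ``critical'' subset $C \subseteq M$ of small size. Next, I would design $v_2$ so that its MMS$(2)$ partition is consistent with $C$ being valuable and, crucially, so that every singleton (or small bundle) realizing value $\geq \mu_2^2/2$ lies in $C$; by the bound $\max_i v_2(\{i\}) \geq \mu_2^2/|P|$ for $i$ in an MMS$(2)$ part $P$, this forces $|P|$ to be large enough to push the low-value singletons below $\mu_2^2/2$. Finally, I would verify that $v_1$, while having the mildest threshold due to $d_1 = 4$, cannot be satisfied using only items outside $C$. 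The counting argument is then a pigeonhole over $C$: the sum of the demands that agents $1$, $2$, and $3$ place on items of $C$ exceeds $|C|$, so at least one agent fails.

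The main obstacle is that subadditivity works against us in two directions. On the one hand, Observation~\ref{obs:Cut_noOfDesiredSets} guarantees that for any bipartition $(S, M\setminus S)$ some side is worth at least $v_3(M)/2$, so agent $3$ cannot be blocked in isolation; the blocking must arise only from a joint interaction with agents $1$ and $2$. On the other hand, the inequality $v(P) \leq \sum_{i \in P} v(\{i\})$ prevents us from making a valuation spike sharply on one specific bundle while being small on its subsets, so forcing agent $2$ to take multiple items requires the MMS$(2)$ partition to have parts of size strictly greater than $2$, which in turn constrains $|M|$. Threading between these constraints while keeping $|C|$ small enough is the delicate step.

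After settling on the instance, I would verify monotonicity and subadditivity of each $v_i$, compute $\mu_1^4, \mu_2^2$, and $v_3(M)$, and then check each possible distribution of the items of $C$ among the three agents. In every such case one threshold is violated: when $C \subseteq A_3$, agents $1$ and $2$ are left with a pool of $C$-value zero; when $C$ is split so that $A_3$ misses an item of $C$, $v_3(A_3)$ drops below $v_3(M)/2$; and when agents $1$ or $2$ take enough of $C$ to leave room for agent $3$, one of their own thresholds fails by the non-uniform singleton weights. This exhausts the allocations and establishes the lemma.
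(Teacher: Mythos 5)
There is a genuine gap: the proposal is a construction \emph{plan}, not a construction. The entire content of this lemma is the exhibition of a specific instance together with a verification that no allocation works, and you never actually produce the instance --- no ground set, no valuation functions, no computed values of $\mu_1^4$, $\mu_2^2$, $v_3(M)$, and hence no case analysis that can be checked. You yourself flag the key difficulty (``threading between these constraints \dots is the delicate step'') and leave it unresolved. Moreover, the structural template you commit to is questionable: you want every bundle of value at least $v_3(M)/2$ for agent $3$ to contain a fixed critical set $C$, and simultaneously want agents $1$ and $2$ to be forced into $C$. It is not clear this combination is realizable, and it is not how the paper proceeds --- in the paper's instance the family of bundles that satisfy agent $3$ has empty intersection, so there is no such $C$.

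The paper's actual instance is much simpler than what you are aiming for: $M=\{h,g_1,g_2,g_3\}$ with $v_1(S)=1$ for all nonempty $S$ (so agent $1$, with $d_1=4$, merely needs one item), $v_2(S)=1$ if $h\in S$ and $|S|/3$ otherwise (so agent $2$ needs $h$ or two of the $g_i$), and $v_3$ additive with $v_3(h)=1/3$, $v_3(g_i)=2/9$ (so agent $3$ needs $h$ plus another item, or all three $g_i$). Subadditivity is immediate for all three, and the impossibility is a two-case item count: if $h\in A_3$ then agent $3$ takes two items and agent $2$ takes the other two, starving agent $1$; if $h\notin A_3$ then agent $3$ takes all three $g_i$ and a single item remains for two agents. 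Your worry that subadditivity prevents ``spiking'' is also largely moot here --- constant and threshold-type valuations like $v_1$ and $v_2$ above are trivially subadditive and do all the work. To repair your proof you would need to commit to concrete valuations and carry out the finite verification; as written, nothing is proved.
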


\begin{proof}
    Consider an instance with $M=\{h,g_1,g_2,g_3\}$, where $v_1(S)=1$ for any $S\subseteq M$, $v_2(S)=1$, if $h\in S$, otherwise $v_2(S)=1/3 \cdot |S|$, and the last agent has an additive valuation function over $M$, with $v_3(h)=1/3$ and $v_3(g)=2/9$ for any $g\neq h$. Note that the first agent can partition $M$ into four bundles, each valued at $1$, the second agent can partition $M$ into $(\{h\},\{g_1,g_2,g_3\})$, valuing each bundle at $1$, and the third agent values $M$ at $1$.

    Suppose that there exists an allocation $A$ that is $1/2$-MMS$(4,2,1)$. If $h\in A_3$, then the third agent should receive at least one more item. Then, the second agent should receive two items other than $h$ to form a bundle with a value of at least $1/2$ to her; this leaves no items for the first agent. If $h\notin A_3$, the third agent must receive all remaining items to form a bundle with a value of at least $1/2$. This leaves one item for each of the first two agents. We conclude that no such allocation $A$ exists.  
\end{proof}

\begin{lemma} 
\label{lem:(5,2,1)}
A $(1,1/2,1/2)$-MMS$(5,2,1)$ allocation exists for three agents with subadditive valuation functions.
\end{lemma}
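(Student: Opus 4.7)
My plan is to apply Claim~\ref{cl:disjointAllocations} with two disjoint partial allocations for agents $S$ and $T$ that jointly cover $M$, so that agent $Q$---whose only MMS bundle is $M$ (since $d_Q=1$)---is handled automatically by the disjointness structure. Denote $S$'s MMS bundles by $S_1,\ldots,S_5$ (each $v_S(S_j)\ge 1$), $T$'s MMS bundles by $T_1,T_2$ (each $v_T(T_\ell)\ge 1$), and normalize $v_Q(M)=\mu_Q^1=1$. The candidate partial allocations take the form
\[
A = \bigl(S_j,\; T_1\setminus(S_j\cup S_k)\bigr) \quad\text{and}\quad A' = \bigl(S_k,\; T_2\setminus(S_j\cup S_k)\bigr),
\]
for a pair $(j,k)$ of distinct indices in $\{1,\ldots,5\}$. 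A direct set-theoretic check confirms that $A,A'$ are disjoint and $\bigcup A\cup\bigcup A'=M$, and each gives $S$ a full MMS bundle (so $v_S\ge 1$); subadditivity then yields $v_T$ of each $T$-bundle at least $v_T(T_\ell)-v_T(T_\ell\cap S_j)-v_T(T_\ell\cap S_k)$.

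Hence, provided
\[
v_T(T_\ell\cap S_j)+v_T(T_\ell\cap S_k)\le 1/2 \quad\text{for each }\ell\in\{1,2\}, \tag{$\star$}
\]
both allocations satisfy the $(1,1/2)$-MMS$(5,2)$ conditions for $\{S,T\}$. Applying Claim~\ref{cl:disjointAllocations} with $X=M$ then finishes the proof: by subadditivity $v_Q(\bigcup A)+v_Q(\bigcup A')\ge v_Q(M)=1$, so one of the two halves has $v_Q$-value at least $1/2$, and we assign $Q$ that half and $S,T$ the bundles from the complementary allocation.

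To produce a pair $(j,k)$ satisfying $(\star)$, I will first offer the cut $C=T_1$ to $S$; Observation~\ref{obs:Cut_noOfDesiredSets} provides at least three of $S$'s MMS bundles in $\mathcal{X}_S^*(T_1)$, giving initial candidates. A counting argument---using $\sum_{j=1}^5 v_T(T_\ell\cap S_j)\ge v_T(T_\ell)\ge 1$ (subadditivity) together with the five-bundle abundance from $d_S=5$---produces the pair in the generic case.

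The main obstacle is the \emph{concentrated} regime, in which every pair of indices violates~$(\star)$. Here the concentration forces some $S_{j^*}$ to satisfy $v_T(T_\ell\cap S_{j^*})\ge 1/2$ for some $\ell$, so I will abandon the Claim~\ref{cl:disjointAllocations} scheme and instead directly allocate $T$ the single intersection $T_\ell\cap S_{j^*}$, $S$ a different MMS bundle $S_{k^*}$ with $k^*\ne j^*$, and $Q$ the remainder. The four remaining choices for $k^*$ (again exploiting $d_S=5$), combined with an averaging on $\sum_{k\ne j^*}v_Q(S_k)\ge v_Q(M\setminus S_{j^*})\ge 1-v_Q(S_{j^*})$, will ensure we can select $k^*$ so that $v_Q$ of the remainder is at least $1/2$; a secondary sub-case---swapping the roles of $T$ and $Q$ on the intersection when $v_Q(T_\ell\cap S_{j^*})\ge 1/2$---dispatches the remaining edge case. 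Carrying this concentrated case out cleanly and checking that no edge case escapes is the technically delicate part.
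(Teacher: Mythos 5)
Your overall architecture is exactly the paper's: since $T_\ell\setminus(S_j\cup S_k)=T_\ell\cap C$ where $C$ is the union of the other three $S$-bundles, your two candidate partial allocations are precisely the paper's $A=(S_4,\,T_1\cap C_{123})$ and $A'=(S_5,\,T_2\cap C_{123})$, and your use of Claim~\ref{cl:disjointAllocations} with $X=M$ is the same final step. The entire content of the lemma therefore reduces to one combinatorial fact: there is a triple of $S$-bundles whose union retains at least half the value of \emph{both} $T_1$ and $T_2$ for agent $T$. This is exactly where your proposal has a genuine gap. Your sufficient condition $(\star)$ is certified via the bound $v_T(T_\ell\setminus(S_j\cup S_k))\ge v_T(T_\ell)-v_T(T_\ell\cap S_j)-v_T(T_\ell\cap S_k)$, but for subadditive valuations this bound is far too lossy and $(\star)$ can fail for \emph{every} pair: take $T_1$ split by the $S_j$ into five pieces with $v_T$ of any union of $k$ pieces equal to $\min(1,0.3k)$. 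Then every pair sums to $0.6>1/2$, so $(\star)$ never holds, yet no single intersection reaches $1/2$ either, so your ``concentrated regime'' dichotomy (failure of $(\star)$ forces some $v_T(T_\ell\cap S_{j^*})\ge 1/2$) is false --- failure of $(\star)$ only forces some term above $1/4$. The counting inequality $\sum_j v_T(T_\ell\cap S_j)\ge 1$ points in the wrong direction for your purposes (subadditivity lower-bounds sums of piece-values; it never upper-bounds them), and the analogous averaging step in your fallback for $Q$ has the same directional problem.

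What the paper does instead is argue \emph{directly} about the quantities $v_T(T_\ell\cap C_{ijk})$ over all ten triples: subadditivity applied to $T_\ell\cap C_{ijk}$ and $T_\ell\setminus C_{ijk}$ shows each triple captures at least half of $T_1$ or at least half of $T_2$, and a labelling/parity argument ($\phi(C_{ijk})\in\{1,2\}$, with complementary triples forced to receive different labels) derives a contradiction from the assumption that no triple captures half of both. In the example above this distinction matters concretely: $v_T(T_1\cap C_{ijk})=0.9\ge 1/2$ even though your certificate for it evaluates to $0.4$. So the allocation you aim for is the right one, but the argument you give for why a suitable pair $(j,k)$ exists does not go through, and the concentrated-case workaround does not cover the cases that actually arise. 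You need something like the paper's inner claim to close the proof.
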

\begin{proof}
    We denote by $S,T,Q$ the three agents, and by $S_j,T_j,Q_j$ their $j$-th MMS bundle, respectively. 
    Consider all the cuts $C_{ijk}=\{S_i \cup S_j \cup S_k\}$ that we offer to $T$. We first show the following claim:
    \begin{claim}
        There exists a cut $C_{ijk}$, such that $v_T(T_1\cap C_{ijk})\geq 1/2$ and $v_T(T_2\cap C_{ijk})\geq 1/2$. 
    \end{claim}
    \begin{proof}
        Suppose on the contrary that there is no such cut. Then, for each $C_{ijk}$, agent $T$ has value at least $1/2$ for the intersection of $C_{ijk}$ and either $T_1$ or $T_2$ (but not both due to our assumption). The reason is that if $T$'s value was less than $1/2$ for both intersections, due to subadditivity any $C_{i'j'k'}$ for which $C_{ijk} \cup C_{i'j'k'}=M$ would contradict our assumption. 
        Let $\phi(C_{ijk})=1$ if $T$ has value at least $1/2$ for the intersection with $T_1$, otherwise, $\phi(C_{ijk})=2$.

        W.l.o.g. suppose that  $\phi(C_{123})=1$, then $\phi(C_{145})=2$, o.w. $v_T(C_{123}\cap T_2)+v_2(C_{145}\cap T_2)<1$, which is a contradiction to subadditivity, since the union of those two sets is $T_2$, and $v(T_2)\geq 1$. Similarly $\phi(C_{245})=2$ and $\phi(C_{345})=2$. 
        For the same reason, since $\phi(C_{145})=2$, it should be that $\phi(C_{234})=1$, and then $\phi(C_{125})=2$. Finally it should be $\phi(C_{345})=1$, but we argue above that $\phi(C_{345})=2$, which is a contradiction to our assumption. 
    \end{proof}

    W.l.o.g. suppose that $C_{123}$ is the cut satisfying the statement of the above claim. Then, $T$ has value at least $1/2$ for both $T_1^*=T_1\cap C_{123}$ and $T_2^*=T_2\cap C_{123}$ (Figure \ref{fig:3agents521}(a)). Then, the allocations $A=(S_4,T_1^*)$ and  $A'=(S_5,T_2^*)$ for agents $S$ and $T$ are $(1,1/2)$-MMS$(5,2)$ and they are disjoint (Figures \ref{fig:3agents521}(b) and \ref{fig:3agents521}(c) resp.). By Claim~\ref{cl:disjointAllocations} the lemma follows. 
\end{proof}

\begin{center}
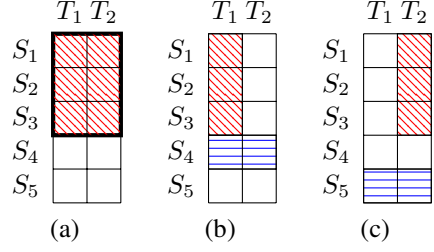
\begin{figure}[t] 
\begin{center}
    \begin{tabular}{ccc}
\begin{tikzpicture}[scale=0.45]
\draw[black, ultra thick](0,2) rectangle +(2,3);
\draw[pattern={north west lines},pattern color=red]
(0,2) rectangle +(2,3);
  \draw[step=1cm,] (0,0) grid (2,5);

\node[anchor=north] at (0.5,6.25) {$T_1$};
\node[anchor=north] at (1.5,6.25) {$T_2$};
\node[anchor=west] at (-1.5,0.5) {$S_5$};
\node[anchor=west] at (-1.5,1.5) {$S_4$};
\node[anchor=west] at (-1.5,2.5) {$S_3$};
\node[anchor=west] at (-1.5,3.5) {$S_2$};
\node[anchor=west] at (-1.5,4.5) {$S_1$};
\end{tikzpicture} & \begin{tikzpicture}[scale=0.45]
\draw[pattern={north west lines},pattern color=red]
(0,2) rectangle +(1,3);
\draw[pattern={horizontal lines},pattern color=blue](0,1) rectangle +(2,1);
  \draw[step=1cm,] (0,0) grid (2,5);

\node[anchor=north] at (0.5,6.25) {$T_1$};
\node[anchor=north] at (1.5,6.25) {$T_2$};
\node[anchor=west] at (-1.5,0.5) {$S_5$};
\node[anchor=west] at (-1.5,1.5) {$S_4$};
\node[anchor=west] at (-1.5,2.5) {$S_3$};
\node[anchor=west] at (-1.5,3.5) {$S_2$};
\node[anchor=west] at (-1.5,4.5) {$S_1$};
\end{tikzpicture} &
\begin{tikzpicture}[scale=0.45]
\draw[pattern={north west lines},pattern color=red]
(1,2) rectangle +(1,3);
\draw[pattern={horizontal lines},pattern color=blue](0,0) rectangle +(2,1);
  \draw[step=1cm,] (0,0) grid (2,5);

\node[anchor=north] at (0.5,6.25) {$T_1$};
\node[anchor=north] at (1.5,6.25) {$T_2$};
\node[anchor=west] at (-1.5,0.5) {$S_5$};
\node[anchor=west] at (-1.5,1.5) {$S_4$};
\node[anchor=west] at (-1.5,2.5) {$S_3$};
\node[anchor=west] at (-1.5,3.5) {$S_2$};
\node[anchor=west] at (-1.5,4.5) {$S_1$};
\end{tikzpicture}\\
(a) & (b) & (c) \\
    \end{tabular}
    \end{center}
\caption{In (a), with a thick line we show the cut $C=C_{123}$, and the red bundles correspond to $T_1^*,T_2^*$. In (b) and (c), we illustrate the allocations $A$ and $A'$, respectively; agent $S$ gets a whole bundle in each allocation (denoted with blue color).}
\label{fig:3agents521}
\end{figure}  
\end{center}

\begin{lemma} 
\label{lem:(4,3,1)}
A $(1,1/2,1/2)$-MMS$(4,3,1)$ allocation exists for three agents with subadditive valuation functions.  
\end{lemma}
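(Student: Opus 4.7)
The plan is to mirror the strategy of Lemma~\ref{lem:(5,2,1)}: I will build two disjoint partial allocations $A$ and $A'$ for agents $S$ and $T$ that are each $(1,1/2)$-MMS$(4,3)$, and then invoke Claim~\ref{cl:disjointAllocations} with agent $Q$ as the ``last'' agent. Since $d_Q=1$, the unique MMS bundle of $Q$ is $M$ itself, so the claim's hypothesis reduces to requiring that $\bigcup A$ and $\bigcup A'$ be disjoint as subsets of $M$.

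First, I would consider the $\binom{4}{2}=6$ cuts of the form $C_{ij}=S_i\cup S_j$ for $i<j$, grouped into three complementary pairs $\{C_{12},C_{34}\},\{C_{13},C_{24}\},\{C_{14},C_{23}\}$. For every such cut $C$ and every $T$-bundle $T_k$ ($k\in\{1,2,3\}$), subadditivity gives $v_T(T_k\cap C)+v_T(T_k\setminus C)\ge v_T(T_k)\ge 1$, so $T_k$ values at least one side of the cut by at least $1/2$. The key combinatorial step is then the following pigeonhole: each $T_k$ casts at least one ``vote'' per complementary pair, contributing at least $3$ votes over $6$ cuts, for a total of at least $9$ votes across all $T$-bundles. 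Hence some cut $C=S_i\cup S_j$ receives at least $2$ votes, which must come from two \emph{distinct} $T$-bundles $T_a,T_b$, both satisfying $v_T(T_a\cap C), v_T(T_b\cap C)\ge 1/2$.

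Second, given such $C$ and $T_a,T_b$, let $\{p,q\}=\{1,2,3,4\}\setminus\{i,j\}$, and define the two candidate allocations for $\{S,T\}$ as $A=(S_p,\, T_a\cap C)$ and $A'=(S_q,\, T_b\cap C)$. Each is internally valid since $S$-bundles are pairwise disjoint, hence $S_p\cap C=S_q\cap C=\emptyset$, which also makes the two pieces within each allocation disjoint. Both allocations give agent $S$ a full MMS bundle (value $\ge 1=\mu_S^4$) and agent $T$ a piece of value $\ge 1/2$, so they are $(1,1/2)$-MMS$(4,3)$. Moreover,
$$(S_p\cup(T_a\cap C))\cap(S_q\cup(T_b\cap C))=\emptyset,$$
because $S_p\cap S_q=\emptyset$, $S_p,S_q$ are both disjoint from $C$, and $T_a\cap T_b=\emptyset$.

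Finally, with $X=M$ (the unique MMS bundle of $Q$), the disjointness of $\bigcup A$ and $\bigcup A'$ lets Claim~\ref{cl:disjointAllocations} extend one of the two allocations by assigning $Q$ either $M\cap\bigcup A$ or $M\setminus\bigcup A$, whichever she values at least $1/2\cdot v_Q(M)=1/2$. The resulting allocation is $(1,1/2,1/2)$-MMS$(4,3,1)$. I expect the pigeonhole count to be the main (but still short) substantive step; the construction of $A,A'$ and the application of Claim~\ref{cl:disjointAllocations} are routine and structurally identical to the corresponding parts of Lemma~\ref{lem:(5,2,1)}.
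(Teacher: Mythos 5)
Your proposal is correct and follows essentially the same route as the paper: a cut of the form $S_i\cup S_j$ offered to $T$ yields two $T$-bundles whose intersections with the cut are each worth at least $1/2$, the two remaining whole $S$-bundles give two disjoint $(1,1/2)$-MMS$(4,3)$ allocations, and Claim~\ref{cl:disjointAllocations} (with $Q$'s single MMS bundle being $M$) finishes. The only difference is cosmetic: the paper fixes the single cut $C=S_1\cup S_2$ and gets the two $T$-bundles directly from Observation~\ref{obs:Cut_noOfDesiredSets} plus symmetry of $C$ and $M\setminus C$, whereas your pigeonhole over all six cuts re-derives the same fact with more machinery than needed (one complementary pair already suffices).
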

\begin{center}
\begin{figure} 
\begin{center}
    \begin{tabular}{ccc}
\begin{tikzpicture}[scale=0.45]
(0,0) rectangle +(1,2);
\draw[black, ultra thick](0,2) rectangle +(3,2);
\draw[pattern={north west lines},pattern color=red]
(0,2) rectangle +(2,2);
  \draw[step=1cm,] (0,0) grid (3,4);

\node[anchor=north] at (0.5,5.25) {$T_1$};
\node[anchor=north] at (1.5,5.25) {$T_2$};
\node[anchor=north] at (2.5,5.25) {$T_3$};
\node[anchor=west] at (-1.5,0.5) {$S_4$};
\node[anchor=west] at (-1.5,1.5) {$S_3$};
\node[anchor=west] at (-1.5,2.5) {$S_2$};
\node[anchor=west] at (-1.5,3.5) {$S_1$};
\end{tikzpicture} & \begin{tikzpicture}[scale=0.45]
\draw[pattern={north west lines},pattern color=red]
(0,2) rectangle +(1,2);
\draw[pattern={horizontal lines},pattern color=blue](0,1) rectangle +(3,1);
  \draw[step=1cm,] (0,0) grid (3,4);

\node[anchor=north] at (0.5,5.25) {$T_1$};
\node[anchor=north] at (1.5,5.25) {$T_2$};
\node[anchor=north] at (2.5,5.25) {$T_3$};
\node[anchor=west] at (-1.5,0.5) {$S_4$};
\node[anchor=west] at (-1.5,1.5) {$S_3$};
\node[anchor=west] at (-1.5,2.5) {$S_2$};
\node[anchor=west] at (-1.5,3.5) {$S_1$};
\end{tikzpicture} &
\begin{tikzpicture}[scale=0.45]
\draw[pattern={north west lines},pattern color=red]
(1,2) rectangle +(1,2);
\draw[pattern={horizontal lines},pattern color=blue](0,0) rectangle +(3,1);
  \draw[step=1cm,] (0,0) grid (3,4);

\node[anchor=north] at (0.5,5.25) {$T_1$};
\node[anchor=north] at (1.5,5.25) {$T_2$};
\node[anchor=north] at (2.5,5.25) {$T_3$};
\node[anchor=west] at (-1.5,0.5) {$S_4$};
\node[anchor=west] at (-1.5,1.5) {$S_3$};
\node[anchor=west] at (-1.5,2.5) {$S_2$};
\node[anchor=west] at (-1.5,3.5) {$S_1$};
\end{tikzpicture}\\
(a) & (b) & (c) \\
    \end{tabular}
    \end{center}
\caption{In (a), with a thick line we show the cut $C=\{S_1 \cup S_2\}$. Red bundles correspond to $T_1^*,T_2^*$ (a). In (b) and (c), we illustrate allocations $A$ and $A'$, respectively; agent $S$ gets a whole bundle in each allocation (denoted with blue color).}
\label{fig:3agents431}
\end{figure}
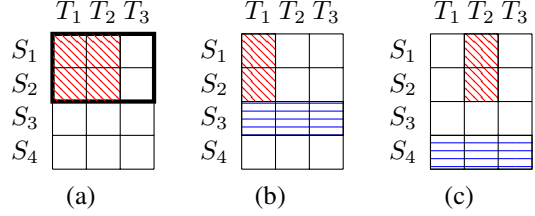  
\end{center}

\begin{proof}
    We denote by $S,T,Q$ the three agents, and by $S_j,T_j,Q_j$ their $j$-th MMS bundle, respectively. Consider the cut $C={S_1\cup S_2}$ that we offer to $T$. W.l.o.g., suppose that
    $\mathcal{X}^*_T(C)=\mathcal{X}_T(C)$ and let $T_1^*,T_2^* \in \mathcal{X}^*_T(C)$ (Figure \ref{fig:3agents431} (a)). 
    Then, the allocations $A=(S_3,T_1^*)$ and  $A'=(S_4,T_2^*)$ (Figures \ref{fig:3agents431} (b) and \ref{fig:3agents431}(c) resp.) for agents $S$ and $T$ are $(1,1/2)$-MMS$(4,3)$ and they are disjoint. By Claim~\ref{cl:disjointAllocations} the lemma follows.
\end{proof}

\begin{lemma} \label{lemma:3Agents4}
A $(1,1/2,1/2)$-MMS$(4,2,2)$ allocation exists for three agents with subadditive valuation functions. 
\end{lemma}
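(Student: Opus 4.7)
The plan is to follow the cut-and-claim template of Lemmas~\ref{Lemma:threehalfs} and~\ref{lem:(4,3,1)}. Let $S, T, Q$ denote the three agents with $d_S=4$ and $d_T=d_Q=2$, and let $S_i, T_j, Q_l$ denote their MMS bundles, normalized so $\mu_S^4 = \mu_T^2 = \mu_Q^2 = 1$. The strategy is to produce two candidate partial allocations for $(S, T)$, each $(1, 1/2)$-MMS$(4, 2)$, whose supports meet only in a region avoided by some $Q_l$, and then invoke Claim~\ref{cl:disjointAllocations} to extend to $Q$ with value at least $1/2$.

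First I would offer the cut $C = S_1 \cup S_2$ to $T$. By Observation~\ref{obs:Cut_noOfDesiredSets}, at least one of $T_1 \cap C, T_2 \cap C$ has value $\geq 1/2$ for $T$. The easy case is when both do (the \emph{good case}): setting $T_j^* = T_j \cap C$, the allocations $A = (S_3, T_1^*)$ and $A' = (S_4, T_2^*)$ are each $(1, 1/2)$-MMS$(4, 2)$ for $(S, T)$. Since $T_j^* \subseteq S_1 \cup S_2$ while $S_3, S_4$ are disjoint from $S_1 \cup S_2$, the supports of $A$ and $A'$ are globally disjoint, and Claim~\ref{cl:disjointAllocations} applies for any $Q_l$ to yield the desired allocation.

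When only one side is satisfied, WLOG $v_T(T_1 \cap C) \geq 1/2$ and $v_T(T_2 \cap C) < 1/2$, so subadditivity gives $v_T(T_2 \setminus C) \geq 1/2$. I would first check the complementary cut $\bar C = S_3 \cup S_4$ and, if it is in the good case, finish as above. Otherwise we have the refined structure $T_1^* = T_1 \cap C \subseteq S_1 \cup S_2$ and $T_2^{**} = T_2 \setminus C \subseteq S_3 \cup S_4$, each of value $\geq 1/2$ for $T$. I would then consider the four cross-pair partial allocations pairing $(S_a, T_1^*)$ with $(S_b, T_2^{**})$ for $a \in \{3,4\}$, $b \in \{1,2\}$, and the two same-pair variants using either $T_1^*$ or $T_2^{**}$ for $T$ in both. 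A short calculation using $T_1^* \subseteq S_1 \cup S_2$ and $T_2^{**} \subseteq S_3 \cup S_4$ shows that the support intersection of a cross-pair is exactly $(T_2 \cap S_a) \cup (T_1 \cap S_b)$, while the same-pair intersection is $T_1^*$ or $T_2^{**}$. Whenever some $Q_l$ is disjoint from one of these six obstacle sets, Claim~\ref{cl:disjointAllocations} finishes the proof.

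The main obstacle I anticipate is the \emph{fully degenerate sub-case}, in which every $Q_l$ meets every obstacle --- i.e., each of $T_1^*, T_2^{**}$ straddles both $Q_1, Q_2$ and every ``quadrant'' $T_j \cap S_i$ we care about is split between $Q_1$ and $Q_2$. To handle this I would rerun the whole argument with the roles of $T$ and $Q$ swapped: if $Q$ admits a good-case cut (among pair cuts on $S$'s bundles) we conclude symmetrically. If both $T$ and $Q$ are degenerate, the combined subadditivity constraints on the quadrants $T_j \cap S_i$ and $Q_l \cap S_i$, together with the straddling of $T_1^*, T_2^{**}, Q_1^*, Q_2^{**}$, force enough structure that a direct construction becomes available --- most plausibly by identifying an index $i$ for which $v_T(T_j \setminus S_i) \geq 1/2$ (for some $j$) and $v_Q(Q_l \setminus (S_i \cup T_j))\geq 1/2$ (for some $l$), so that $S$ receives the full bundle $S_i$ and Corollary~\ref{Cor:cut-and-choose} (or a direct split) handles $T$ and $Q$ on $M \setminus S_i$. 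Navigating this final bookkeeping cleanly, and ruling out the residual sub-cases by the structural inequalities derived from the failed cuts, is the principal technical challenge.
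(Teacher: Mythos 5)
Your overall template (find two $(1,1/2)$-MMS$(4,2)$ partial allocations for $S$ and $T$ whose supports intersect in a set avoided by some $Q_l$, then invoke Claim~\ref{cl:disjointAllocations}) is exactly the paper's strategy, and your easy cases are handled correctly. But the proof does not close: the \emph{fully degenerate sub-case} you flag at the end is not a corner case you can wave away --- it is set-theoretically unconstrained and genuinely occurs. If all four quadrants $T_1^*\cap S_1$, $T_1^*\cap S_2$, $T_2^{**}\cap S_3$, $T_2^{**}\cap S_4$ are nonempty and each straddles both $Q_1$ and $Q_2$, then every one of your six obstacle sets meets both $Q$-bundles, and nothing in the valuations forbids this. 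Swapping the roles of $T$ and $Q$ does not help, since $Q$ also has only two bundles and the same degeneracy can occur simultaneously for both. Your final suggestion (finding $i,j,l$ with $v_T(T_j\setminus S_i)\ge 1/2$ and $v_Q(Q_l\setminus(S_i\cup T_j))\ge 1/2$) asserts a conjunction of conditions that subadditivity alone does not deliver; the second inequality in particular has no justification.

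The ingredient you are missing is that cuts need not be unions of $S$'s bundles: the paper cuts $T$ with sets that \emph{mix} $Q$'s and $S$'s bundles. Concretely, the paper first uses the complementary pairings of $\{S_1,\dots,S_4\}$ to extract, after renaming, two valuable subsets of the \emph{same} bundle $T_1$, namely $T_1'=T_1\cap(S_1\cup S_2)$ and $T_1''=T_1\cap(S_2\cup S_3)$, whose $S$-supports overlap only in $S_2$ (contrast with your one piece from each of $T_1$ and $T_2$). It then offers $T$ the cut $C=(Q_1\setminus S_1)\cup S_3$, whose two sides are each disjoint from one of $S_1,S_3$ and confine the resulting valuable piece $T_2^*$ of $T_2$ so that its footprint on the ``other'' $Q$-bundle lies inside a single $S$-bundle. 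This is what makes the two allocations $A=(S_1,T_2^*)$ and $A'=(S_4,T_1'')$ provably disjoint on $Q_2$ (or the symmetric pair disjoint on $Q_1$), with no residual case. Without a $Q$-aware cut of this kind, your case analysis cannot be completed.
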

\begin{proof}
We denote by $S,T,Q$ the three agents, and by $S_j,T_j,Q_j$ their $j$-th MMS bundle, respectively. We show the existence of two allocations for agents $T$ and $S$ that are disjoint on some $Q_j$,   and by Claim \ref{cl:disjointAllocations} the proof completes. 

We define the cuts $C_{ij}=\{S_i \cup S_j\}$ that we offer to $T$. W.l.o.g. assume that $v_T(T'_1)\geq 1/2$, for $T_1' =T_1\cap  C_{12}$ and $v_T(T_1'')\geq 1/2$ for $T_1''=T_1 \cap C_{23}$; if this is not the case we can rename the bundles of agent $S$.  Figures \ref{fig:3agents422}(a) and \ref{fig:3agents422}(b) illustrate the bundles and the cuts, respectively.

Next we offer the cut $C=\{(Q_1 \setminus S_1) \cup S_3\}$ to $T$ (Figure \ref{fig:3agents422a}(a)). We consider the intersection of $T_2$ with $C$, and let $T_2^*=\max\{T_2\cap C, T_2\cap (M\setminus C)\}$. Note that $M\setminus C = (Q_2 \setminus S_3) \cup S_1$, so the cut provides some symmetry between $S_1$ and $S_3$; $S_1$ intersect with $T'_1$ but not $T''_1$, and for $S_3$ is the other way around. So, no matter which set is $T_2^*$, it does not intersect either $S_1$ or $S_3$, which in turns does not intersect with either $T'_1$ or $T''_1$. Therefore it is w.l.o.g. to assume that $T_2^*=T_2\cap C$ (Figure \ref{fig:3agents422a}(a)) and then, the allocations $A=(S_1,T_2^*)$ and  $A'=(S_4,T_1'')$ (Figures \ref{fig:3agents422a}(b) and \ref{fig:3agents422a}(c) resp.) for agents $S$ and $T$ are $(1,1/2)$-MMS$(4,2)$ and they are disjoint on $Q_2$. By Claim~\ref{cl:disjointAllocations} the lemma follows.
\end{proof}

\begin{center}

\begin{figure}[t]
\begin{center}

\begin{tabular}{cc}

\small{\begin{tikzpicture}[scale=0.45]
    \draw[yslant=0.5,xslant=-1,color=black,ultra thick] (6,3) rectangle +(-2,-2);    
    \draw[yslant=0.5,color=black,ultra thick] (3,-1) rectangle +(2,2);
    \draw[yslant=-0.5,color=black,ultra thick](3,4) rectangle +(-2,-2);
    \draw[yslant=0.5,xslant=-1,pattern={north west lines},pattern color=red](6,3) rectangle +(-2,-1);
    \draw[yslant=-0.5,pattern={north west lines},pattern color=red](2,4) rectangle +(-1,-2);
  \draw[yslant=-0.5] (1,0) grid (3,4);
  \draw[yslant=0.5] (3,-3) grid (5,1);
  \draw[yslant=0.5,xslant=-1] (4,1) grid (6,3);
\node[anchor=south west] at (-0.2,2.5,0) {$S_1$};
\node[anchor=south west] at (-0.2,1.5,0) {$S_2$};
\node[anchor=south west] at (-0.2,0.5,0) {$S_3$};
\node[anchor=south west] at (-0.2,-0.5,0) {$S_4$};
\node[anchor=north west] at (0.45,4.7,0) {$Q_1$};
\node[anchor=north west] at (1.5,5.3,0) {$Q_2$};
\node[anchor=north west] at (0.5,-0.8,0) {$T_1$};
\node[anchor=north west] at (1.5,-1.2,0) {$T_2$};
\end{tikzpicture}} &
\small{\begin{tikzpicture}[scale=0.45]
    \draw[yslant=0.5,color=black,ultra thick] (3,-2) rectangle +(2,2);
    \draw[yslant=-0.5,color=black,ultra thick](3,3) rectangle +(-2,-2);
    \draw[yslant=-0.5,pattern={north west lines},pattern color=red](2,3) rectangle +(-1,-2);
    
  \draw[yslant=-0.5] (1,0) grid (3,4);
  \draw[yslant=0.5] (3,-3) grid (5,1);
  \draw[yslant=0.5,xslant=-1] (4,1) grid (6,3);
\node[anchor=south west] at (-0.2,2.5,0) {$S_1$};
\node[anchor=south west] at (-0.2,1.5,0) {$S_2$};
\node[anchor=south west] at (-0.2,0.5,0) {$S_3$};
\node[anchor=south west] at (-0.2,-0.5,0) {$S_4$};
\node[anchor=north west] at (0.45,4.7,0) {$Q_1$};
\node[anchor=north west] at (1.5,5.3,0) {$Q_2$};
\node[anchor=north west] at (0.5,-0.8,0) {$T_1$};
\node[anchor=north west] at (1.5,-1.2,0) {$T_2$};
\end{tikzpicture}} \\
(a) & (b) \\
$T_1' \in \mathcal{X}_T(C_{12})$ &  $T_1'' \in \mathcal{X}_T(C_{23})$ \\
\end{tabular}
    
\end{center}
\caption{In (a) and (b), with a thick line we show the cut $C_{12}$ and $C_{23}$, respectively. Red bundles correspond to $T_1'$ in (a) and $T_1''$ in (b).}
\label{fig:3agents422}
\end{figure}
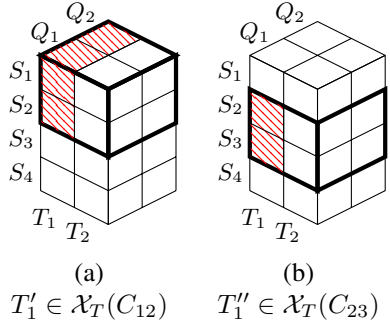 
\end{center}

\begin{center}

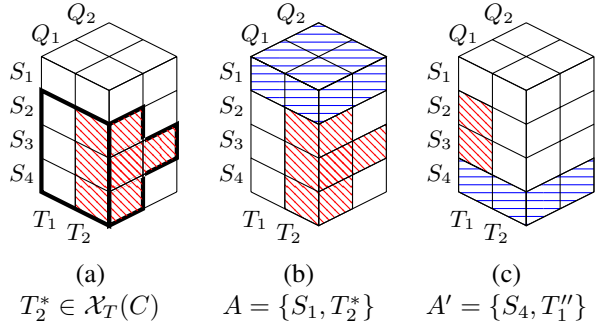
\begin{figure}[t]
\begin{center}

\begin{tabular}{ccc}
\small{\begin{tikzpicture}[scale=0.45]
      
    \draw[yslant=0.5,color=black,ultra thick] (3,-2) rectangle +(2,1);
    \draw[yslant=0.5,color=black,ultra thick] (3,-3) rectangle +(1,3);
    \draw[yslant=0.5,color=black,ultra thick,white] (3,-2) rectangle +(1,1);

    \draw[yslant=0.5,pattern={north west lines},pattern color=red] (3,-3) rectangle +(1,3);
    \draw[yslant=0.5,pattern={north west lines},pattern color=red] (3,-2) rectangle +(2,1);
    \draw[yslant=-0.5,pattern={north west lines},pattern color=red](3,3) rectangle +(-1,-3);
    
    \draw[yslant=-0.5,color=black,ultra thick](3,3) rectangle +(-2,-3);
  \draw[yslant=-0.5] (1,0) grid (3,4);
  \draw[yslant=0.5] (3,-3) grid (5,1);
  \draw[yslant=0.5,xslant=-1] (4,1) grid (6,3);
\node[anchor=south west] at (-0.2,2.5,0) {$S_1$};
\node[anchor=south west] at (-0.2,1.5,0) {$S_2$};
\node[anchor=south west] at (-0.2,0.5,0) {$S_3$};
\node[anchor=south west] at (-0.2,-0.5,0) {$S_4$};
\node[anchor=north west] at (0.45,4.7,0) {$Q_1$};
\node[anchor=north west] at (1.5,5.3,0) {$Q_2$};
\node[anchor=north west] at (0.5,-0.8,0) {$T_1$};
\node[anchor=north west] at (1.5,-1.2,0) {$T_2$};
\end{tikzpicture}}
 &
 \small{\begin{tikzpicture}[scale=0.45]

    \draw[yslant=0.5,pattern={north west lines},pattern color=red] (3,-3) rectangle +(1,3);
    \draw[yslant=0.5,pattern={north west lines},pattern color=red] (3,-2) rectangle +(2,1);
    \draw[yslant=-0.5,pattern={north west lines},pattern color=red](3,3) rectangle +(-1,-3);
    \draw[yslant=0.5,xslant=-1,pattern={horizontal lines},pattern color=blue](3,3) (6,3) rectangle +(-2,-2);  
    \draw[yslant=0.5,pattern={horizontal lines},pattern color=blue](3,0) rectangle +(2,1);     
    \draw[yslant=-0.5,pattern={horizontal lines},pattern color=blue](3,4) rectangle +(-2,-1);  
    
     \draw[yslant=-0.5] (1,0) grid (3,4);
  \draw[yslant=0.5] (3,-3) grid (5,1);
  \draw[yslant=0.5,xslant=-1] (4,1) grid (6,3);
\node[anchor=south west] at (-0.2,2.5,0) {$S_1$};
\node[anchor=south west] at (-0.2,1.5,0) {$S_2$};
\node[anchor=south west] at (-0.2,0.5,0) {$S_3$};
\node[anchor=south west] at (-0.2,-0.5,0) {$S_4$};
\node[anchor=north west] at (0.45,4.7,0) {$Q_1$};
\node[anchor=north west] at (1.5,5.3,0) {$Q_2$};
\node[anchor=north west] at (0.5,-0.8,0) {$T_1$};
\node[anchor=north west] at (1.5,-1.2,0) {$T_2$};
\end{tikzpicture}}
 &
 \small{\begin{tikzpicture}[scale=0.45]
\draw[yslant=-0.5,pattern={north west lines},pattern color=red](2,3) rectangle +(-1,-2);
\draw[yslant=0.5,pattern={horizontal lines},pattern color=blue](3,-3) rectangle +(2,1);     
\draw[yslant=-0.5,pattern={horizontal lines},pattern color=blue](3,1) rectangle +(-2,-1);  
  \draw[yslant=-0.5] (1,0) grid (3,4);
  \draw[yslant=0.5] (3,-3) grid (5,1);
  \draw[yslant=0.5,xslant=-1] (4,1) grid (6,3);
\node[anchor=south west] at (-0.2,2.5,0) {$S_1$};
\node[anchor=south west] at (-0.2,1.5,0) {$S_2$};
\node[anchor=south west] at (-0.2,0.5,0) {$S_3$};
\node[anchor=south west] at (-0.2,-0.5,0) {$S_4$};
\node[anchor=north west] at (0.45,4.7,0) {$Q_1$};
\node[anchor=north west] at (1.5,5.3,0) {$Q_2$};
\node[anchor=north west] at (0.5,-0.8,0) {$T_1$};
\node[anchor=north west] at (1.5,-1.2,0) {$T_2$};
\end{tikzpicture}}
\\
(a) & (b) & (c) \\
$T_2^* \in \mathcal{X}_T(C)$ &  $A = \{S_1,T_2^*\}$ & $A' = \{S_4,T_1''\}$  \\
\end{tabular}
    
\end{center}
\caption{For cut $C=\{(Q_1 \setminus S_1) \cup S_3\}$, let $T_2^* \in \mathcal{X}_T(C)$ (as shown in (a)). Then, the allocations $A=(S_1,T_2^*)$ (b) and  $A'=(S_4,T_1'')$ (c) are disjoint on $Q_2$. We use blue for agent $S$ and red for agent $T$'s bundles.}
\label{fig:3agents422a}
\end{figure} 
\end{center}

\begin{lemma} 
\label{lem:imposs(3,3,3)}
There exists an instance of three agents with subadditive valuation functions, where no $(1,1/2,1/2)$ -MMS allocation exists.
\end{lemma}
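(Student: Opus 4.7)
The plan is to exhibit a concrete fair-division instance with three subadditive agents in which no $(1,1/2,1/2)$-MMS allocation exists. The general strategy is the standard one for MMS lower bounds: design an instance carefully and then rule out every allocation by case analysis on which agent is given full MMS.

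First I would construct the instance. Since an instance with three \emph{identical} valuations always admits the trivial $(1,1/2,1/2)$-MMS allocation obtained from any agent's MMS partition (each bundle has value at least $\mu = 1$ for every agent), the construction must use three \emph{asymmetric} valuations. My concrete goal is to pick subadditive $v_1, v_2, v_3$ (XOS suffices, and possibly even additive with non-uniform weights, though subadditivity gives more freedom) over a small item set $M$ (in the range $5$ to $9$ items), normalized so that $\mu_1 = \mu_2 = \mu_3 = 1$. The design principle is to engineer a \emph{conflict} between the three agents' MMS partitions: every subset $B$ with $v_i(B) \geq 1$ must consume items that are also ``essential'' to at least one of the other two agents reaching value $1/2$, so that $M \setminus B$ cannot be split into two bundles with each of the remaining agents achieving at least $1/2$ of her MMS.

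Next I would compute, for each agent $i$, the family $\mathcal{F}_i = \{B \subseteq M : v_i(B) \geq 1\}$ of ``full-MMS bundles,'' and the family $\mathcal{H}_i = \{B \subseteq M : v_i(B) \geq 1/2\}$ of ``half-MMS bundles.'' These families will encode the combinatorial structure to be exploited. The goal is to choose the valuations so that, for every $i$ and every $B_i \in \mathcal{F}_i$, the residual set $M \setminus B_i$ admits no partition into $(C_j, C_k)$ with $C_j \in \mathcal{H}_j$ and $C_k \in \mathcal{H}_k$, where $\{j, k\} = \{1,2,3\} \setminus \{i\}$. Ideally the three agents are chosen ``rotationally symmetric'' so that only one case needs to be analyzed in detail and the others follow by relabeling.

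Assuming toward contradiction that an allocation $A = (A_1, A_2, A_3)$ is $(1,1/2,1/2)$-MMS, at least one agent $i$ satisfies $v_i(A_i) \geq 1$, i.e.\ $A_i \in \mathcal{F}_i$. I would then do case analysis on this ``lucky'' agent $i$ and on the choice of $A_i \in \mathcal{F}_i$. For each such choice, I argue that the residual items $M \setminus A_i$ cannot be partitioned so that both remaining agents simultaneously obtain a bundle in their respective $\mathcal{H}_j$, by enumerating the possible subsets and invoking the subadditive values directly. Because the full-MMS bundles of agent $i$ are highly constrained (they must contain some ``heavy'' core), the residual structure is limited enough to make this enumeration tractable.

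The main obstacle is the construction itself: tuning the valuations so that the combinatorial blocker in step~3 holds for \emph{every} lucky agent and \emph{every} choice of her full-MMS bundle, while still respecting subadditivity and keeping $\mu_i = 1$ for all $i$. Getting all three constraints (full MMS normalized to $1$, conflicts between all three pairs of agents, and global subadditivity) to coexist on a small instance is the delicate part; once the instance is correctly calibrated, the case analysis above should be short and mechanical.
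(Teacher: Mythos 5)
Your proposal is a strategy outline, not a proof: the entire mathematical content of this lemma is the construction of the instance, and you explicitly leave that as ``the delicate part'' without producing it. Everything you do write down --- normalize $\mu_i=1$, define the families of full-MMS and half-MMS bundles, assume a $(1,1/2,1/2)$-MMS allocation exists, case-split on which agent receives a full bundle, and exploit a rotational symmetry so only one case needs detailed analysis --- is exactly the skeleton the paper follows, so the high-level approach is right. But without a concrete $M$ and concrete valuations there is nothing to verify: no subadditivity check, no monotonicity check, and no actual contradiction. The lemma is precisely the claim that such a calibrated instance \emph{exists}, and that existence is what you have not established.

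For calibration against what is actually needed: the paper's instance uses $27$ items $g(i,j,k)$ arranged as a $3\times3\times3$ cube, with each agent's three MMS bundles being the slices along one axis, and each agent's value on a proper subset $B$ of one of her bundles set to $1/2+\epsilon$ or $1/2-\epsilon$ according to $|B|$ and membership in a carefully chosen family of $4$-element ``$B^*$'' sets of the form $R_i\cap((X_j\cap Y_k)\cup Y_{k+1})$; a separate claim is needed to check the $5$-element values are consistent (that the complement of a $B^*$ set never contains another $B^*$ set). The punchline is that any full bundle for one agent together with any half-value bundle for a second agent must jointly cover a set of the form $(T_1\cap S_2)\cup S_3$, whose trace on every bundle of the third agent already has value more than $1/2$, leaving her below $1/2$. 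Your guess of a $5$--$9$ item instance (and the parenthetical that additive valuations might suffice) suggests you are underestimating how much structure is required; in particular, for additive valuations much stronger guarantees than $(1,1/2,1/2)$-MMS are known for three agents, so subadditivity is doing real work here. To close the gap you would need to exhibit the instance and carry out the verification, which is where all the difficulty lies.
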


\begin{proof}
Consider an instance with items $M=\{g(i,j,k)\mid \mbox{ for } i,j,k \le 3\}$
and the following MMS bundles for the agents:
\begin{itemize}
    \item for agent $S$: $S_i=\{g(i,j,k)\mid \forall j,k\}$ for $i\in \{1,2,3\}$,
    \item for agent $T$: $T_j=\{g(i,j,k)\mid \forall i,k\}$ for $j\in \{1,2,3\}$,
    \item for agent $Q$: $Q_k=\{g(i,j,k)\mid \forall i,j\}$ for $k\in \{1,2,3\}$.
\end{itemize}
In the following, we make the convention that $i+1 = (i\mod 3) +1$, $i+2 = (i+1 \mod 3) +1$, and similarly for the other indices $j$ and $k$.

We will construct the values of all agents symmetrically, such that for each agent $R\in \{S,Q,T\}$, her value for set $B \subseteq M$ will be $v_R(B)=\max_i\{v_R(B \cap R_i)\}$, $v_R(R_i)=1$ for any $i\in \{1,2,3\}$, and $v_R(B)\in\{1/2+\epsilon,1/2-\epsilon\}$ for $B\subset R_i$, for some $i\in\{1,2,3\}$ (where those values will be chosen appropriately so that the valuation functions are subadditive). 
Thus we only need to define the valuation functions only for $B \subset R_i$ for any $i\in \{1,2,3\}$.

For any $R\in \{S,Q,T\}$, $i\in \{1,2,3\}$, and $B \subset R_i$ with $\lvert B\rvert \notin \{4,5\}$, 
\begin{equation}
   \label{eq:422T}
       v_R(B)=\begin{cases}
      1/2 + \epsilon, & \text{if } \lvert B\rvert \geq 6\\
      1/2 - \epsilon, & \text{if } \lvert B\rvert \leq 3
      \end{cases}
   \end{equation}

Consider now the bundles of the form $B^*=R_i\cap((X_j\cap Y_{k})\cup Y_{k+1})$, for any $i,j,k\in \{1,2,3\}$, where $X$ and $Y$ are the other two agents but $R$, with $X\neq Y$. Note that the above sets $B^*$ have cardinality $4$. For any $B \subset R_i$ with $\lvert B\rvert \in \{4,5\}$ we define

\begin{equation}
   \label{eq:422T}
       v_R(B)=\begin{cases}
      1/2 + \epsilon, & \text{if $\lvert B\rvert = 4$, and $B$ is some $B^*$ bundle}\\
      1/2 - \epsilon, & \text{if $\lvert B\rvert = 4$, and $B$ is different from all $B^*$ bundle} \\
      1-v_R(R_i\setminus B), & \text{if $\lvert B\rvert = 5$}
      \end{cases}
   \end{equation}

Subadditivity is trivially guaranteed for those valuation functions. We will only show the following claim to verify that the valuations are valid monotone valuations. 

\begin{center}
\begin{figure}[t]
\begin{center}

\begin{tabular}{ccc}
\small{\begin{tikzpicture}[scale=0.475]

\draw[yslant=0.5,pattern={north west lines},pattern color=red] (4,-3) rectangle +(1,2);
\draw[yslant=0.5,pattern={north west lines},pattern color=red] (4,-3) rectangle +(3,1);
\draw[yslant=0.5,xslant=-1,pattern={north west lines},pattern color=red](3,-1) rectangle +(1,1);
\draw[yslant=-0.5,pattern={north west lines},pattern color=red](4,3) rectangle +(-1,-2);
    
\draw[yslant=0.5,pattern={horizontal lines},pattern color=blue](4,-4) rectangle +(3,1);     
\draw[yslant=-0.5,pattern={horizontal lines},pattern color=blue](4,1) rectangle +(-3,-1);

\draw[yslant=-0.5] (1,0) grid (4,3);
\draw[yslant=0.5] (4,-4) grid (7,-1);
\draw[yslant=0.5,xslant=-1] (3,-1) grid (6,2);
\node[anchor=south west] at (-0.25,1.5,0) {$S_1$};
\node[anchor=south west] at (-0.25,0.5,0) {$S_2$};
\node[anchor=south west] at (-0.25,-0.5,0) {$S_3$};
\node[anchor=north west] at (0.3,-0.6,0) {$T_3$};
\node[anchor=north west] at (1.3,-1.1,0) {$T_2$};
\node[anchor=north west] at (2.3,-1.6,0) {$T_1$};
\node[anchor=north west] at (0.4,3.7,0) {$Q_1$};
\node[anchor=north west] at (1.4,4.2,0) {$Q_2$};
\node[anchor=north west] at (2.4,4.7,0) {$Q_3$};
\end{tikzpicture} }
 &
 \small{\begin{tikzpicture}[scale=0.45]

 \draw[yslant=0.5,xslant=-1,pattern={crosshatch dots},pattern color=green](3,-1) rectangle +(1,1);
    \draw[yslant=0.5,pattern={crosshatch dots},pattern color=green] (4,-3) rectangle +(1,2);
     \draw[yslant=-0.5,pattern={crosshatch dots},pattern color=green](4,2) rectangle +(-3,-1);  
\draw[yslant=-0.5,pattern={crosshatch dots},pattern color=green](4,3) rectangle +(-1,-1); 

\draw[yslant=0.5,pattern={horizontal lines},pattern color=blue](4,-4) rectangle +(3,1);     
\draw[yslant=-0.5,pattern={horizontal lines},pattern color=blue](4,1) rectangle +(-3,-1); 
\draw[yslant=-0.5] (1,0) grid (4,3);
\draw[yslant=0.5] (4,-4) grid (7,-1);
\draw[yslant=0.5,xslant=-1] (3,-1) grid (6,2);
\node[anchor=south west] at (-0.25,1.5,0) {$S_1$};
\node[anchor=south west] at (-0.25,0.5,0) {$S_2$};
\node[anchor=south west] at (-0.25,-0.5,0) {$S_3$};
\node[anchor=north west] at (0.3,-0.6,0) {$T_3$};
\node[anchor=north west] at (1.3,-1.1,0) {$T_2$};
\node[anchor=north west] at (2.3,-1.6,0) {$T_1$};
\node[anchor=north west] at (0.4,3.7,0) {$Q_1$};
\node[anchor=north west] at (1.4,4.2,0) {$Q_2$};
\node[anchor=north west] at (2.4,4.7,0) {$Q_3$};
\end{tikzpicture}}\\
(a) & (b)\\
\end{tabular}\end{center}\caption{The red bundle corresponds to $T_1 \cap ((Q_1 \cap S_2) \cup S_1)$ and the blue one to $S_3$, as shown in (a). Agent $Q$ values the remaining items at $1/2-\epsilon$. In (b), the green bundle corresponds to $Q_1 \cap ((T_1 \cap S_1) \cup S_2)$ and the blue one to $S_1$. Agent $T$ values the remaining items at $1/2-\epsilon$.}
\label{fig:3agents322a}
\end{figure}
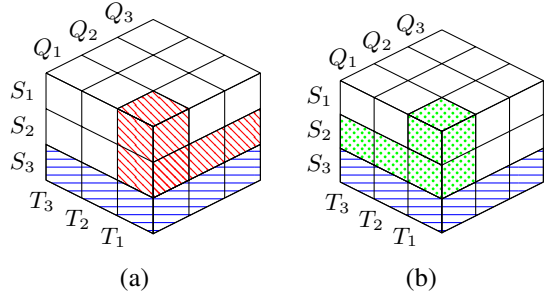
\end{center}

\begin{claim}
If $S\subset R_i$ is some $B^*$ set for $R$, then $R_i\setminus S$ is not a superset of some other $B^*$ set for $R$.
\end{claim}
\begin{proof}
First note that for any $B^*=R_i\cap((X_j\cap Y_{k})\cup Y_{k+1})$, it is $R_i\setminus B^* = R_i \cap (((X_{j+1} \cup X_{j+2})\cap Y_k)\cup Y_{k+2}) $. So, it is clear that $R_i\setminus B^*$ is a superset of bundles of the form  $R_i\cap((X_{j'}\cap Y_{k'})\cup Y_{k'+2})$, which are not considered as $B^*$. So, if $S$ is some $B^*$ set for $R$, considering $R_i$, $R_i\setminus S$ cannot be a superset of some $B^*$ set for $R$.
\end{proof}

Assume for the sake of contradiction that there exists a $(1,1/2,1/2)$-MMS allocation, $A=(A_S,A_T,A_Q)$. 
Due to the definition of the valuation functions we may assume that $A_S$ is one MMS bundle for $S$, and $A_T,A_Q$ are  subsets of one MMS bundle of $T$ and $Q$, respectively. Note that the values are symmetric and we make no assumptions for agent $S$. As a result we can prove the inaproximability of $(1/2,1,1/2)$-MMS and $(1/2,1/2,1)$-MMS.

Due to the symmetric instance, suppose w.l.o.g. that $A_S=S_3$ and $A_T\subseteq T_1$. Figure~\ref{fig:3agents322a}(a) shows an example where $T$ receives a minimum possible cardinality set;  Figure~\ref{fig:3agents322a}(b) shows the case where we considered $Q$ instead of $T$, to illustrate the symmetry. We will next show that it should be that $A_T\supset T_1\cap S_2$. 

If $|A_T|\geq 6$, clearly $A_T=T_1\setminus S_3 \supset T_1\cap S_2$.
If $|A_T| = 5$, based on the definition of the valuations, it should be that $T_1\setminus A_T$ is not a $B^*$ bundle for $T$. Note that $T_1\setminus A_T $ is a superset of $ T_1\cap S_3$ and has cardinality $4$. Therefore, $T_1\setminus A_T = T_1\cap ((Q_k\cap S_1)\cup S_3)$, for some $k\in \{1,2,3\}$, otherwise (i.e., if we replace $S_1$ with $S_2$) it would be a $B^*$ set for $T$. As a result, $A_T \supset T_1\cap S_2$. If $|A_T| = 4$, then $A_T$ should be a $B^*$ set for $T$. Since $A_T\cap S_3 =\emptyset$, $A_T$ should be of the form $A_T = T_1\cap ((Q_k\cap S_1)\cup S_2)$, for some $k\in \{1,2,3\}$, which means again that  $A_T \supset T_1\cap S_2$. 

So, in any case $A_S\cup A_T \supset (T_1\cap S_2)\cup S_3$. This leaves $Q$ with some $A_Q\subseteq Q_k$, for some $k\in\{1,2,3\}$, with $|A_Q|\leq 5$. Note that $Q_k\setminus A_Q = Q_k\cap ((T_1\cap S_2)\cup S_3)$, which $Q$ values by more than $1/2$, therefore $v_Q(A_Q)<1/2$, by the valuation definition, which contradicts the assumption of the existence of a $(1,1/2,1/2)$-MMS allocation.
\end{proof}

\subsection{Many Agents}
\label{sec:ManyAgentsImpossib}
In this section we show impossibility results regarding many agents with subadditive valuations. More specifically, we show the existence of $1/2+\epsilon$ inapproximability for any vector $\boldsymbol{d}$ (Lemma ~\ref{Lemma:upperhalf}), inapproximability up to any factor $\epsilon$ when agents have less than $n$ MMS bundles (Lemma~\ref{lem:nagentsImp1}) or for MMS($n,\ldots,n,\left\lfloor\frac{n}{3}\right\rfloor$) bundles (Lemma~\ref{lem:imposs(3,3,1)}). Our last Lemma shows that any inapproximability result for $1/2$-MMS$(\mathbf{d})$ implies an inapproximability for $(1/3+\epsilon)$-MMS$(\mathbf{d})$ (Lemma~\ref{Lemma:upperBound}).

\begin{lemma}\label{Lemma:upperhalf}
    For any number of agents $n$ and any vector $\boldsymbol{d}=(d_1,\ldots, d_n)$ there exists an instance, where agents have subadditive valuations,  where no $(1/2+\epsilon)$-MMS$(\boldsymbol{d})$ allocation exists, for any $\epsilon>0$. 
\end{lemma}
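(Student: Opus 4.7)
The plan is to exhibit, for each $n \geq 2$ and each $\boldsymbol{d}=(d_1,\ldots,d_n)$, an explicit subadditive instance refuting $(1/2+\epsilon)$-MMS$(\boldsymbol{d})$ via a Ghodsi-style grid construction. Let $M = \prod_{i=1}^n [d_i]$ be the combinatorial $d_1 \times \cdots \times d_n$ grid of items, and for each agent $i$ and each $j \in [d_i]$ define the axis-aligned slice $S_j^i := \{\boldsymbol{x} \in M : x_i = j\}$; the family $(S_1^i,\ldots,S_{d_i}^i)$ is a partition of $M$ into $d_i$ parts. I would then define agent $i$'s valuation by $v_i(\emptyset)=0$, $v_i(S)=1$ whenever $S$ contains some slice $S_j^i$ in full, and $v_i(S)=1/2$ on all other non-empty sets.

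The first thing to verify is that $v_i$ is monotone (immediate from the definition) and subadditive. The only non-trivial subadditivity case is $v_i(S\cup T)=1$, which forces $S\cup T$ to contain some slice $S_j^i$ entirely; if one of $S,T$ is empty then the other already has value $1$, and otherwise $v_i(S),v_i(T)\geq 1/2$ so that $v_i(S)+v_i(T)\geq 1$. Next I would confirm $\mu_i^{d_i}=1$: the slice partition $(S_1^i,\ldots,S_{d_i}^i)$ assigns value $1$ to every bundle and therefore witnesses $\mu_i^{d_i}\geq 1$, while the universal bound $v_i\leq 1$ yields the reverse inequality.

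Finally, I would derive impossibility. Suppose for contradiction that $(A_1,\ldots,A_n)$ is a $(1/2+\epsilon)$-MMS$(\boldsymbol{d})$ allocation. Then $v_i(A_i)>1/2$ for every $i$, and since $v_i$ only takes values in $\{0,1/2,1\}$, this forces $v_i(A_i)=1$; hence $A_i$ contains some full slice $S_{j_i}^i$. But for any two distinct agents $i\neq k$, the slices $S_{j_i}^i$ and $S_{j_k}^k$ intersect in the non-empty subgrid $\{\boldsymbol{x}\in M : x_i=j_i,\; x_k=j_k\}$ (using the convention that the empty product equals $1$ when $n=2$). This contradicts the pairwise disjointness $A_i\cap A_k=\emptyset$ and completes the argument.

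The main obstacle is engineering a single valuation that is simultaneously subadditive, has MMS value exactly $1$, and collapses the $(1/2+\epsilon)$ threshold to the discrete event ``$A_i$ contains a full slice''; the three-valued construction $\{0,1/2,1\}$ achieves this compactly, at the cost of some care in the degenerate case $d_i=1$, where the sole slice $S_1^i$ coincides with $M$ itself and the impossibility reduces to the trivial observation that only one agent can receive the entire ground set.
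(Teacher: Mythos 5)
Your proposal is correct and is essentially identical to the paper's own proof: the paper also uses the grid $M=\{g(r_1,\ldots,r_n): r_i\le d_i\}$ with axis-aligned slices $B_{i,j}=\{g(r_1,\ldots,r_n): r_i=j\}$ and the three-valued $\{0,1/2,1\}$ valuation that jumps to $1$ exactly when a full slice is contained, then derives the contradiction from the non-empty pairwise intersection of slices along different axes. Your additional remarks on the degenerate case $d_i=1$ are a minor, harmless refinement of the same argument.
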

\begin{proof}
    We will construct a counterexample in which every set $S \subseteq M$ has value either $1/2$ or $1$ for each agent $i$. Those valuations are carefully constructed so that if an agent receives more than $1/2$ value, it is impossible for any other agent to receive more than $1/2$.

    We consider a set of $\prod_{i=1}^n d_i$ items, that we denote as $M = \left\{g(r_1,\ldots,r_n),r_i \le d_i\right\}$.
    Let the bundle $B_{i,j}=\left\{g(r_1,\ldots,r_n):r_i=j\right\}$, for all $i\leq  n$ and $j \le d_i$. We construct the functions of each agent $i$ and set of items $S \subseteq M$ as follows:
    $$v_i\left(S\right) = \begin{cases}
        0 & \text{if } S = \emptyset\\
        1 & \text{if } \exists j: B_{i,j} \subseteq S \\
        1/2 & \text{otherwise}\\
    \end{cases}$$
Observe that the functions are subbaditive, and that for each agent $i$, $v_i(M)=1$, and there exists a partition of the items into $d_i$ bundles, namely $B_{i,1},\ldots,B_{i,d_i}$, where agent $i$ has value $1$ for each of them.

Consider any arbitrary pair of agents $i \ne i'$, then  $B_{i,j} \cap B_{i',j'} = \left\{g(r_1,\ldots,r_n):r_i=j,r_{i'}=j'\right\}$ for every pair $(j,j')$. Assume for the sake of contradiction that there exists an $(1/2+\epsilon)$-MMS$(\boldsymbol{d})$ allocation, in which bundles $A_i$ and $A_{i'}$ are allocated to agents $i$ and $i'$, respectively. Since $v_i\left(A_i\right) > 1/2$, there exists some $B_{i,j} \subseteq A_i$ for some $j$, and similarly since $v_{i'}\left(A_{i'}\right) > 1/2$, there exists some $B_{i',j'} \subseteq A_{i'}$ for some $j'$. Hence, for each item $g = g(r_1,\ldots,r_n):r_i=j,r_{i'}=j'$ we have $g \in A_i$ and also $g \in A_{i'}$ but this contradicts that bundles $A_i$ and $A_{i'}$ belong to a valid allocation. Therefore, in this instance there is no $(1/2+\epsilon)$-MMS$(\boldsymbol{d})$ allocation.
\end{proof}

\begin{lemma} 
\label{lem:nagentsImp1}
For any number of agents $n$ and any vector $\boldsymbol{d}=(d_1,\ldots, d_n)$, with $d_i< n$ for all $i$, there exists an instance, where agents have subadditive valuations, where no $\epsilon$-MMS($\boldsymbol{d}$) allocation exists, for any $\epsilon>0$.
\end{lemma}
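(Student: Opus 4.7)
The plan is to construct a minimal pigeonhole-style instance. Let $m := \max_{i \in N} d_i$, and take $M$ to be a set of exactly $m$ items; by the hypothesis $d_i < n$ for all $i$, we have $m < n$, so there are strictly fewer items than agents. I will equip every agent $i$ with the same indicator-style valuation, $v_i(S) = 1$ whenever $S \neq \emptyset$ and $v_i(\emptyset) = 0$.

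First I would verify that this $v_i$ is monotone, normalized, and subadditive. Monotonicity and normalization are immediate. For subadditivity, let $S, T \subseteq M$: if $S \cup T = \emptyset$ the inequality is trivial, otherwise $v_i(S \cup T) = 1$ and at least one of $S, T$ is non-empty, so $v_i(S) + v_i(T) \geq 1$. Next I would compute $\mu_i^{d_i}(M) = 1$: since $d_i \leq m = |M|$, agent $i$ can partition $M$ into $d_i$ non-empty parts (e.g.\ $d_i - 1$ singletons plus one bundle containing the remaining $m - d_i + 1$ items), and every such part has value $1$; the matching upper bound is immediate from $v_i \leq 1$.

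The final step is a one-line pigeonhole argument. In any allocation $A = (A_1, \ldots, A_n)$, the disjoint bundles $A_1, \ldots, A_n$ partition a set of size $|M| = m < n$, so some $A_i$ must be empty. That agent then has $v_i(A_i) = 0 < \epsilon \cdot \mu_i^{d_i}(M)$ for every $\epsilon > 0$, ruling out any $\epsilon$-MMS$(\boldsymbol{d})$ allocation. I do not foresee a real obstacle: the construction is forced by the strict inequality $d_i < n$, and the only minor subtlety is confirming subadditivity of the $0/1$ valuation, which is immediate.
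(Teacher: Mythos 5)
Your proposal is correct and takes essentially the same route as the paper: the paper uses the all-ones indicator valuation on $n-1$ items and the same pigeonhole argument, the only cosmetic difference being that you take $\max_i d_i$ items instead of $n-1$ (both choices satisfy $d_i \le |M| < n$, which is all that is needed). Your version is in fact slightly more careful, since you explicitly set $v_i(\emptyset)=0$ and verify subadditivity, details the paper glosses over.
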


\begin{proof}
    This is straight forward. Assume the instance with $n$ agents, $n-1$ items and $v_i\left(S\right) = 1, \forall S \subseteq M,i \in N$. Each agent $i$ can partition $M$ into $d_i< n$ bundles that they value each with $1$. However, in every possible allocation there exists at least one agent without allocated items and her value is equal to $0$.
\end{proof}

\begin{corollary}
\label{cor:d_i<k}
    For any number of agents $n$, and vector $\mathbf{d}=(d_1,\ldots, d_n)$, if there exists a subset of agents $N'\subseteq N$ such that $d_i< |N'|$ for all $i\in N'$,
    then there exists an instance, where agents have subadditive valuations, where no $\epsilon$-MMS$(\mathbf{d})$ allocation exists, for any $\epsilon>0$. 
\end{corollary}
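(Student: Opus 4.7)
The plan is to lift the hard instance of Lemma~\ref{lem:nagentsImp1} from $n$ agents to the subset $N'\subseteq N$, while handing each remaining agent a private pool of items so that their MMS constraints are trivially satisfied and do not interfere with a pigeonhole argument on $N'$. Intuitively, Lemma~\ref{lem:nagentsImp1} is the special case $N'=N$, and the general statement is obtained by decoupling the ``hard'' subset of agents from the rest via disjoint item pools.

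First I would set $k = |N'|$ and build the item set $M = M^\ast \cup \bigcup_{i \notin N'} M_i$, where $M^\ast$ is a shared pool of exactly $k-1$ items and, for each $i \notin N'$, $M_i$ is a private pool of exactly $d_i$ items, all pairwise disjoint. I would then define $v_i(S)=1$ if $S\cap M^\ast \neq \emptyset$ and $v_i(S)=0$ otherwise for each $i\in N'$, and $v_i(S) = |S\cap M_i|$ for each $i\notin N'$. Both families are subadditive: the first because the values lie in $\{0,1\}$, the second because it is additive.

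Next I would verify that $\mu_i^{d_i}(M) = 1$ for every agent. For $i\in N'$, the hypothesis $d_i < k$, i.e.\ $d_i \le k-1 = |M^\ast|$, allows partitioning $M$ into $d_i$ bundles each containing at least one element of $M^\ast$, and each such bundle has $v_i$-value $1$. For $i\notin N'$, since $|M_i| = d_i$, distributing the items of $M_i$ one-by-one across $d_i$ bundles (and appending the rest of $M$ arbitrarily) gives $d_i$ bundles of value $1$ each. In both cases $v_i(M)\le 1$, so these are tight.

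Finally, the impossibility follows from pigeonhole on $M^\ast$: in any allocation $A=(A_1,\dots,A_n)$, the $k-1$ items of $M^\ast$ are split among the $n$ agents, so at most $k-1$ of the $k$ agents of $N'$ can receive any item of $M^\ast$. Hence there exists $i^\ast\in N'$ with $A_{i^\ast}\cap M^\ast = \emptyset$, which gives $v_{i^\ast}(A_{i^\ast}) = 0 < \epsilon \cdot \mu_{i^\ast}^{d_{i^\ast}}$ and violates $\epsilon$-MMS$(\mathbf{d})$. There is no real obstacle here; the only care required is to ensure that the private pools $M_i$ for $i\notin N'$ do not inadvertently inflate the MMS values of the $N'$ agents, which is immediate from the disjointness of $M^\ast$ and the $M_i$'s together with the $\{0,1\}$-valued definition of $v_i$ on $N'$.
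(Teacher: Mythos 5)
Your proof is correct and follows essentially the same route as the paper: the corollary is stated there as a direct consequence of Lemma~\ref{lem:nagentsImp1}, whose ``$k$ agents versus $k-1$ items'' pigeonhole argument you reproduce on the subset $N'$, merely adding disjoint private pools so that the agents outside $N'$ have well-defined, non-interfering maximin shares. One trivial slip: for $i\notin N'$ you assert $v_i(M)\le 1$, whereas in fact $v_i(M)=|M_i|=d_i$; this only affects the (unneeded) tightness claim $\mu_i^{d_i}=1$ for those agents and has no bearing on the impossibility, which rests solely on some agent of $N'$ receiving nothing from $M^\ast$.
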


\begin{lemma}
\label{lem:imposs(3,3,1)}
    For any number of agents $n$, there exists an instance, where agents have subadditive valuations, where no $1/2$-MMS$(n,\ldots,n,\left\lfloor \frac{n}{3} \right\rfloor)$ allocation exists, or even no $\boldsymbol{\alpha}$-MMS$(n,\ldots,n,\left\lfloor \frac{n}{3} \right\rfloor)$ allocation exists, with $\alpha_n\geq 1/2$ and $\alpha_i>0$, for all $i<n$.
\end{lemma}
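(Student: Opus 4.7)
The plan is to construct a pigeonhole-style instance in which items are scarce enough that giving agent $n$ at least half her MMS value forces some other agent to receive nothing. Concretely, I would set $k = \lfloor n/3 \rfloor$, take $|M| = n$ items, and partition them into $k$ blocks $B_1, \ldots, B_k$ each of size at least $3$; this is feasible since $3k \leq n$ by the definition of $k$.

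For each $i < n$, I would define the valuation $v_i(S) = 1$ when $S \neq \emptyset$ and $v_i(\emptyset) = 0$; this is trivially subadditive and monotone, and since $|M| = n$ the agent can split $M$ into $n$ singleton bundles, giving $\mu_i^n = 1$. For agent $n$, I would use an additive valuation that assigns weight $1/|B_j|$ to every item in $B_j$. The partition $(B_1,\ldots,B_k)$ witnesses $\mu_n^k \geq 1$, and additivity together with $v_n(M) = k$ yields $\mu_n^k \leq v_n(M)/k = 1$; hence $\mu_n^k = 1$.

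Suppose towards contradiction that some allocation $A$ is $\boldsymbol{\alpha}$-MMS$(n, \ldots, n, k)$ with $\alpha_n \geq 1/2$ and $\alpha_i > 0$ for every $i < n$. Then $v_n(A_n) \geq \alpha_n \cdot \mu_n^k \geq 1/2$. Since each item has $v_n$-weight at most $1/3$ (because $|B_j| \geq 3$ for every $j$), a singleton bundle has value at most $1/3 < 1/2$, forcing $|A_n| \geq 2$. This leaves at most $n - 2$ items for the remaining $n - 1$ agents, so by pigeonhole some agent $i < n$ ends up with $A_i = \emptyset$; but then $v_i(A_i) = 0 < \alpha_i \cdot \mu_i^n$, contradicting the assumption.

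The whole argument relies on the inequality $3k \leq n$, which is exactly what permits enforcing $|B_j| \geq 3$ and thereby rules out satisfying agent $n$ with a single item. No step appears to require delicate analysis; the only mild bookkeeping is choosing valid block sizes when $n$ is not a multiple of three, but any sizes in $\{3, 4, 5, \ldots\}$ summing to $n$ will do.
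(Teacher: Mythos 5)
Your proposal is correct and follows essentially the same route as the paper: $n$ items, constant-$1$ valuations for the first $n-1$ agents, and a block-based valuation for agent $n$ under which every singleton is worth at most $1/3$, so that $v_n(A_n)\ge 1/2$ forces $\lvert A_n\rvert\ge 2$ and the pigeonhole leaves some other agent empty-handed. The only cosmetic difference is that you make agent $n$'s valuation additive with weights $1/\lvert B_j\rvert$, whereas the paper uses $v_n(S)=\max_k \lvert S\cap B_k\rvert/3$; both serve the identical purpose.
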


\begin{proof}
    Consider an instance with $n$ items, $M=\{g_1,\ldots,g_n\}$, where for any agents $i<n$, $v_i(S)=1$, for any $S\subseteq M$. Regarding agent $n$, we define the bundles $B_k=\{g_{3k-2}, g_{3k-1},g_{3k}\}$, for $1\leq k\leq \left\lfloor \frac{n}{3} \right\rfloor$, and then for any $S\subseteq M$, we define $v_n(S)$ as,
$$v_n(S)=\frac{\max_{1\leq k \leq \left\lfloor \frac{n}{3} \right\rfloor} \lvert S\cap B_k\rvert}3 \,.$$

Clearly, there are $\left\lfloor \frac{n}{3} \right\rfloor$ disjoint bundles of $M$, $(B_1,\ldots,B_{\left\lfloor \frac{n}{3}\right\rfloor})$, for which agent $n$ has value $1$.     
Consider any allocation $A$ that is $\boldsymbol{\alpha}$-MMS$(n,\ldots,n,\left\lfloor \frac{n}{3} \right\rfloor)$, for any $\boldsymbol{\alpha}$ with  $\alpha_n\geq 1/2$ and $\alpha_i>0$, for all $i<n$. Then, it should be that $|A_n|\geq 2$ which leaves at most $n-2$ items for the rest $n-1$ agents. So one agent $i<n$ would not receive any item in $A$, which violates the fact that $\alpha_i>0$, for all $i<n$.
\end{proof}

We show in the following lemma that all impossibility results of the non-existence of $1/2$-MMS allocations, extend to non-existence of $(1/3+\epsilon)$-MMS allocation, for any $\epsilon > 0$.

\begin{lemma} \label{Lemma:upperBound}
    Given a vector $\boldsymbol{d}=(d_1,\ldots, d_n)$, if there exists an instance $I$ of $n$ agents with subadditive valuations over $m$ items where no $1/2$-MMS$(\boldsymbol{d})$ allocation exists, then there is also an instance $I'$ of $n$ agents with subadditive valuations over $m$ items where no $(1/3+\epsilon)$-MMS$(\boldsymbol{d})$ allocation exists, for any $\epsilon > 0$. 
\end{lemma}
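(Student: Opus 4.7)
My plan is to construct $I'$ by augmenting $I$ with auxiliary items so as to amplify each agent's MMS value while limiting the value any single bundle can derive from the additions. Normalize so that $\mu_i^{d_i}(M)=1$ for every agent. Introduce a set $H$ of auxiliary items, disjoint from $M$, and for each agent $i$ define the new valuation
\[
v_i'(S) \;=\; v_i(S \cap M) \;+\; \phi_i(S \cap H),
\]
where $\phi_i$ is a subadditive valuation on $H$. Since the sum of two subadditive functions is subadditive, each $v_i'$ is subadditive. The key is to choose $H$ and the $\phi_i$'s (with their parameters possibly depending on $\epsilon$) so that $\mu_i'(v_i')=3/2$ for every $i$, i.e.\ the MMS is amplified by the factor $3/2$. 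A natural candidate is an additive $\phi_i$ where each item of $H$ carries a small common value, with $|H|$ chosen so that distributing $H$ evenly across the $d_i$ MMS bundles of $M$ boosts each part's value by exactly $1/2$.

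Given such a construction, suppose for contradiction that $I'$ admits a $(1/3+\epsilon)$-MMS$(\boldsymbol{d})$ allocation $A'$; then every agent satisfies
\[
v_i(A_i' \cap M) \;+\; \phi_i(A_i' \cap H) \;\ge\; (1/3+\epsilon)\cdot \tfrac{3}{2} \;=\; \tfrac{1}{2} + \tfrac{3\epsilon}{2}.
\]
Project this to the allocation $A=(A_i'\cap M)_{i\in N}$ of $M$ in $I$. For any agent whose $\phi_i$-contribution is at most $3\epsilon/2$, the projection already yields $v_i(A_i)\ge 1/2$. The plan is then to argue that, by taking $|H|$ sufficiently large as $\epsilon\to 0$ and redistributing $H$-items across agents in $A'$ (a swap that preserves $v_i'(A_i')$ when $\phi_i$ is suitably \emph{flat}), one can always reduce to the case in which every agent's $\phi_i$-contribution is below the $3\epsilon/2$ threshold. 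The resulting $A$ would then be a $1/2$-MMS$(\boldsymbol{d})$ allocation of $I$, contradicting the assumption.

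The main obstacle I expect is designing the $\phi_i$'s so that simultaneously: (a) the MMS-amplification is exactly the factor $3/2$, (b) the redistribution step respects subadditivity and preserves the $(1/3+\epsilon)$-MMS bound, and (c) the analysis goes through for arbitrarily small $\epsilon>0$. A plausible choice is a capped or coverage-style $\phi_i$ on $H$ (for example $\phi_i(T)=\min(|T|/k,\,c)$ for suitably chosen $k$ and $c$), so that the marginal value of each $H$-item is small and bonuses cannot be concentrated on a few agents. Verifying the subadditivity of the combined valuation and the tightness of the amplification factor is the crux of the argument.
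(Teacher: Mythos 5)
Your construction has a genuine gap at exactly the step you flag as the crux, and it cannot be repaired within this framework. The implication you need is: \emph{if the augmented instance $I'$ admits a $(1/3+\epsilon)$-MMS$(\boldsymbol{d})$ allocation, then $I$ admits a $1/2$-MMS$(\boldsymbol{d})$ allocation.} This fails because an agent can meet the threshold $(1/3+\epsilon)\cdot\tfrac32=\tfrac12+\tfrac{3\epsilon}{2}$ almost entirely out of $H$: with any $\phi_i$ whose even split over the $d_i$ bundles adds $1/2$ per bundle, a single agent can grab an $H$-chunk worth $1/2$ and then needs only $3\epsilon/2$ from $M$. The projection $A_i'\cap M$ then gives that agent value far below $1/2$, and the redistribution step cannot save you: removing $H$-items from that agent strictly decreases her $v_i'$-value (she has no slack above the cap), while handing them to another agent does nothing to raise anyone's $M$-value. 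Concretely, take $I$ to be the instance of Lemma~\ref{lem:imposs(4,2,1)} with $\mathbf{d}=(4,2,1)$, where no $1/2$-MMS$(\mathbf{d})$ allocation exists; in the augmented $I'$ one can give agent $3$ the item $h$ plus an $H$-chunk worth $1/2$ (total $5/6$), agent $2$ two of the $g$-items (value $2/3$), and agent $1$ the remaining $g$-item (value $1$), so $I'$ \emph{does} admit a $(1/3+\epsilon)$-MMS$(\mathbf{d})$ allocation for small $\epsilon$. Hence your $I'$ is not a valid witness. A secondary issue is that the statement requires $I'$ to live on the same $m$ items, which adding $H$ violates.

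The paper's proof goes in an entirely different (and much simpler) direction: it keeps the item set $M$ fixed and \emph{quantizes} each $v_i$ into $v_i'$ taking values only in $\{0,1/3,2/3,1\}$, mapping $[0,1/2)\mapsto 1/3$, $[1/2,1)\mapsto 2/3$, and $[1,\infty)\mapsto 1$. The gap in the range between $1/3$ and $2/3$ does all the work: any allocation with $v_i'(A_i)>1/3$ automatically has $v_i'(A_i)\ge 2/3$, hence $v_i(A_i)\ge 1/2$, so a $(1/3+\epsilon)$-MMS$(\boldsymbol{d})$ allocation of $I'$ would yield a $1/2$-MMS$(\boldsymbol{d})$ allocation of $I$, a contradiction; one then only has to check that the quantized valuations remain subadditive. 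If you want to salvage an amplification-style argument, you would need the auxiliary value to be impossible to concentrate on fewer than $d_i$ bundles, which is essentially incompatible with subadditivity; I recommend switching to a range-quantization argument of the paper's type.
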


\begin{proof}
    We will show that we can transform any instance $I=(N,M,\boldsymbol{v})$ into $I'=(N,M,\boldsymbol{v}')$ such that for every $S \subseteq M$ and $i \in N$, it holds that $v_i(S) \ge 1/2 \Leftrightarrow v'_i(S) \ge 2/3$ and also  $v_i(S) < 1/2 \Leftrightarrow  v'_i(S) \le 1/3$. Then, if there was an $(1/3+\epsilon)$-MMS allocation in $I'$, this allocation would also be $2/3$-MMS in $I'$, and therefore  $1/2$-MMS in $I$. However, by the lemma's assumption, this would be a contradiction, and therefore, there is no $(1/3+\epsilon)$-MMS allocation in $I'$.

    We next show how to derive $\boldsymbol{v}'$ from $\boldsymbol{v}$. Given any subadditive valuation function $v_i$ of agent $i$, we construct the valuation function $v_i'$ as follows:
        $$v'_i\left(S\right) = \begin{cases}
        0 & \text{if } S = \emptyset\\
        1/3 & \text{if } 0 \le v_i(S) < 1/2\\
        2/3 & \text{if } 1/2 \le v_i(S) < 1\\
        1 & \text{if } v_i(S) \ge 1 \\
    \end{cases}$$
By definition, it hold that  $v_i(S) \ge 1/2 \Leftrightarrow v'_i(S) \ge 2/3$ and $v_i(S) < 1/2 \Leftrightarrow  v'_i(S) \le 1/3$. We will then show that $v'_i$ is a subadditive function. 

Consider any two sets of items, $S_1$ and $S_2$, and w.l.o.g. assume that $v'(S_1) \le v'(S_2)$. We will show that $v_i'(S_1)+v_i'(S_2) \geq v'_i(S_1 \cup S_2)$. This inequality holds if $S_1 = \emptyset$, since then $v_i'(S_1)+v_i'(S_2) = v_i'(S_2) = v'_i(S_1 \cup S_2)$. So, suppose that $S_1 \neq \emptyset$, which means that 
    $v_i'(S_1)\geq 1/3 $ and $ v_i'(S_2)\geq 1/3$. If $ v_i'(S_2)\geq 2/3$, then the subadditivity is not violated since $v_i'(S_1)+v_i'(S_2) \geq 1 \geq v'_i(S_1 \cup S_2)$, where the last inequality holds since by definition, $v_i'(S)\leq 1$, for any set $S$. So, at last suppose that $v_i'(S_1)=v_i'(S_2)= 1/3$. Then, $v_i(S_1)<1/2$ and $v_i(S_2)<1/2$, so by subbaditivity on $v_i$, $v_i(S_1 \cup S_2)<1$, which means that $v_i'(S_1 \cup S_2)\leq 2/3 = v_i'(S_1)+v_i'(S_2)$; so in this case the subadditivity is also preserved. Overall, $v_i'$ is subadditive and the lemma follows.  
\end{proof}

\begin{corollary}
    For any number of agents $n$, any number of goods $m$, and any vector $\boldsymbol{d}=(d_1,\ldots, d_n)$, if an $(1/3+\epsilon)$-MMS$(\boldsymbol{d})$ allocation, for some $\epsilon > 0$, is guaranteed to exist, for any subadditive valuation functions of the agents, then there is always an $1/2$-MMS$(\boldsymbol{d})$ allocation when agents have subadditive valuations.
\end{corollary}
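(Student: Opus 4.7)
The proof will be a direct contrapositive application of Lemma~\ref{Lemma:upperBound}. The plan is to observe that the hypothesis of the corollary (``$(1/3+\epsilon)$-MMS$(\boldsymbol{d})$ is always guaranteed to exist'') is logically the contrapositive of the conclusion of the lemma (``there exists a subadditive instance with no $(1/3+\epsilon)$-MMS$(\boldsymbol{d})$ allocation''), and similarly for the consequent side, so the corollary follows by a single application of the lemma.

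Concretely, I would argue as follows. Suppose for contradiction that the corollary's conclusion fails, i.e., there exists an instance $I$ with $n$ agents, $m$ items, and subadditive valuations, for which no $1/2$-MMS$(\boldsymbol{d})$ allocation exists. Then the hypothesis of Lemma~\ref{Lemma:upperBound} is satisfied for this $\boldsymbol{d}$, and the lemma produces another instance $I'$ with $n$ agents, $m$ items, and subadditive valuations, in which no $(1/3+\epsilon)$-MMS$(\boldsymbol{d})$ allocation exists, for any $\epsilon>0$. This directly contradicts the assumed universal existence of $(1/3+\epsilon)$-MMS$(\boldsymbol{d})$ allocations over all subadditive instances.

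Since the corollary is merely the contrapositive repackaging of the lemma, there is essentially no obstacle: no new construction, no new subadditivity check, and no new case analysis is required. The only subtlety worth flagging explicitly is to check that the parameters match, namely that the instance $I'$ returned by the lemma has the same number of agents $n$, the same number of items $m$, and is evaluated with the same vector $\boldsymbol{d}$, so that the hypothesis of the corollary indeed applies to $I'$; this is immediate from the statement of Lemma~\ref{Lemma:upperBound}.
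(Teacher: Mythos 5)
Your proposal is correct and matches the paper's intent exactly: the paper states this corollary without a separate proof, treating it as the immediate contrapositive of Lemma~\ref{Lemma:upperBound}, which is precisely the argument you give. The parameter-matching check (same $n$, $m$, and $\boldsymbol{d}$ for the instance $I'$) is the right detail to flag, and it indeed holds by the lemma's statement.
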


 \section{Submodular Valuations}
\label{sec:Submod}

We present an improved upper bound for three agents with submodular valuation functions that rules out the existence of better than $2/3$-MMS allocations. The case of two agents has been previously studied in \cite{KKM23} and \cite{ChristodoulouChristoforidis}, establishing a tight answer of $2/3$ for the approximability of the maximin share.

\begin{theorem}
\label{thm:SubmodUB}
    There exists an instance of $3$ agents with submodular valuations and $6$ items for which there is no $(2/3+\epsilon)$-MMS allocation, for any $\epsilon>0$.
\end{theorem}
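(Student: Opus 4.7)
The plan is to exhibit an explicit instance of three agents with submodular valuations on six items and argue by pigeon-hole that no $(2/3+\epsilon)$-MMS allocation exists for any $\epsilon>0$. The construction has two layers: a combinatorial skeleton, given by three ``frustrated'' partitions of the items into pairs that admit no rainbow transversal; and submodular valuations tuned so that the only way for a bundle to clear the $2/3$ threshold is to contain one of the agent's designated pairs.

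For the skeleton I would take three partitions $P_1,P_2,P_3$ of $M=\{1,\ldots,6\}$ into three pairs each, chosen so that no selection of one pair from each $P_i$ produces three pairwise disjoint pairs covering $M$. A natural candidate is $P_1=\{\{1,2\},\{3,4\},\{5,6\}\}$, $P_2=\{\{1,3\},\{2,4\},\{5,6\}\}$, $P_3=\{\{1,5\},\{2,3\},\{4,6\}\}$: one verifies by a short enumeration that whenever two selected pairs from $P_1\times P_2$ are disjoint, the forced third pair is absent from $P_3$.

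For the valuations I would design each $v_i$ to be submodular with (after normalisation) MMS value $\mu_i=3$ attained on $P_i$, and with the sharp threshold property that $v_i(S)>2$ implies $S\supseteq B$ for some $B\in P_i$. The natural realisation is a coverage function over a small universe, or a weighted matroid rank, arranged so that the two items of each $P_i$-pair jointly cover the full universe while every other pair leaves a strictly positive deficit. The delicacy is that submodularity forbids the supermodular ``bonus for completing a pair'', so singleton values must be raised to at least about $3/2$ in order that the marginal of an item added to complete a $P_i$-pair never exceeds its marginal when added to a smaller set; this forces the design of the coverage sets and the universe size to be carefully balanced.

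Given such valuations, the impossibility argument is immediate: if $(A_1,A_2,A_3)$ were $(2/3+\epsilon)$-MMS then each $v_i(A_i)>2$, so $A_i\supseteq B_i$ for some $B_i\in P_i$, forcing $|A_i|\ge 2$. Since $\sum_i|A_i|=|M|=6$, we conclude $|A_i|=2$ and $A_i=B_i\in P_i$ for every $i$, producing a rainbow partition and contradicting the non-rainbow property of the $P_i$. The main obstacle is constructing the $v_i$; most of the technical work should reduce to verifying submodularity by a finite check of marginals on the subsets of $M$ and confirming that the threshold property holds with strict inequality so that the value $2$ is cleared only by bundles containing a full $P_i$-pair.
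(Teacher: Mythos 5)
Your high-level strategy coincides with the paper's: three pairwise-incompatible pair-partitions serving as MMS partitions, plus valuations under which clearing the $2/3$ threshold essentially forces an agent to receive one of her designated pairs, followed by a pigeonhole/counting contradiction. However, the step you defer as "the main obstacle" is not merely delicate --- the clean threshold property you posit is \emph{provably unachievable} for submodular valuations, so the proposal has a genuine gap. Concretely, suppose $v_i$ is monotone, normalized and submodular, every pair $B\in P_i$ has $v_i(B)\ge 3$, and every set not containing a pair of $P_i$ has value at most $2$. Take any $\{a,b\}\in P_i$ and any $c\notin\{a,b\}$; since $P_i$ is a partition, neither $\{a,c\}$ nor $\{b,c\}$ contains a pair of $P_i$, so both have value at most $2$. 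Submodularity with $\{c\}\subseteq\{a,c\}$ gives
$$v_i(\{b,c\})-v_i(\{c\})\ \ge\ v_i(\{a,b,c\})-v_i(\{a,c\})\ \ge\ 3-2\ =\ 1,$$
hence $v_i(\{c\})\le v_i(\{b,c\})-1\le 1$. Every item lies outside some pair of $P_i$, so all singletons have value at most $1$, and then subadditivity forces $v_i(\{a,b\})\le v_i(\{a\})+v_i(\{b\})\le 2<3$, a contradiction. So no choice of coverage functions, weighted matroid ranks, or anything else can realize your specification, and the "finite check of marginals" you anticipate would necessarily fail.

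The paper's construction escapes exactly this obstruction by giving up the clean threshold: it uses asymmetric singleton values ($1/3$ for small items, $2/3$ for large items, which is forced up to scaling by the inequality above), and it allows certain bundles that contain no MMS pair --- namely any two large items, the set of all three small items, and any set of four or more items --- to also attain the full value $1$. These extra high-value bundles are what make submodularity verifiable, and they are then eliminated in the final counting argument by separate arguments (e.g., if one agent takes two large items, another agent is left with two small items worth only $2/3$; if an agent takes three or more items, another gets at most one item worth at most $2/3$). Your proposal contains neither the corrected valuation design nor the extra cases in the pigeonhole step, and both are essential; as written, the argument cannot be completed.
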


\begin{proof}
    Let the set of three agents be $N=\{a_1,a_2,a_3\}$ and let the set of $6$ items be $M = \{g_1, G_1, g_2, G_2, g_3, G_3\}$. Two of the agents, $a_1$ and $a_2$, share the same valuation function.
    
    There are two types of items: small items, denoted by $g_i$, and large items, $G_i$. For each agent $a\in N$ and bundle $S\subset M$, with $\lvert S \rvert = 1$ we define 
    $$v_a(S) =   \begin{cases}
    2/3, &\mbox{ if } S=\{G_i\} \\
      1/3, & \mbox{ if } S=\{g_i\} \\
      \end{cases}$$
   i.e. each agent values small items with $1/3$ and large items with $2/3$.

     Regarding the bundles of size two, each agent considers a bundle either as ``low'', for which they have value $2/3$, or as ``high'', for which they have value $1$. All agents consider the bundles of two small items as low, and of two large items as high. However, a bundle of a small and a large item may be considered differently by the agents. For each of the two agents that share the same valuation function, i.e., for $a \in \{a_1,a_2\}$, and for any bundle $S\subset M$, with $\lvert S \rvert = 2$, we define $$v_a(S) =   \begin{cases}
      1, & S=\{g_i,G_i\} \text{ or } S=\{G_i,G_j\} \quad \forall i,j \\
      2/3, & \text{otherwise} 
    \end{cases}$$
    For agent $a_3$, we shift the combinations of large and small items as follows: 
    
    $$  v_{a_3}(S) =   \begin{cases}
      1, & S=\{g_{(i\mod 3) +1},G_{i }\}  \text{ or } S=\{G_i,G_j\}\quad \forall i,j\\ 
      2/3, & \text{otherwise}
    \end{cases}$$
    
    The distinction between small and large bundles is illustrated in Figure \ref{fig:sizetwo} for all agents. Regarding the bundles with higher than $2$ cardinality, for any agent $a \in N$ and bundle $S\subseteq M,\lvert S\rvert \ge 3$ we define 
    $$v_a(S) =   \begin{cases}
    1, &\mbox{ if } |S|\geq 4 \mbox{ or } S=\{g_1,g_2,g_3\} \\
      \max_{S' \subseteq S,\lvert S' \rvert = 2}\left\{v_a\left(S\right)\right\}, & \mbox{ otherwise } 
      \end{cases}$$

    Observe that the MMS value of all agents is equal to $1$, because they can partition the items into three bundles, each of value $1$, by appropriately combining a small and a large item, i.e., $a_1$ and $a_2$ can form the bundles $\{g_1,G_1\},\{g_2,G_2\}$ and $\{g_3,G_3\}$, and $a_3$ can form the bundles $\{g_1,G_3\},\{g_2,G_1\}$ and $\{g_3,G_2\}$. By construction if a bundle has value $1$ for some agent then it  contains all small items, or two large items or it is one of her MMS bundles. 

    We show in the following claim that the valuation functions are submodular. 
    
    \begin{claim}
        The valuation functions defined above are submodular functions.
    \end{claim}
    \begin{proof}
    
    We first argue that the function is monotone. Indeed, the value for any set of cardinality $1$ is at most $2/3$, of cardinality $2$ is between $2/3$ and $1$, of cardinality $3$ is at most $1$ and cannot be less than the value of any of its subsets by definition, and for greater cardinality is $1$. 
    
    We proceed by showing that the valuation $v_{a_1}=v$ satisfies the submodularity property, i.e., for any two sets $S$ and $T$, such that $S \subset T \subset M$, and for any item $g\in M\setminus T$ it holds \begin{equation}
    v\left(S \cup \{g\}\right) - v(S) \ge v(T \cup \{g\}) - v(T)\,. 
    \label{eq:sub}
    \end{equation} 
    For $v_{a_3}$ the arguments are the same by shifting the indices of the small items. 

Note that if $g$ is a small item, its marginal contribution to any set of items is either $0$, or $1/3$. If $g$ is a large item, its marginal contribution to any set of items may be $0$, or $1/3$, or $2/3$. Further, note that if $|T|\geq 4$, all items has a zero marginal contribution, and hence inequality \eqref{eq:sub} holds. So, we assume $|T|\leq 3$, and we distinguish between the different cases for the cardinality of $S$, which should be at most two. 

    {\bf Case 1: $|S|=0$.} In this case $v(S\cup \{g\})-v(S)=v(\{g\})$ which is the maximum possible contribution of any item,  and so, inequality \ref{eq:sub} holds.

    {\bf Case 2: $|S|=1$.} In this case $T$ contains at least two items and the marginal contribution of $g$ to $T$ is at most $1/3$, i.e., $v(T \cup \{g\}) - v(T) \le 1/3$. If $S$ contains a small item, then $v(S\cup\{g\})-v(S)\geq 2/3-1/3=1/3$, since $S\cup\{g\}$ contains two items and the value for those sets is at least $2/3$; so inequality \eqref{eq:sub} holds. If $S=\{G_i\}$, for some $i\in\{1,2,3\}$, then the only case that $v(S\cup\{g\})-v(S)=0$, is for $g=g_j$, for some $j\neq i$. 
    
    We will show that in that case $v(T \cup \{g\}) - v(T)=0$. Suppose on the contrary that $v(T \cup \{g\})=1$ and $v(T)=2/3$. Since $G_i\in T$, $T$ cannot contain any other large item or $g_i$. Moreover, $g\notin T$, where recall that $g=g_j$, for some $j\neq i$. Therefore, $T=\{G_i,g_k\}$, for the $k\notin\{i,j\}$. However, by definition of the valuations, $v(T \cup \{g\}) = v(\{G_i,g_j,g_k\})=2/3$, for $i,j,k$ being different pairwise. So,  $v(T \cup \{g\}) - v(T)=0$, meaning that inequality \eqref{eq:sub} holds in that case as well.   

    {\bf Case 2: $|S|=2$.} In this case, $v(T \cup \{g\})=1$, since $T \cup \{g\}$ has cardinality four, and also $v(S)\geq 2/3$, since $S$ has cardinality two. So, the only way that \eqref{eq:sub} is violated is if $v(T)=v(S)=v(S\cup\{g\}=2/3$. The only sets of cardinality three with value $2/3$ are of the form $\{G_i,g_j,g_k\}$, for $i,j,k$ being different pairwise. So, suppose that $T=\{G_i,g_j,g_k\}$ and $S$ is either $\{g_j,g_k\}$, or contains $G_i$ If $g=g_i$, then no matter what $S$ is, $v(S\cup\{g\})=1$. Similarly, if $g=G_j$, for any $j\neq i$, $v(S\cup\{g\})=1$. In any case, our assumption is violated and therefore inequality \eqref{eq:sub} always holds in that case as well.  
    \end{proof}

    Next we show that in this instance there is no $(2/3+\epsilon)$-MMS allocation, for any $\epsilon>0$. Assume for the sake of contradiction that there exists an $(2/3+\epsilon)$-MMS allocation. In this allocation, no agent may receive a single item, since each item has value at most $2/3$ for each agent. For the same reason, no agent may receive three or more items, since then some other agent must recieve at most one item that she values by at most $2/3$. As a result each agent gets exactly two items. If in the allocation some agent recieves two large items then some other agent receives only small items and her value is at most $2/3$. Thus, each agent gets a small and a large item. If those items do not correspond to one of their MMS bundles, they value their allocated set by at most $2/3$. Hence, each agent must receive one of their MMS bundles. Due to the construction we can allocate to at most two agents one of their MMS bundles, meaning that there is no $(2/3+\epsilon)$-MMS allocation in this instance.
\end{proof}


\begin{figure}
\centering
\begin{tabular}{c|c c c c c c c c c}
     & $g_1G_1$ & $g_1G_2$ & $g_1G_3$ & $g_2G_2$ & $g_2G_3$ & $g_2G_1$ & $g_3G_3$ & $g_3G_1$ & $g_3G_2$ \\
     \hline
     $a_1$ & 1 &  2/3 &  2/3 & 1 &  2/3 &  2/3 & 1 &  2/3 &  2/3\\
     $a_2$ & 1 &  2/3 &  2/3 & 1 &  2/3 &  2/3 & 1 &  2/3 &  2/3 \\
     $a_3$ & 2/3 & 2/3 & 1 & 2/3 & 2/3 & 1 & 2/3 & 2/3 & 1 
     
\end{tabular}

\begin{tabular}{c|c c}
    & $g_ig_j$ & $G_iG_j$ \\
    \hline
    $a_1$ & 2/3 & 1 \\
    $a_2$ & 2/3 & 1 \\
    $a_3$ & 2/3 & 1
\end{tabular}
    \caption{If a bundle is large for some agent then either it contains two large items or it is one of her MMS bundles.}
    \label{fig:sizetwo}
\end{figure}

 \section{Conclusion}
We study the existence of approximate MMS allocations in the case of subadditive valuations and few agents. We showed the existence of $1/2$-MMS allocations for at most four agents with subadditive valuations, as well as for multiple agents when they have one of two types of valuations.
The most challenging question of whether a constant-factor MMS approximations always exist for $n > 4$ agents, remains open. One hopes that our insights and technical lemmas might help pave the way for more general positive results. The study of $\boldsymbol{\alpha}$-MMS(${\bf d}$) and $\boldsymbol{\alpha}$-MMS(${\bf P}$) was proved useful for providing approximate MMS guarantees, but we believe they are of independent interest and deserve further investigation.

 \section*{Acknowledgements}

This work has been partially supported by project MIS 5154714 of the National Recovery and Resilience Plan Greece 2.0 funded
by the European Union under the NextGenerationEU Program.
  
 \bibliographystyle{plain}
 \bibliography{mms}

\begin{thebibliography}{10}

\bibitem{Aigner-HorevSH22}
Elad Aigner{-}Horev and Erel Segal{-}Halevi.
\newblock Envy-free matchings in bipartite graphs and their applications to fair division.
\newblock {\em Inf. Sci.}, 587:164--187, 2022.

\bibitem{AkramiGarg24}
Hannaneh Akrami and Jugal Garg.
\newblock Breaking the 3/4 barrier for approximate maximin share.
\newblock In {\em {SODA}}, pages 74--91. {SIAM}, 2024.

\bibitem{AkramiGST23}
Hannaneh Akrami, Jugal Garg, Eklavya Sharma, and Setareh Taki.
\newblock Simplification and improvement of {MMS} approximation.
\newblock In {\em {IJCAI}}, pages 2485--2493. ijcai.org, 2023.

\bibitem{AkramiGST24}
Hannaneh Akrami, Jugal Garg, Eklavya Sharma, and Setareh Taki.
\newblock Improving approximation guarantees for maximin share.
\newblock In {\em {EC}}, page 198. {ACM}, 2024.

\bibitem{AkramiMehlhornSeddighinShahkarami23}
Hannaneh Akrami, Kurt Mehlhorn, Masoud Seddighin, and Golnoosh Shahkarami.
\newblock Randomized and deterministic maximin-share approximations for fractionally subadditive valuations.
\newblock In {\em NeurIPS}, 2023.

\bibitem{AmanatidisABFLMVW23Survey}
Georgios Amanatidis, Haris Aziz, Georgios Birmpas, Aris Filos{-}Ratsikas, Bo~Li, Herv{\'{e}} Moulin, Alexandros~A. Voudouris, and Xiaowei Wu.
\newblock Fair division of indivisible goods: Recent progress and open questions.
\newblock {\em Artif. Intell.}, 322:103965, 2023.

\bibitem{AmanatidisMNS17}
Georgios Amanatidis, Evangelos Markakis, Afshin Nikzad, and Amin Saberi.
\newblock Approximation algorithms for computing maximin share allocations.
\newblock {\em {ACM} Trans. Algorithms}, 13(4):52:1--52:28, 2017.

\bibitem{BabaioffNT21}
Moshe Babaioff, Noam Nisan, and Inbal Talgam{-}Cohen.
\newblock Competitive equilibrium with indivisible goods and generic budgets.
\newblock {\em Math. Oper. Res.}, 46(1):382--403, 2021.

\bibitem{BarmanKrishnamurthy20}
Siddharth Barman and Sanath~Kumar Krishnamurthy.
\newblock Approximation algorithms for maximin fair division.
\newblock {\em {ACM} Trans. Economics and Comput.}, 8(1):5:1--5:28, 2020.

\bibitem{BudishMMS}
Eric Budish.
\newblock The combinatorial assignment problem: Approximate competitive equilibrium from equal incomes.
\newblock {\em Journal of Political Economy}, 119(6):1061--1103, 2011.

\bibitem{ChekuriKKM24}
Chandra Chekuri, Pooja Kulkarni, Rucha Kulkarni, and Ruta Mehta.
\newblock 1/2-approximate {MMS} allocation for separable piecewise linear concave valuations.
\newblock In {\em {AAAI}}, pages 9590--9597. {AAAI} Press, 2024.

\bibitem{ChristodoulouChristoforidis}
George Christodoulou and Vasilis Christoforidis.
\newblock Fair and truthful allocations under leveled valuations.
\newblock {\em CoRR}, abs/2407.05891, 2024.

\bibitem{FeigeNorkinMMS}
Uriel Feige and Alexey Norkin.
\newblock Improved maximin fair allocation of indivisible items to three agents.
\newblock {\em CoRR}, abs/2205.05363, 2022.

\bibitem{GargTaki21}
Jugal Garg and Setareh Taki.
\newblock An improved approximation algorithm for maximin shares.
\newblock {\em Artif. Intell.}, 300:103547, 2021.

\bibitem{GhodsiHSSY21}
Mohammad Ghodsi, Mohammad~Taghi Hajiaghayi, Masoud Seddighin, Saeed Seddighin, and Hadi Yami.
\newblock Fair allocation of indivisible goods: Improvement.
\newblock {\em Math. Oper. Res.}, 46(3):1038--1053, 2021.

\bibitem{GhodsiHSSY22}
Mohammad Ghodsi, Mohammad~Taghi Hajiaghayi, Masoud Seddighin, Saeed Seddighin, and Hadi Yami.
\newblock Fair allocation of indivisible goods: Beyond additive valuations.
\newblock {\em Artif. Intell.}, 303:103633, 2022.

\bibitem{GourvesMonnot19}
Laurent Gourv{\`{e}}s and J{\'{e}}r{\^{o}}me Monnot.
\newblock On maximin share allocations in matroids.
\newblock {\em Theor. Comput. Sci.}, 754:50--64, 2019.

\bibitem{HosseiniS21}
Hadi Hosseini and Andrew Searns.
\newblock Guaranteeing maximin shares: Some agents left behind.
\newblock In {\em {IJCAI}}, pages 238--244. ijcai.org, 2021.

\bibitem{HosseiniSSH22}
Hadi Hosseini, Andrew Searns, and Erel Segal{-}Halevi.
\newblock Ordinal maximin share approximation for goods.
\newblock {\em J. Artif. Intell. Res.}, 74, 2022.

\bibitem{Hummel:HSS24}
Halvard Hummel.
\newblock Maximin shares in hereditary set systems.
\newblock {\em CoRR}, abs/2404.11582, 2024.

\bibitem{KKM23}
Pooja Kulkarni, Rucha Kulkarni, and Ruta Mehta.
\newblock Maximin share allocations for assignment valuations.
\newblock In {\em {AAMAS}}, pages 2875--2876. {ACM}, 2023.

\bibitem{KurokawaProcacciaWang18}
David Kurokawa, Ariel~D. Procaccia, and Junxing Wang.
\newblock Fair enough: Guaranteeing approximate maximin shares.
\newblock {\em J. {ACM}}, 65(2):8:1--8:27, 2018.

\bibitem{LiVetta21}
Zhentao Li and Adrian Vetta.
\newblock The fair division of hereditary set systems.
\newblock {\em {ACM} Trans. Economics and Comput.}, 9(2):12:1--12:19, 2021.

\bibitem{SeddighinSeddighin24}
Masoud Seddighin and Saeed Seddighin.
\newblock Improved maximin guarantees for subadditive and fractionally subadditive fair allocation problem.
\newblock {\em Artif. Intell.}, 327:104049, 2024.

\bibitem{UziahuFeigeMMS}
Gilad~Ben Uziahu and Uriel Feige.
\newblock On fair allocation of indivisible goods to submodular agents.
\newblock {\em CoRR}, abs/2303.12444, 2023.

\end{thebibliography}

\end{document}